\documentclass[DIV12]{scrartcl}

\usepackage{microtype}
\usepackage{amsmath}
\usepackage{amssymb}
\usepackage{booktabs}
\usepackage{tabularx}
\usepackage{xcolor}
\usepackage{tikz}
\usepackage{stmaryrd}
\usepackage{xspace}
\usepackage{paralist}
\usepackage{balance}
\usepackage{enumitem}
\usepackage[bibliography=common]{apxproof}

\usepackage[utf8]{inputenc}
\usepackage[hidelinks,pdfborder={0 0 0},breaklinks]{hyperref}
\usepackage{breakurl}

\hypersetup{
    colorlinks,
    linkcolor={red!50!black},
    citecolor={blue!50!black},
    urlcolor={blue!30!black}
}

\title{Tractable Lineages on Treelike Instances:\\Limits and Extensions}

\author{
\begin{tabular}[t]{c}
Antoine Amarilli \\
{\normalfont LTCI, CNRS, Télécom ParisTech, Université Paris-Saclay} \\
{\normalfont antoine.amarilli@telecom-paristech.fr} \\[0.5em]
Pierre Bourhis \\
{\normalfont CRIStAL, UMR 9189, CNRS, Université Lille 1} \\
{\normalfont pierre.bourhis@univ-lille1.fr} \\[0.5em]
Pierre Senellart \\
{\normalfont
LTCI, CNRS, Télécom ParisTech, Université Paris-Saclay} \\
{\normalfont \& IPAL, CNRS, NUS} \\
{\normalfont pierre.senellart@telecom-paristech.fr} \\[0.5em]
\end{tabular}
}

\date{}

\newcommand{\tw}{\mathsf{tw}}
\newcommand{\G}{\mathcal{G}}
\newcommand{\I}{\mathcal{I}}

\makeatletter
\newcommand*{\defeq}{\mathrel{\rlap{%
  \raisebox{0.3ex}{$\m@th\cdot$}}%
  \raisebox{-0.3ex}{$\m@th\cdot$}}%
  =}
\makeatother
\newcommand{\fpsp}{\smash{\text{FP}^{\mathrm{\#P}}}}
\newcommand{\fpspo}{\text{FP}^{\mathrm{\#P}_1}}
\renewcommand{\sp}{\Sigma^{\mathrm{P}}}

\DeclareMathOperator{\dom}{dom}

\newcommand{\true}{1}
\newcommand{\false}{0}

\newcommand{\calG}{\mathcal{G}}

\newcommand{\calI}{\mathcal{I}}

\newcommand{\calP}{\mathcal{P}}

\newcommand{\calX}{\mathcal{X}}

\newcommand{\LCf}{\text{L}}
\newcommand{\RCf}{\text{R}}

\newcommand{\LC}{L}
\newcommand{\RC}{R}

\newcommand{\card}[1]{\left|#1\right|}

\let\oldPr\Pr
\renewcommand\Pr{\oldPr\nolimits}

\newcommand{\posbool}[1]{\mathrm{PosBool}[#1]}

\newcommand{\cqneq}{\smash{$\mathrm{CQ}^{\neq}$}\xspace}
\newcommand{\ucqneq}{\smash{$\mathrm{UCQ}^{\neq}$}\xspace}
\newcommand{\ucq}{$\mathrm{UCQ}$\xspace}
\newcommand{\rpqneq}{\smash{$\mathrm{C2RPQ}^{\neq}$}\xspace}

\newcommand{\arity}[1]{\mathrm{arity}(#1)}

\newcommand{\dd}{\mathrm{d}}

\newcommand{\h}{\mathrm{h}}

\renewcommand{\i}{\mathrm{i}}

\newcommand{\p}{\mathrm{p}}

\newcommand{\NN}{\mathbb{N}}

\newcommand{\deft}[1]{\emph{#1}}

\newcommand{\ofv}{\mathrm{ofv}}

\newcommand{\NNs}{\mathbb{N}_{>0}}

\newcommand{\defo}[1]{\emph{#1}}

% don't have the title "APPENDIX" to save space

\renewcommand{\phi}{\varphi}
\renewcommand{\leq}{\leqslant}
\renewcommand{\geq}{\geqslant}

\newtheoremrep{theorem}{Theorem}[section]
\newtheoremrep{example}[theorem]{Example}
\newtheoremrep{proposition}[theorem]{Proposition}
\newtheoremrep{lemma}[theorem]{Lemma}
\newtheoremrep{corollary}[theorem]{Corollary}
\newtheoremrep{definition}[theorem]{Definition}

\hyphenation{OBDD}
\hyphenation{FBDD}
\hyphenation{FBDDs}
\hyphenation{OBDDs}
\hyphenation{tree-width}

\begin{document}

\maketitle

\begin{abstract}
  Query evaluation on probabilistic databases is generally intractable (\#P-hard).
Existing dichotomy
results~\cite{DBLP:conf/pods/DalviS07a,dalvi2012dichotomy,DBLP:conf/pods/FinkO14}
have identified which queries are tractable (or \emph{safe}), and connected them
to tractable lineages~\cite{jha2013knowledge}. In our previous
work~\cite{amarilli2015provenance}, using different tools, we showed that query
evaluation is linear-time on probabilistic databases for arbitrary monadic
second-order queries, if we bound the \emph{treewidth} of the instance.

In this paper, we study limitations and extensions of this result. First, for
probabilistic query evaluation, we show that MSO tractability cannot extend
beyond bounded treewidth: there are even FO queries that are hard on \emph{any}
efficiently constructible unbounded-treewidth class of graphs. This dichotomy
relies on recent polynomial bounds on the extraction of planar graphs as
minors~\cite{chekuri2014polynomial}, and implies lower bounds in
non-probabilistic settings, for query evaluation and match counting in
subinstance-closed families. Second, we show how to explain our tractability
result in terms of lineage: the lineage of MSO queries on bounded-treewidth
instances can be represented as bounded-treewidth circuits, polynomial-size
OBDDs, and linear-size d-DNNFs. By contrast, we can strengthen the previous
dichotomy to lineages, and show that there are even UCQs with disequalities that
have superpolynomial OBDDs on \emph{all} unbounded-treewidth graph classes; we
give a characterization of such queries. Last, we show how bounded-treewidth
tractability explains the tractability of the \emph{inversion-free} safe
queries: we can rewrite their input instances to have bounded-treewidth.

\end{abstract}

\section{Introduction}
\label{sec:introduction}
Many applications must deal with data which may be erroneous. This makes it
necessary to extend relational database instances, to allow for
uncertain facts. One of the simplest such formalisms~\cite{suciu2011probabilistic} is
that of \emph{tuple-independent databases} (TID): each tuple in the database is
annotated with an independent probability of being present. The semantics of a
TID instance is to see it as a concise representation of a probability distribution on
standard non-probabilistic instances.

An important challenge when dealing with probabilistic data is that
data management tasks become intractable. The main one is
\emph{query evaluation}: given an input database query~$q$, for instance a
conjunctive query, and given a relational instance $I$, determine the answers to
$q$ on $I$. When $q$ is Boolean, we just ask whether $I$ satisfies $q$. The
corresponding problem in the TID setting asks for the \emph{probability}
that $I \models q$, that is, the total probability weight of the possible
subsets of the TID instance~$I$ that satisfy $q$. The query~$q$ is usually assumed to be
fixed, and we look at the complexity of this problem as a function of the
input instance (or TID) $I$, that is, the \emph{data complexity}. Sadly, while
this task is highly tractable and parallelizable (in the class~AC$^0$) in the non-probabilistic context, exact computation is generally intractable
(\#P-hard) on TIDs, even for the simple conjunctive query
$\exists x y
~ R(x)\land S(x, y)\land T(y)$. See~\cite{dalvi2007efficient}.

Faced with this intractability, two natural directions are possible. The first
is to restrict the language of \emph{queries} to focus on queries that are
tractable on \emph{all} instances, called \emph{safe}. This has proven a very fruitful
direction~\cite{DBLP:conf/pods/DalviS07a},
culminating in the dichotomy result of Dalvi and
Suciu~\cite{dalvi2012dichotomy}:
the data complexity of a given union of conjunctive queries
(UCQ) is either in PTIME or \#P-hard. More recently, 
the safe
non-repeated CQs \emph{with negation} were characterized
in~\cite{DBLP:conf/pods/FinkO14}.

The second approach is to restrict the \emph{instances}, to focus on instance
families that are tractable for \emph{all} queries in highly expressive
languages. In a recent work~\cite{amarilli2015provenance}, going through the
setting of semiring provenance~\cite{green2007provenance}, 
and using a celebrated result by Courcelle~\cite{courcelle1990monadic},
we have started to explore this direction. We showed that, for queries in MSO (\emph{monadic second-order}, a highly
expressive language capturing UCQs), data complexity is linear-time on
\emph{treelike} instances, i.e., instances of
\emph{treewidth} bounded by a constant. Of course, this result says nothing of non-treelike instances, but
covers use cases previously studied in their own right, such as
probabilistic XML~\cite{cohen2009running} (without data values).

This new direction raises several important questions:
\begin{itemize}
\item First, is this the best that one can hope for? For probability
  evaluation, could the tractability on
bounded-treewidth instances be generalized, e.g., to bounded \emph{clique-width}
instances~\cite{courcelle1993handle}, as for MSO in the non-probabilistic case?
More ambitiously, could we 
separate tractable and intractable instances with a dichotomy theorem?
\item Second, can our bounded-treewidth tractability result be explained in terms of
  \emph{lineage}? The \emph{lineage} of a query intuitively represents how it
  can be satisfied on the instance, and can be used to compute its probability:
for many fragments of safe queries~\cite{jha2013knowledge}, tractability can be
shown via a tractable representation of their lineage. In
\cite{amarilli2015provenance}, we build a bounded-treewidth circuit representation of Boolean
provenance. How does this compare to the usual lineage classes of OBDDs and
\mbox{d-DNNFs}
in knowledge compilation?
\item Third, can we link the query-based tractability approach to our
  instance-based one? Can we explain the tractability of
some safe queries by reducing them to query evaluation on treelike
instances?
\end{itemize}
This paper answers all of these questions. 

\paragraph*{Contributions}
Our \emph{first main result} (in Section~\ref{sec:probability}) is that
bounded treewidth characterizes the tractable families of graphs for MSO
queries in the probabilistic context.
More precisely, we construct a query~$q_{\h}$ for which probability evaluation
is intractable  on \emph{any} unbounded-treewidth family of graphs
satisfying mild constructibility requirements; 
query evaluation is precisely $\fpsp$-complete under randomized polynomial-time
(RP) reductions.
Thus, tractability on bounded-treewidth instances is really the best we can
get, on arity-2 signatures. Surprisingly, we
show that $q_{\h}$ can be taken to be a (non-monotone) FO query; this
is in stark contrast with non-probabilistic query evaluation~\cite{kreutzer2010lower,ganian2014lower}
where FO queries are fixed-parameter tractable under much milder
conditions than bounded treewidth~\cite{kreutzer2008algorithmic}.
This
provides the lower bound of a dichotomy, the upper bound being our result
in~\cite{amarilli2015provenance}.

In Section~\ref{sec:consequences}, we explain how this dichotomy result
can be adapted to non-probabilistic MSO query evaluation and match counting on
subgraph-closed graph families. While the necessity of bounded-treewidth for
non-probabilistic query evaluation was studied
before~\cite{kreutzer2010lower,ganian2014lower}, our use of a recent
polynomial bound on grid minors~\cite{chekuri2014polynomial} allows us to obtain
stronger results in this context, which we review. Our work thus answers the
conjecture of~\cite{grohe2008logic} (Conjecture~8.3) for MSO,
which~\cite{kreutzer2010lower} answered for MSO$_2$, under similar
complexity-theoretic assumptions.

In Section~\ref{sec:lineages}, we move from probability evaluation to
the computation of tractable \emph{lineages}.
Our tractability result in~\cite{amarilli2015provenance} computes
a bounded-treewidth lineage of linear size for MSO queries on bounded-treewidth
instances. We revisit this
upper bound and show that we can compute an OBDD lineage
of polynomial size (by results in~\cite{jha2012tractability}) and a
\mbox{d-DNNF}
lineage of linear size (a new result). We show that on
bounded-\emph{pathwidth} instances (a notion more restrictive than that of
bounded-treewidth), we obtain a bounded-pathwidth lineage, and
hence a constant-width OBDD (by~\cite{jha2012tractability}). Further,
all these
representations can be efficiently constructed.

We then reexamine the choice of representing provenance as a \emph{circuit}
rather than a formula, because this is unusual in the semiring provenance
context of~\cite{amarilli2015provenance}.
We show in Section~\ref{sec:formulae} that some of the previous tractability results for
lineage representations \emph{cannot} extend to formula representations,
via conciseness bounds on
Boolean circuits and formulae. This sheds some light on the 
conciseness gap between circuit and formula representations of lineage.

We then move in Section~\ref{sec:meta} to our \emph{second main result},
which applies to tractable \emph{OBDD lineages} rather than tractable query evaluation.
It shows a dichotomy on arity-2 signatures,
for the weaker query language of \emph{UCQs with disequalities}:
while bounded-treewidth instances admit efficient OBDDs for such queries, any constructible
unbounded-treewidth instance family must have superpolynomial OBDDs for some
query (depending only on the signature). 

Last, in Section~\ref{sec:safe}, we connect our approach to query-based
tractability conditions~\cite{DBLP:conf/pods/DalviS07a,dalvi2012dichotomy}. We show that,
for safe UCQs that admit a concise OBDD representation (that is,
precisely inversion-free UCQs from~\cite{jha2013knowledge}), one can
rewrite any instance to a bounded-treewidth instance (actually, to a
bounded-pathwidth one), such that the query lineage, and hence the query
probability, remain the
same. Thus, in this sense, safe queries are tractable because their
input instances may as well be bounded-pathwidth.

\paragraph*{Related work}
Bounded-treewidth has been shown to be a sufficient condition for
tractability of query evaluation (this is by Courcelle's
theorem~\cite{courcelle1990monadic}, generalized to arbitrary relational
structures in~\cite{flum2002query}), counting of query
matches~\cite{arnborg1991easy}, and probabilistic query
evaluation~\cite{amarilli2015provenance}.

For MSO query evaluation on non-probabilistic instances, bounded-treewidth is known not to be
necessary, e.g., query evaluation is tractable assuming
bounded \emph{clique-width}~\cite{courcelle1993handle}. FO query evaluation is
tractable assuming milder conditions~\cite{kreutzer2008algorithmic}.
Two separate lines of work investigated the necessity of
bounding the treewidth of instances to ensure the tractability of other data management
tasks.

First, in~\cite{DBLP:conf/focs/Marx07,marx2010can}, Marx
shows that treewidth-based algorithms for binary constraint-satisfaction
problems (CSP) are, assuming the exponential-time
hypothesis, \emph{almost optimal}: they can only be improved by a
logarithmic factor. These works do not rely on the graph minor
theorem~\cite{robertson1986graph5} as
we do, as they preceded the results of~\cite{chekuri2014polynomial} that
provide polynomial bounds on the size of grid minors: see the
discussion in the Introduction of~\cite{marx2010can}.
Instead, 
they characterize 
high treewidth via embeddings
of low \emph{depth}. The results
of~\cite{DBLP:conf/focs/Marx07,marx2010can} were further
applied to inference in
undirected~\cite{DBLP:conf/uai/ChandrasekaranSH08} and
directed~\cite{DBLP:conf/ecai/KwisthoutBG10} \emph{graphical models}. All these
works are specific to the setting and problem that they study, namely CSP and inference.

Second, another line of
work~\cite{makowsky2003tree,kreutzer2010lower,ganian2014lower} has shown
necessity of bounded treewidth when a class of graphs is closed under
some operations: extracting topological minors
in~\cite{makowsky2003tree}, extracting subgraphs
in~\cite{kreutzer2010lower}, and extracting subgraphs and vertex
relabeling in~\cite{ganian2014lower}. This requires that there are sufficiently
many instances of high treewidth, through notions of \emph{strong
unboundedness}~\cite{kreutzer2010lower} and \emph{dense
unboundedness}~\cite{ganian2014lower}. We strengthen the results
of~\cite{ganian2014lower} in Section~\ref{sec:alternate} of this paper, using our
techniques. None of these works consider probabilistic evaluation or
match counting, which we do here.

Other related work is discussed throughout the paper, where relevant; in
particular works related to lineages in Sections~\ref{sec:lineages}
to~\ref{sec:meta} and
to safe queries in Section~\ref{sec:safe}.

\medskip

The next section (Section~\ref{sec:prelim}) presents preliminaries, and 
Section~\ref{sec:problem} gives our formal problem statement.
We then move to
our new results in Section~\ref{sec:probability} onwards.

This paper does not include the full proofs of all results. The appendix
presents the full proofs of the results of Sections~\ref{sec:lineages},
\ref{sec:formulae}, and~\ref{sec:safe}, as well as the proof of
Theorem~\ref{thm:ganianvariant}. The full proofs of the other results can be
found in Chapter~6 of~\cite{thesis}. Note that the proof of
Theorem~\ref{thm:ganianvariant} given in the appendix of the present paper is a
corrected version of the erroneous proof of the same result given
in~\cite{thesis}.

\section{Preliminaries}
\label{sec:prelim}
\vspace{-.6em}
\paragraph*{Instances}
A \emph{relational signature} $\sigma$ is a set of relations $R, S,
T, \ldots$, each having an \emph{arity} denoted $\arity{R}\in\NNs$.
The signature $\sigma$ is 
\emph{arity-$k$} if $k$ is the
maximum arity of a relation in~$\sigma$.

A \emph{relational instance} (or simply $\sigma$-instance or instance) $I$ is a finite set of
ground \emph{facts} on the signature $\sigma$,
and a \emph{class} or
\emph{family} of instances
$\I$ is just a (possibly infinite) set of instances. A \emph{subinstance} of~$I$ is a subset
of its facts. We follow the \emph{active domain semantics}, where the \emph{domain}
$\dom(I)$ of~$I$ is the finite set of elements that occur in facts.
Hence, for $I' \subseteq I$, $\dom(I')$ is the (possibly strict) subset of
$\dom(I)$ formed of the elements that occur in facts of~$I'$. The
\emph{size} of~$I$, denoted $|I|$, is its number of facts.

A \emph{homomorphism}
from a $\sigma$-instance~$I$
to 
a $\sigma$-instance~$I'$ 
is a function $h: \dom(I) \to \dom(I')$
such that, for all
$R(a_1,\dots,a_k)\in I$, we have $R(h(a_1),\dots,h(a_k))\in I'$.
A homomorphism is an \emph{isomorphism} if it is bijective and its inverse is
also a homomorphism.

\paragraph*{Graphs} Throughout the paper, a \emph{graph} will always be
undirected, simple, and unlabeled, unless otherwise specified.
Formally, we see a graph $G$ as an instance of the \emph{graph signature} with a
single predicate $E$ of arity~$2$ such that: \begin{inparaenum}[(i)]
\item $\forall x\, E(x,x)\notin G$; and
\item $\forall xy\, E(x,y)\in G\Rightarrow E(y,x)\in G$. 
\end{inparaenum}
As we follow the active domain semantics, this implies that we disallow \emph{isolated vertices} in graphs.
The facts of $G$ are called \emph{edges}.
The set of \emph{vertices} (or \emph{nodes}) of a graph~$G$, denoted $V(G)$,
is its domain.
Two vertices $x$ and~$y$ of a graph~$G$ are \emph{adjacent} if $E(x,y)\in
G$, $x$ and $y$ are then called the \emph{endpoints} of the edge, and the edge
is \emph{incident} to them; two edges are \emph{incident} if they share a vertex.

The \emph{degree} of a vertex $x$ is the number of its adjacent
vertices.
For $k\in\NN$, a graph is $k$-\emph{regular} if all vertices have
degree~$k$. More generally, it is $K$-\emph{regular}, where $K$ is a
finite set of integers, if every vertex has
degree~$k$ for some $k\in K$. Finally, a graph is \emph{degree-$k$} if $k$ is the
maximum of the degree of all its vertices, i.e., if it is
$\{1, \ldots,
k\}$-regular.
A graph is \emph{planar} if it can be drawn on the plane
without edge crossings, in the standard sense~\cite{diestel}.

A \emph{path} of length~$n\in\NNs$ in a graph~$G$ is a set of edges 
$\{E(x_0,x_1),E(x_1,x_2),\dots,E(x_{n-1},x_n)\}$ that are all in~$G$; the
path is \emph{simple} if all 
$x_i$'s are distinct. 
A \emph{cycle} is a path of length $n \geq 3$ where all vertices are distinct
except that $x_0 = x_n$;
a graph is \emph{cyclic} if it has a cycle. A graph is \emph{connected} if there
is a path from every vertex to every other vertex. 
A \emph{subdivision} of a graph~$G$ is a graph obtained by
replacing each edge by an arbitrary non-empty simple path (every node
on this path being fresh except the endpoints of the original edge). 
  
\paragraph*{Treewidth and pathwidth}
A \emph{tree}~$T$ is an acyclic connected graph (remember that graphs are
undirected). A \emph{tree decomposition} of a graph~$G$ is a tree with a
labeling function $\lambda$ from its
nodes (called \emph{bags}) to
sets of vertices of~$G$, ensuring: \begin{inparaenum}[(i)]
\item for every edge $E(u,v)\in G$, there is a bag $n\in V(T)$ such that
  $\lambda(n)$ contains both $u$ and~$v$;
\item for every node $u$ of~$G$, the subtree of~$T$ formed of all bags whose
  $\lambda$-image contains~$u$ must be connected.
\end{inparaenum}
The \emph{width} of $T$ is $\max_{n\in
V(T)}|\lambda(n)|-1$. The \emph{treewidth} of a graph~$G$, denoted~$\tw(G)$,
is the minimum width of any
tree decomposition of~$G$. 

The \emph{treewidth} of a relational instance $I$, denoted $\tw(I)$, is defined as usual as the
treewidth of its \emph{Gaifman graph}, namely, the graph on the \emph{domain}
$\dom(I)$ of $I$ that connects any two elements that co-occur in a fact.
When the signature is arity-2, we can see an instance $I$ as a labeled graph,
and the treewidth of~$I$ is then exactly the treewidth of this graph.

A \emph{path decomposition} is a tree decomposition where $T$ is also a path.
The \emph{pathwidth} of a graph~$G$ is the minimum width of any path
decomposition of the graph. The \emph{pathwidth} of a relational instance
is the pathwidth of its Gaifman graph.

\paragraph*{Queries}
A \emph{query} on the signature~$\sigma$ is a formula in second-order
logic over predicates of~$\sigma$ and equality, with its standard semantics.
All queries that we consider have no constants; unless otherwise
specified, they are \emph{Boolean}, i.e.,
they have no free variable.
We write $I \models q$ whenever an instance $I$ satisfies the query $q$. We will
be especially interested in the language FO of first-order logical
sentences (where second-order quantifications are disallowed) and the
language MSO of monadic second-order logical sentences (where the only
second-order quantifications are over unary predicates).

We will also consider the language CQ of \emph{conjunctive queries}, i.e.,
existentially quantified conjunctions of atoms over the signature;
the language \cqneq{}
of conjunctive queries where additional atoms of the form $x\neq y$
(called \emph{disequality} atoms) are
allowed, where
$x$ and $y$ are variables appearing in some regular atom; the language UCQ
of
\emph{union of conjunctive queries}, namely, disjunctions of CQs; the language \ucqneq{} of
disjunctions of \cqneq{} queries. The size $\card{q}$ of a \ucqneq{} query~$q$
is its total number of atoms, i.e., the sum of the number of atoms in
each \cqneq.

A \emph{homomorphism} from a CQ $q$ to an instance $I$ is a mapping $h$ from the
variables of $q$ to $\dom(I)$ such that for each atom $R(x_1, \ldots, x_k)$ of~$q$ we
have $R(h(x_1), \ldots, h(x_k)) \in I$. For \cqneq{}
queries, we require that $h(x) \neq h(y)$ whenever $q$ contains the disequality
atom $x \neq y$. A \emph{homomorphism} from a \ucqneq $q$ to~$I$ is a
homomorphism from some disjunct of~$q$ to~$I$: 
it witnesses that $I \models q$.
A \emph{match} of a \ucqneq $q$ on an instance~$I$ is a subset of~$I$ which is the image of a
homomorphism from $q$ to $I$; a \emph{minimal match} is a match that is minimal
for inclusion.

A query is \emph{monotone} if 
$I\models q$ and $I \subseteq I'$ imply $I'\models q$
for any two instances $I, I'$. A query is \emph{closed under
homomorphisms} if we have $I' \models q$ whenever $I\models
q$
and $I$ has a homomorphism to~$I'$, for any $I$ and $I'$. UCQ is an example of query class that is both
monotone and closed under homomorphisms, while \ucqneq{} is monotone but
not closed under homomorphisms.

\section{Problem Statement}
\label{sec:problem}
We study the problem of \emph{probability evaluation}:

\begin{definition}
  Given an instance $I$, a \emph{probability valuation} is a function $\pi$
  that maps each fact of $I$ to a value\footnote{All non-integer numbers are rational
  numbers represented as the pair of their numerator and denominator.} in $[0, 1]$. A 
  probability valuation
  defines a \emph{probability distribution} on subinstances of $I$, which we
  also write $\pi$ by a slight abuse of notation. The distribution $\pi$ is
  intuitively obtained by seeing each fact $F$ as kept with probability $\pi(F)$
  and removed with probability $1 - \pi(F)$, all such choices being independent.
  Formally, the probability of $I' \subseteq I$ in this distribution is:
  \[
    \pi(I') \defeq \prod_{F \in I'} \pi(F) \prod_{F \in I \backslash I'} (1 -
    \pi(F))
  \]

  The \emph{probability evaluation problem} for a query $q$ on a class $\I$ of
  relational instances asks, given an instance $I \in \I$ and a
  probability valuation $\pi$ on $I$, what is the probability that $q$ holds
  according to the probability distribution, i.e., it is the problem of
computing $\pi(q, I) \defeq \sum_{I' \subseteq I \text{ such that } I' \models q} \pi(I')$.
\end{definition}

In other words, probability evaluation asks for the probability of~$q$ over a TID
instance defined by~$I$ and~$\pi$.
Note that we only consider classes $\I$ of instances with no associated
probabilities, and the probability valuation $\pi$ is given as an additional
input --- it is not indicated in~$\I$. The complexity of the probability
evaluation problem will always be studied in \emph{data complexity}: the query
$q$ and class $\I$ is fixed, and the input is the instance $I \in \I$ and the
probability valuation.

We also explore the problem of computing \emph{tractable lineages} (or
\emph{provenance}), defined and studied
in Section~\ref{sec:lineages} onwards.

\medskip

We rely on results of~\cite{amarilli2015provenance} that show the tractability in data complexity of
provenance computation and probability evaluation on treelike (i.e.,
bounded-treewidth) instances.
This holds for \emph{guarded
second-order} queries, but as such queries collapse to MSO under bounded
treewidth~\cite{gradel2002back}, we always use MSO queries here.
First, \cite{amarilli2015provenance} shows that we can construct Boolean circuits that represent the provenance
of MSO queries on treelike instances; we can also construct monotone circuits for
monotone queries.
The results also apply to other semirings, but this will not be our focus here.
Second, \cite{amarilli2015provenance} shows that probability evaluation
is then tractable, namely, \emph{ra-linear}:
in linear time up to the (polynomial) cost of arithmetic operations.

Our goal is thus to investigate to what extent we can generalize the following
tractability result from~\cite{amarilli2015provenance}:

\begin{theorem}[\cite{amarilli2015provenance}]
  \label{thm:main}
  For any signature~$\sigma$, for any (monotone) MSO query $q$, for
  any $k \in \NN$, there is an algorithm which, given an input instance $I$ of
  treewidth $\leq k$:
  \begin{itemize}
  \setlength\itemsep{0pt}
    \item Computes a (monotone) Boolean provenance circuit of $q$ on $I$, in
      linear time in $I$;
    \item Given a probability valuation of $I$, computes the probability of $q$
      on $I$, in ra-linear time.
  \end{itemize}
\end{theorem}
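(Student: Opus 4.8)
The plan is to combine a linear-time bounded-treewidth algorithm (Bodlaender's theorem) with Courcelle's correspondence between MSO and tree automata, and then to turn the \emph{run} of the automaton into a Boolean circuit rather than a yes/no answer. First I would compute, in linear time in $\card{I}$, a tree decomposition $(T,\lambda)$ of the Gaifman graph of $I$ of width $\leq k$; by standard manipulations I can make it \emph{nice} (in particular binary) and arrange that each fact of $I$ is ``introduced'' at exactly one bag, necessarily one whose $\lambda$-image contains all its elements. Since the signature $\sigma$ is fixed and the width is bounded, I can encode $(T,\lambda)$ together with the facts of $I$ as a tree $E$ over a fixed finite alphabet $\Gamma$: the label of a node records the isomorphism type of its bag and which facts over that bag are present. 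By the relational version of Courcelle's theorem (the correspondence between MSO over bounded-treewidth structures and bottom-up deterministic tree automata), from $q$ and $k$ I obtain a \emph{fixed} automaton $\mathcal{A}$ over $\Gamma$ with $\mathcal{A}$ accepting $E$ iff $I \models q$ --- and, crucially, the same $\mathcal{A}$ works uniformly if we flip any subset of the ``present'' facts to ``absent'', so it decides $I' \models q$ for every subinstance $I' \subseteq I$ simultaneously.

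Next I would build the provenance circuit. Introduce one Boolean variable $x_F$ per fact $F$ of $I$. For each node $n$ of $E$ and each state $s$ of $\mathcal{A}$, create a gate $g_{n,s}$ meant to be true, under a valuation of the $x_F$'s, exactly when $\mathcal{A}$ reaches state $s$ at $n$ on the tree obtained by keeping only the facts whose variable is true. The defining equation of $g_{n,s}$ is a bounded disjunction, over transitions of $\mathcal{A}$ leading to $s$, of a bounded conjunction of the $g_{n',s'}$ for the children $n'$ of $n$ and of literals $x_F$ or $\neg x_F$ selecting the corresponding sub-label (which facts of the bag are present); leaves are handled similarly. The output gate is $\bigvee_{s\text{ accepting}} g_{\mathrm{root},s}$. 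This circuit has $O(\card{I})$ gates, with the constant depending only on $\sigma$, $q$, $k$, is computable in linear time, and its correctness follows by a bottom-up induction matching the runs of $\mathcal{A}$. For a \emph{monotone} query, I would first choose $\mathcal{A}$ to be ``monotone'' in the sense that flipping a fact on in a sub-label never turns an accepting state into a rejecting one; a short argument then lets me drop every negative literal $\neg x_F$, yielding a monotone circuit.

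Finally, for probability evaluation, the key observation is that the circuit just built has a tree decomposition of width bounded by a constant (depending only on $\sigma$, $q$, $k$): it is obtained from $T$ by replacing each node $n$ by the bounded group of gates $\{g_{n,s}\}_s$, those of its (at most two) children, and the variables $x_F$ for the constantly many facts introduced at $n$. Equivalently --- and this is what I would actually spell out --- I can run a dynamic program directly on $E$: because each fact is introduced at a unique node, the random choices governing disjoint subtrees are independent, so I can compute bottom-up, for each $n$ and each state $s$, the probability $p_{n,s}$ that $\mathcal{A}$ reaches $s$ at $n$; each step is a fixed number of additions and multiplications over the $p_{n',s'}$ of the children and the values $\pi(F)$, $1-\pi(F)$ of the locally introduced facts (determinism of $\mathcal{A}$ ensures the events ``reach $s$'' partition, so no inclusion--exclusion is needed). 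Summing $p_{\mathrm{root},s}$ over accepting $s$ gives $\pi(q,I)$. The total number of arithmetic operations is linear in $\card{I}$, each on rationals of polynomially bounded bit-size, so the running time is ra-linear.

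The main obstacle is the MSO-to-automaton step, and specifically making the reduction \emph{uniform in the facts}: the tree encoding must be set up so that switching a fact off is a purely local relabeling that never forces a change in the bag structure of $E$. This is exactly the property that buys both the linear-size provenance circuit and its bounded treewidth, so it has to be handled carefully rather than invoked as a black box. The remaining delicate points are the monotone case (choosing $\mathcal{A}$ and justifying the removal of negations) and checking that the message passing stays ra-linear --- i.e., that the arithmetic never blows up beyond polynomial size --- rather than merely polynomial-time overall.
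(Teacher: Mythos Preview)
Your proposal is correct and follows essentially the same route as the paper (more precisely, as the construction of~\cite{amarilli2015provenance} that the paper recalls): compute a tree decomposition via Bodlaender, encode it as a $\Gamma$-tree~$E$, compile the MSO query to a bottom-up tree automaton via Courcelle, and build gates $g_{n,s}$ expressing ``$\mathcal{A}$ reaches state $s$ at node $n$'' as Boolean combinations of child gates and fact literals; the resulting circuit has bounded treewidth following the structure of~$E$, and probability is computed by message passing (equivalently, your bottom-up DP on the $p_{n,s}$). Your identification of the delicate point --- that the encoding must make fact presence/absence a purely local relabeling --- is exactly the ``$\overline{\Gamma}$-tree'' device used there, and your handling of monotonicity and of determinism (so that the state events partition and no inclusion--exclusion is needed) matches as well.
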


We first focus on the second point (probability evaluation) in Section~\ref{sec:probability}, followed by a
digression about non-probabilistic evaluation in Section~\ref{sec:consequences}.
We then study the
first point (lineages) in Sections~\ref{sec:lineages}--\ref{sec:meta}. We close with a
connection to safe queries in Section~\ref{sec:safe}.

\section{Probability Evaluation}\label{sec:probability}
This section studies whether we can extend
the above tractability result by lifting the
bounded-treewidth requirement.
We answer in the negative by a \emph{dichotomy result} on arity-two signatures:
there are queries for which probabilistic evaluation is tractable on
bounded-treewidth families but is intractable on \emph{any} efficiently
constructible unbounded-treewidth
family. A first technical issue is to formalize what we mean by
\emph{efficiently} constructible. We use the following notion:

\begin{definition}
  We call $\I$
  \emph{treewidth-constructible} if   for all $k \in \NN$,
  if $\I$ contains instances of treewidth $\geq k$, we can construct
  one
  in polynomial time given~$k$ written in unary\footnote{The requirement that $k$
    be given in unary rather than in binary means that \emph{more} instance
    families are
    treewidth-constructible, so treewidth-constructibility in this sense is a
  weaker assumption than if the input $k$ could be written in binary.}.
\end{definition}

In particular, this implies that $\I$ must contain a subfamily of
unbounded-treewidth instances that are small, i.e., have size polynomial in
their treewidth.
We discuss the impact of this choice of definition, and alternate
definitions of \emph{efficiently} constructible instances, in
Section~\ref{sec:consequences}.

A second technical issue is that we need to restrict to signatures of arity~$2$. We
will then show our
dichotomy for \emph{any} such signature. This suffices to
show that our Theorem~\ref{thm:main} cannot be generalized: its
generalization should apply to any signature, in particular arity-2 ones. Yet, we do
not know whether the dichotomy applies to signatures of arity $>2$.

Our \emph{main result} on probability evaluation is as follows. In this result,
$\fpsp$ is the class of function problems which can be solved in PTIME
with a deterministic Turing machine having access to a \#P-oracle, i.e., an oracle for
counting problems that can be expressed as the number of accepting paths for a
nondeterministic PTIME Turing machine.
\begin{theorem}\label{thm:dichotomy}
  Let $\sigma$ be an arbitrary arity-2 signature.
  Let $\I$ be a treewidth-constructible class of $\sigma$-instances.
  Then the following dichotomy holds:
  \begin{itemize}
  \setlength\itemsep{0pt}
    \item If there is $k\in\mathbb{N}$ such that $\tw(I)\leq k$ for every $I\in\I$,
      then for every MSO
      query $q$, the probability evaluation problem for
      $q$ on instances of~$\I$ is solvable in ra-linear time.
    \item Otherwise, there is an FO query $q_{\h}$ (depending on~$\sigma$ but
      not on $\I$) such that
      the probability evaluation problem
      for~$q_{\h}$ on $\I$ is 
      $\fpsp$-complete under randomized polynomial time (RP) reductions.
  \end{itemize}
\end{theorem}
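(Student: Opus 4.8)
The plan is to prove the two halves separately. The first (tractable) half is immediate: if $\tw(I) \leq k$ for all $I \in \I$, then $\I$ is a class of treewidth-bounded instances and Theorem~\ref{thm:main} directly gives ra-linear probability evaluation for every MSO query. So essentially all the work is in the second half: building a single FO query $q_{\h}$, depending only on~$\sigma$, whose probability-evaluation problem on a treewidth-unbounded treewidth-constructible~$\I$ is $\fpsp$-complete under RP reductions. Membership in $\fpsp$ is routine (probability evaluation for a fixed FO query is in $\fpsp$ in data complexity, by the standard possible-world summation), so the core is the hardness direction.

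For hardness I would reduce from a known $\#\mathrm{P}$-hard counting problem --- the natural candidate is counting independent sets, equivalently the probability of the CQ $\exists x\, y\; R(x) \land S(x,y) \land T(y)$ type query, or $\#$-satisfiability of positive 2-DNF / computing the reliability of a graph, all of which are $\#\mathrm{P}$-hard even on structurally simple inputs. The key geometric device is that an unbounded-treewidth class, by the Excluded Grid Theorem with the \emph{polynomial} bounds of Chekuri--Chekuri~\cite{chekuri2014polynomial}, contains, for each~$k$, an instance whose Gaifman graph has a $k\times k$ grid as a minor, and by treewidth-constructibility this instance is of size polynomial in~$k$ and can be produced in polynomial time. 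The plan is: given a hard instance of size~$n$, pick $k = \mathrm{poly}(n)$, construct $I \in \I$ of polynomial size with a large grid minor, extract (a subdivision of) the grid, and then \emph{encode the hard instance into the grid} by choosing the probability valuation~$\pi$ on the facts of~$I$: facts we want to "use" get appropriate probabilities, facts we want to neutralize get probability $0$ or $1$ so that they are deterministically absent/present. The query $q_{\h}$ must be robust enough to read off the encoded structure from whatever grid-like minor the adversary's graph happens to contain --- this is why it is allowed to depend on~$\sigma$ (to handle arbitrary labels/self-loops/edge directions in an arity-2 signature) but not on~$\I$, and why a plain monotone UCQ does not suffice: we need disequalities or negation to "carve out" a genuine grid structure from the minor model, which forces $q_{\h}$ to be a (non-monotone) FO query rather than a UCQ. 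The RP (rather than deterministic) reductions come in because extracting the grid minor, or routing the encoding through it, may require randomization --- e.g., random choices to find the minor model or to handle the flexibility of which vertices realize which grid nodes.

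Concretely the steps, in order, are: (1) dispatch the easy tractable case via Theorem~\ref{thm:main}; (2) show probability evaluation for any fixed FO query is in $\fpsp$ in data complexity; (3) recall the polynomial excluded-grid bound and combine it with treewidth-constructibility to get, in randomized polynomial time on input $1^n$, an instance $I \in \I$ of polynomial size together with a minor model of an $n \times n$ grid inside the Gaifman graph of~$I$; (4) design the FO query $q_{\h}$ (over the fixed arity-2 signature~$\sigma$) that, on any instance carrying such an encoding, has probability equal to a target $\#\mathrm{P}$-hard quantity determined by~$\pi$; (5) verify the reduction: map a hard instance to $(I, \pi)$ and show the answer to the $\#\mathrm{P}$-hard problem is polynomial-time recoverable from $\pi(q_{\h}, I)$; (6) conclude $\fpsp$-completeness. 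I expect step~(4) --- engineering $q_{\h}$ so that it correctly and uniformly reads the encoding out of an \emph{arbitrary} grid minor in an \emph{arbitrary} arity-2 signature, with the subdivision paths of unknown lengths and the adversary's extra facts present --- to be the main obstacle; getting this to work with a fixed first-order sentence is precisely where the non-monotonicity (disequalities/negation) is needed, and where most of the technical care goes.
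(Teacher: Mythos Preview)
Your scaffolding is right --- upper bound via Theorem~\ref{thm:main}, $\fpsp$ membership is routine, hardness via the polynomial grid-minor bound of~\cite{chekuri2014polynomial} plus treewidth-constructibility, with randomization entering through minor extraction. But your steps~(3)--(4) take a detour that the paper avoids, and the detour is where your plan is thin. You propose to extract a \emph{grid} minor and then \emph{encode} a separate hard instance into it via the valuation, leaving $q_{\h}$ to decode the encoding. The paper instead embeds the hard instance \emph{directly}: the $\#\mathrm{P}$-hard source problem is counting \emph{matchings} in $3$-regular planar graphs, and the input graph~$G$ itself is extracted as a \emph{topological minor} of the Gaifman graph of some $I\in\I$ (Lemma~\ref{lem:extract}) --- no grid intermediary. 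This works because any degree-$3$ minor is automatically a topological minor, so the polynomial grid bound of~\cite{chekuri2014polynomial} transfers to embedding arbitrary degree-$3$ planar graphs. The valuation then kills facts outside the embedding, and $q_{\h}$ only has to test ``this subinstance is (an encoding of) a matching'' on what is essentially a subdivision of~$G$.

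This difference is exactly what makes $q_{\h}$ first-order. Your step~(4) asks an FO sentence to read positional or structural information off a subdivided grid whose paths have unbounded length; FO is local, and the grid-encoding arguments you are implicitly borrowing (Kreutzer--Tazari, Ganian et al.) need MSO or $\mathrm{MSO}_2$ precisely for this reason. By embedding~$G$ directly and choosing matchings --- a property with a local flavor (no two incident edges both selected) --- the paper reduces the task to designing an FO test that is \emph{invariant under subdivisions}, which is still delicate (the paper flags it as ``especially tricky in FO'') but feasible. So the missing idea in your proposal is: drop the grid-then-encode layer, pick a $\#\mathrm{P}$-hard problem on degree-$\leq 3$ planar graphs so the input embeds as a topological minor, and design $q_{\h}$ to test that specific property in a subdivision-invariant way.
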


The first part of this result is precisely the second point of Theorem~\ref{thm:main}.
We thus sketch the proof of the hardness result of the second part.
Pay close attention to the statement: while 
some FO queries (in particular, unsafe CQs~\cite{dalvi2012dichotomy})
may have $\fpsp$-hard
probability evaluation when \emph{all} input instances are allowed,
our goal here is to build a query that is hard even when input instances are
restricted to \emph{arbitrary families} satisfying our conditions, a much harder claim.

We reduce from the problem of counting graph \emph{matchings}, namely, the
number of edge subsets of a graph that have no pair of incident edges. This problem is
known to be \#P-hard on 3-regular planar graphs~\cite{xia2007computational}. We
define a FO query $q_\h$ that tests for matchings on such graphs (encoded in a
certain way), and we rely on the connection between probability evaluation
and model counting so that the probability of~$q_\h$ on (an encoding of) a
graph~$G$ reflects its number of matchings.

The main idea is that 3-regular planar graphs can be \emph{extracted} 
from our family $\calI$, using the
following notion:
\begin{definition}
  \label{def:minor}
  An 
  \emph{embedding} of a graph~$H$ in a graph~$G$ is an
  injective mapping $f$ from the vertices of $H$ to the vertices of $G$ and a
  mapping~$g$ that maps the edges $(u, v)$ of $H$ to paths in $G$ from $f(u)$ to
  $f(v)$, with all paths being vertex-disjoint.
  A graph $H$ is a \emph{topological minor} of a graph $G$ if there is an
  embedding of~$H$ in~$G$.
\end{definition}

We then use the following lemma, that rephrases the recent polynomial
bound~\cite{chekuri2014polynomial} on Robertson and Seymour's grid minor
theorem~\cite{robertson1986graph5}
to the realm of topological minors; in so doing, we use the folklore observation that a
degree-3 minor of a graph is always a topological minor:

\begin{lemma}\label{lem:extract}
  There is $c \in \mathbb{N}$ such that
  for any degree-3 planar graph $H$,
  for any graph $G$ of
  treewidth $\geq \card{V(H)}^c$,
  $H$ is a 
  topological
  minor of $G$ and an embedding of~$H$
  in $G$ can be computed in randomized polynomial time in~$\card{G}$.
\end{lemma}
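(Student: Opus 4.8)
The plan is to combine the polynomial grid-minor theorem of~\cite{chekuri2014polynomial} with two standard reductions: from large treewidth to a large grid minor, and from a grid minor to a topological minor of any fixed degree-3 planar graph. First I would invoke the result of~\cite{chekuri2014polynomial}: there is a universal constant $d \in \NN$ such that any graph of treewidth $\geq t^d$ contains the $(t \times t)$-grid as a minor, and moreover a model of this grid minor can be extracted in polynomial time (or randomized polynomial time, depending on the precise statement one quotes; the lemma only claims randomized polynomial time, so this suffices). So, given the degree-3 planar graph $H$, I would choose $t$ to be large enough in terms of $\card{V(H)}$ — concretely, $t$ polynomial in $\card{V(H)}$ — so that the $(t \times t)$-grid already contains $H$ as a minor.

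Next I would establish that $H$ is a minor of a suitably large grid. This is the classical fact that planar graphs of bounded degree embed as minors in grids whose side length is polynomial (in fact linear, by results on planar graph drawings / the "grid embedding" lemmas behind the original Robertson–Seymour planarity case) in the number of vertices of $H$. I would cite or reprove the bound that the $(c'\card{V(H)} \times c'\card{V(H)})$-grid contains every degree-3 (more generally, bounded-degree) planar graph $H$ as a minor, for some absolute constant $c'$; and that the corresponding model can be computed in polynomial time given a planar embedding of $H$. Composing: if $\tw(G) \geq \card{V(H)}^c$ for $c \defeq d \cdot (\text{something absorbing } c')$, then $G$ has the grid as a minor, which has $H$ as a minor, so $H$ is a minor of $G$; and all the intermediate models are polynomial-time (resp.\ randomized polynomial-time) computable, so a model of $H$ in $G$ can be extracted in randomized polynomial time in $\card{G}$.

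Finally I would upgrade "minor" to "topological minor". Here I would use the folklore observation, already flagged in the excerpt, that a minor model of a graph of maximum degree $\leq 3$ can always be converted into a topological-minor embedding: when every branch set needs to realize at most three incident edges, one can pick inside each branch set a subtree with at most three leaves (a "spider"/subdivided star or a path), and these together with the connecting paths of the minor model form vertex-disjoint paths realizing the edges of $H$, i.e., an embedding in the sense of Definition~\ref{def:minor}. Since $H$ has degree $\leq 3$ by hypothesis, this applies, and the conversion is clearly polynomial-time given the minor model. Chaining all three steps yields the embedding of $H$ in $G$ in randomized polynomial time, as desired. I expect the main obstacle to be the second step: pinning down the right polynomial bound on the grid side length needed to contain an arbitrary degree-3 planar graph $H$ as a minor, and checking that the associated model is efficiently constructible from a planar embedding of $H$ (computable in linear time). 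The grid-minor and minor-to-topological-minor steps are comparatively routine given the cited results; the care is mostly in bookkeeping the exponents so that a single constant $c$ works uniformly for all such $H$.
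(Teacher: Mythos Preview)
Your proposal is correct and follows essentially the same approach as the paper: invoke the polynomial grid-minor bound of~\cite{chekuri2014polynomial}, use the standard fact that every planar graph is a minor of a polynomially larger grid, and then apply the folklore observation (which the paper explicitly flags just before stating the lemma) that a degree-$3$ minor is automatically a topological minor. The paper does not give the full proof in the body (it defers to~\cite{thesis}), but its one-sentence description of the argument matches your three-step plan exactly, including the identification of the second step as the one requiring a bit of care with the polynomial bookkeeping.
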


Hence, intuitively, given an input 3-regular planar graph~$G$ (the input to the hard problem),
we can extract it in
randomized polynomial-time (RP) as a topological minor of (the Gaifman graph of)
an instance~$I$ of our family $\calI$ that we obtain using treewidth-constructibility. Once it is extracted, we
show that, by choosing the right probability valuation for $I$,
the probability of $q_\h$ on $I$
allows us
to reconstruct the answer to the original
hard problem, namely, the number of matchings of~$G$.
The minor extraction step is what complicates the design of $q_\h$, as
$q_\h$ must then test for matchings in a way which is \emph{invariant under
subdivisions}: this is especially tricky in FO as we can only make local tests.

\paragraph*{Choice of hard query}
Not only is our query $q_{\h}$ independent from the class of
instances~$\I$, but it is also an FO query, so, in the \emph{non-probabilistic}
setting, its data complexity on any instance is in AC$^0$. In fact, our choice of~$q_{\h}$ has
also \emph{linear-time} data complexity: one can determine in linear time in an input
instance~$I$ whether $I \models
q_{\h}$.
This contrasts sharply with the
$\fpsp$-completeness (under RP reductions)
of \emph{probability evaluation} for $q_\h$ 
on \emph{any} unbounded-treewidth instance class
(if it is treewidth-constructible).

The query $q_{\h}$, however, is not monotone.
We can alternatively show Theorem~\ref{thm:dichotomy} for a
MSO query which is \emph{monotone},
but not in FO: more specifically, we use a query in \rpqneq, the class of
\emph{conjunctive two-way regular path queries}
\cite{DBLP:conf/kr/CalvaneseGLV00,DBLP:journals/tcs/CalvaneseGV05} where we additionally allow
disequalities between variables.

We will show an analogue of Theorem~\ref{thm:dichotomy} in the setting of
\emph{tractable lineages} in Section~\ref{sec:meta}, which applies to \ucqneq,
an even weaker language.
We do not know whether Theorem~\ref{thm:dichotomy} itself can be shown with such
queries, or with a \emph{monotone} FO query. However, we know that
Theorem~\ref{thm:dichotomy} could not be shown with a query closed under
homomorphism; this is implied by Proposition~\ref{prp:nodichohomom}.

\paragraph*{Providing valuations with the instances}
When we fix the instance family $\I$,
the probability valuation is not prescribed as part of the family, but can be
freely chosen. If the instances of $\I$ were provided with
their probability valuations, or if probability valuations were forced to be
$1/2$, then it is unlikely that an equivalent to
Theorem~\ref{thm:dichotomy} would hold.

Indeed, fix \emph{any} query~$q$ such that, given any instance $I$, it is in \#P to count how many
subinstances of~$I$ satisfy $q$; e.g., let $q$ be a CQ. Consider
a family $\I$ of
instances \emph{with valuations} such that there is only one instance in~$\I$ per
encoding length: e.g., take the class of $R$-grids with probability $1/2$ on each
edge, for some binary relation $R$. Consider the problem, given the \emph{length}
of the encoding of an instance~$I$ (written in unary), of computing how many
subinstances of~$I$ satisfy~$q$. 
This problem is in the class
$\text{\#P}_1$~\cite{valiant1979complexity}. Hence, the probability computation problem for
$q$ on $\I$ is in $\fpspo$: rewrite the encoding of the input instance~$I$
to a word of the same length in a unary alphabet, use the $\text{\#P}_1$-oracle to compute
the number of subinstances, and normalize the result by dividing by the number of
possible worlds of~$I$. 

It thus
seems unlikely that probabilistic evaluation of $q$ on $\calI$ with its
valuations is \#P-hard, so that our dichotomy result probably does not adapt if
input instance families are provided with their valuations.

\section{Non-Probabilistic Evaluation}\label{sec:consequences}
\begin{table}
\caption{Summary of results for 
  non-probabilistic
  query evaluation: if a graph class has
  unbounded treewidth in \emph{some} sense, is closed under \emph{some} operations, and
  has (in data complexity) tractable
  model checking in \emph{some} sense for
\emph{some} logic, then \emph{some} complexity assumption is violated}
\label{tab:related}
\begin{minipage}{\linewidth}
  \scriptsize
    \renewcommand{\tabcolsep}{3pt}
    \begin{tabularx}{\linewidth}{llllXr@{~~}l}
\toprule
{\textbf{Logic}} &
{\textbf{Unboundedness}} &
{\textbf{Closure}} &
\textbf{Tractability}
 &
\textbf{Consequence} &
\multicolumn{2}{l}{\bfseries Source} \\
\midrule

MSO &
unbounded &
subgraph &
PTIME &
no violation: holds for some classes &
\cite{makowsky2003tree} & Prop.~32 \\

MSO$_2$ &
unbounded
&
subgraph &
PTIME &
no violation: holds for some classes &
\cite{kreutzer2010lower} & (remark)
\\

\midrule

$\exists$MSO &
unbounded &
topolog.\ minors &
PTIME &
P = NP &
\cite{makowsky2003tree} & Thm.~11 \\

MSO$_2$ &
strongly unb.\ polylog.

&
subgraph &
PTIME &
$\text{PH} \subseteq
  \text{DTIME}(2^{o(n)})$ &
\cite{kreutzer2010lower} & Thm.~1.2 \\

MSO &
densely unb.\ polylog.
&
subgr., vert.\ lab. &
quasi-poly &
$\text{PH} \subseteq
    \text{DTIME}(2^{o(n)})/\text{\textsc{subEXP}}$
    &
\cite{ganian2014lower} & Thm.~5.5 \\

\midrule

MSO &
unb., treewidth-constr. &
subgraph &
PTIME &
PH $\subseteq$ RP &
\bfseries here & \bfseries Thm.~\ref{thm:dichotomynp} \\

MSO &
densely unb.\ polylog.&
subgraph &
quasi-poly&
$\text{PH} \subseteq
\text{DTIME}(2^{o(n)})/\text{\textsc{subEXP}}$&
\bfseries here & \bfseries Thm.~\ref{thm:ganianvariant} \\

\bottomrule
\end{tabularx}
\end{minipage}

\end{table}

Theorem~\ref{thm:dichotomy} in Section~\ref{sec:probability} uses the
recent technology of~\cite{chekuri2014polynomial} that shows polynomial bounds for the grid
minor theorem of~\cite{robertson1986graph5}. These improved bounds also yield new results in the
non-probabilistic setting.
We accordingly study in this section the problem of 
\emph{non-probabilistic} query evaluation, again defined in terms of data
complexity:

\begin{definition}
  \label{def:msoeval}
  The \emph{evaluation problem} (or \emph{model-checking problem}), for a fixed
  query $q$ on an instance family $\I$, is as follows: given an
  instance $I \in \I$, decide whether $I \models q$.
\end{definition}

Observe that the probability evaluation problem in Section~\ref{sec:probability}
allowed the valuation to set edges to have
probability~$0$. We could thus restrict to any subinstance of an
instance in the class~$\calI$. In other words, the freedom to choose valuations
in probability evaluation gave us at least the possibility of
choosing subinstances for non-probabilistic query evaluation.
This is why we will study in this section the non-probabilistic query
evaluation problem on instance
classes~$\calI$ which are \emph{closed under taking subinstances} (or
\emph{subinstance-closed}), namely, for any
$I \in \calI$ and $I' \subseteq I$, we have $I' \in \calI$.

As before, we will prove dichotomy results for this problem on unbounded-treewidth
instance families, though we will use an MSO query rather than an FO
query. We give two phrasings of our results. The first one, in
Section~\ref{sec:hardness}, still requires
treewidth-constructibility, and shows hardness for every level of the polynomial
hierarchy, again under RP reductions.
The second phrasing, in Section~\ref{sec:alternate}, is inspired by the
results of~\cite{ganian2014lower}, which it generalizes: it relies on complexity assumptions
(namely, the non-uniform exponential time hypothesis) but works with a weaker
notion of constructibility, namely, it requires treewidth to be strongly
unbounded poly-logarithmically.

Last, we study in
Section~\ref{sec:counting} the problem of \emph{match
counting} in the non-probabilistic setting, for which no analogous results seemed to exist.

As in Section~\ref{sec:probability}, we restrict to signatures
of arity 2.

\subsection{Hardness Formulation}\label{sec:hardness}

Our first dichotomy result for non-probabilistic MSO query evaluation is as
follows; it is phrased using the notion of
treewidth-constructibility. In this result, $\sp_i$ denotes the complexity class
at the $i$-th existential level of the polynomial hierarchy.

\begin{theorem}\label{thm:dichotomynp}
  Let $\sigma$ be an arbitrary arity-2 signature.
  Let $\I$ be a
  class of 
  $\sigma$-instances
  which is treewidth-constructible and subinstance-closed.
  The following dichotomy holds:
  \begin{itemize}
  \setlength\itemsep{0pt}
    \item If there exists $k\in\mathbb{N}$ such that $\tw(I) \leq k$ for every $I\in\I$,
      then for every MSO
      query $q$, the evaluation problem for
      $q$ on~$\I$ is solvable in linear time.
    \item Otherwise, for each $i \in \NN$, there is an MSO query $q^i_{\h}$ (depending only on $\sigma$,
      not on $\I$) such that
      the evaluation problem for
      $q^i_{\h}$ on $\I$ is $\sp_i$-hard under RP reductions.
  \end{itemize}
\end{theorem}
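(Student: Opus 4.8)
The plan is to mirror the structure of the probabilistic dichotomy (Theorem~\ref{thm:dichotomy}), replacing model counting by quantifier alternation. The first bullet is nothing new: it is Courcelle's theorem (via~\cite{courcelle1990monadic,flum2002query}), since bounded treewidth gives linear-time MSO evaluation. So the work is entirely in the hardness bullet. Fix the arity-2 signature $\sigma$ and a treewidth-constructible, subinstance-closed class $\I$ of unbounded treewidth. For each level $i$, I want a $\sigma_i^{\mathrm P}$-hard problem that I can encode as MSO model checking of a fixed query $q^i_\h$ on instances drawn from $\I$.

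The encoding idea is the same minor-extraction trick used for Theorem~\ref{thm:dichotomy}. I would take a canonical $\sp_i$-hard problem over graphs --- for instance, a quantified constraint-satisfaction / quantified-circuit-satisfiability variant whose input is (an encoding in) a degree-3 planar graph, with the $i$ alternating blocks of quantifiers annotated on the structure (one can always massage a standard $\sp_i$-complete problem into this shape, since any graph can be made degree-3 and planar by subdividing and by gadget-replacing high-degree vertices, at polynomial cost, and the quantifier structure can be carried by local labels encoded in the graph). Given an input $G$ of size $n$, apply Lemma~\ref{lem:extract}: construct in polynomial time, via treewidth-constructibility, an instance $I\in\I$ whose Gaifman graph has treewidth $\ge n^c$, so that $G$ embeds in it as a topological minor, with the embedding computable in randomized polynomial time (this is where the RP reduction comes from). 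Because $\I$ is subinstance-closed, I can throw away all facts of $I$ not used by the embedding, keeping exactly a subdivided, labeled copy of $G$ sitting inside a $\sigma$-instance. Then $q^i_\h$ is a fixed MSO sentence that (a) recovers $G$ from its subdivided image --- this is the invariance-under-subdivision part, and MSO handles it easily because MSO can quantify over the sets of internal path-vertices and so ``contract'' degree-2 chains --- and (b) expresses the $i$-alternation QBF/QCSP condition on the recovered $G$; the $i$ second-order quantifier blocks of $q^i_\h$ simulate the $i$ first-order-variable/assignment blocks of the hard problem. Soundness and completeness of the reduction then follow by construction, giving $\sp_i$-hardness of evaluating $q^i_\h$ on $\I$ under RP reductions.

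A couple of points need care. First, the labeling: $\sigma$ is an \emph{arbitrary} arity-2 signature, possibly with just a single binary relation and no unary relations, so the ``labels'' needed to mark quantifier blocks, edge gadgets, and the recovered-$G$ structure must be simulated by small local subgraph patterns built from whatever relations $\sigma$ provides --- exactly as in the construction of $q_\h$ for Theorem~\ref{thm:dichotomy}, where the query is said to depend only on $\sigma$. One reuses those gadgets, and they survive subdivision because $q^i_\h$ is written to be subdivision-invariant. Second, one must check the embedding does not accidentally create shortcuts: the paths $g(u,v)$ are vertex-disjoint by Definition~\ref{def:minor}, but the ambient instance $I$ may have extra edges among the path vertices; subinstance-closure lets us delete those, so after deletion the kept subinstance is \emph{exactly} a subdivision of (the gadget-encoding of) $G$ with no spurious adjacencies, which is what makes the MSO ``decoding'' correct.

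The main obstacle I expect is writing $q^i_\h$ so that it is simultaneously (i) a single fixed MSO sentence depending only on $\sigma$ and on the level $i$, (ii) fully invariant under arbitrary subdivisions of the encoded graph, and (iii) faithfully capturing $i$-alternation hardness --- all at once, using only the relations available in an arbitrary arity-2 $\sigma$. The subdivision-invariance is the delicate ingredient (it is flagged as the tricky part even in the FO case for Theorem~\ref{thm:dichotomy}), but MSO is considerably more forgiving than FO here: set quantification lets us name the subdivision chains explicitly, so conceptually the decoding is routine; the real bookkeeping is in verifying that the gadget encoding of the $\sp_i$-complete problem is robust to subdivision and expressible with the given signature. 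Everything else --- treewidth-constructibility giving a polynomial-size high-treewidth instance, Lemma~\ref{lem:extract} giving the RP embedding, subinstance-closure giving a clean copy --- plugs in directly, exactly as in Section~\ref{sec:probability}.
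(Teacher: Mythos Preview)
Your overall architecture matches the paper's: Courcelle for the upper bound; for the lower bound, use treewidth-constructibility to get a high-treewidth instance, Lemma~\ref{lem:extract} to embed a planar degree-$3$ input as a topological minor in RP, subinstance-closure to prune down to an exact subdivision, and a subdivision-invariant MSO sentence that decides a $\sp_i$-hard problem on the recovered graph. That skeleton is exactly what the paper does.

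The paper does not, however, build the $\sp_i$-hard planar problem from scratch as you propose. It invokes a ready-made black box, stated as Lemma~\ref{lem:hardph}: there already exist MSO sentences $\psi_i$ on the graph signature whose evaluation is $\sp_i$-hard on planar $\{1,3\}$-regular graphs \emph{and} whose truth is preserved under arbitrary subdivisions. This comes from the \emph{alternating coloring problem} of~\cite{ganian2014lower,ganian2010there}, restricted to planar $\{1,3\}$-regular graphs by the techniques there, together with an extra construction to eliminate vertex labels (so that the result applies to any arity-2 $\sigma$, not just a labeled one). With Lemma~\ref{lem:hardph} in hand, the rest of the proof really is a direct rerun of the Theorem~\ref{thm:dichotomy} argument, as you say.

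Where your plan needs tightening is precisely in manufacturing that black box. Your justification --- ``any graph can be made degree-$3$ and planar by subdividing and by gadget-replacing high-degree vertices'' --- is not right: subdivision reduces degree but does nothing for planarity, and planarizing a generic $\sp_i$-hard instance requires crossover gadgets whose correctness and hardness-preservation at each level of the hierarchy are not automatic. Likewise, carrying ``quantifier-block labels'' through arbitrary subdivision using only an unlabeled arity-2 signature is exactly the step the paper flags as needing a separate construction. None of this is fatal --- the existence of such $\psi_i$ is established in the literature you would end up citing --- but it is where the genuine work hides, and you should cite Lemma~\ref{lem:hardph} (equivalently, the alternating-coloring machinery of~\cite{ganian2014lower}) rather than hand-wave it.
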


The upper bound is by Courcelle's
results~\cite{courcelle1990monadic,flum2002query},
so our contribution is the hardness part, which we now sketch.

The
main thing to
change relative to the proof of
Theorem~\ref{thm:dichotomy} is the hard problems from which we reduce.
We use hard problems on
planar $\{1,3\}$-regular graphs, which we obtain from the \emph{alternating coloring
problem} as~\cite{ganian2014lower,ganian2010there}, restricted
to such graphs using techniques shown there, plus an additional construction to
remove vertex labellings. Here is our formal claim about the existence of such
hard
problems:

\begin{lemma}\label{lem:hardph}
  For any $i\in\mathbb{N}$, there exists an MSO formula $\psi_i$ on the
  signature of graphs such
  that the evaluation of $\psi_i$ on planar $\{1,3\}$-regular graphs is
  $\sp_i$-hard. Moreover, for any such graph~$G$, we have $G\models\psi_i$ iff
  $G' \models \psi_i$ for any subdivision $G'$ of~$G$.
\end{lemma}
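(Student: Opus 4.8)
The goal is to build, for each $i$, an MSO formula $\psi_i$ on graphs whose evaluation is $\sp_i$-hard on planar $\{1,3\}$-regular graphs, and which is additionally \emph{subdivision-invariant} on such graphs. Let me sketch how I would do this.

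First, recall the starting point. Let me think about what's available...

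For each level $i$ of the polynomial hierarchy, the \emph{alternating coloring problem} of~\cite{ganian2014lower,ganian2010there} gives an MSO formula $\chi_i$ that is $\sp_i$-hard to evaluate on some class of graphs. The plan is to start from that hard problem and successively reduce the inputs to the desired shape, adapting the MSO formula at each step.

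Step 1: restrict to planar graphs. I would use the standard trick (as in~\cite{ganian2014lower,ganian2010there}) of replacing each crossing in a planar embedding by a small planar gadget that simulates a crossing, being careful that the gadget choice is encodable in MSO; this yields an MSO formula $\chi_i^{\mathrm{pl}}$ hard on planar graphs. Step 2: restrict to bounded degree, in fact to $\{1,3\}$-regular graphs. Here I would replace each high-degree vertex $v$ by a small tree (or cycle) of degree-3 vertices with the incident edges distributed among the leaves; degree-2 vertices are eliminated by contracting (which is exactly subdivision, so this step interacts with the invariance requirement), and degree-1 "stub" vertices are handled by the construction. The MSO formula must be rewritten to quantify over the gadgets rather than the original vertices, which is routine but fiddly. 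Step 3: remove vertex labels. Since the alternating coloring problem comes with labeled vertices, I would encode each label by attaching to the vertex a distinguished, subdivision-robust gadget (e.g.\ a short pendant path of a length that is detectable \emph{modulo} nothing, i.e.\ by a local MSO pattern, or better, an attached small planar subgraph of a fixed isomorphism type that survives subdivision), and rewrite $\chi_i$ to read off the label from the gadget; the result is $\psi_i$, an MSO formula on the bare graph signature, hard on planar $\{1,3\}$-regular graphs.

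The key extra ingredient beyond~\cite{ganian2014lower} is \emph{subdivision-invariance}: we need $G \models \psi_i$ iff $G' \models \psi_i$ for every subdivision $G'$ of~$G$. The way to get this is to make sure every MSO quantifier and every atomic test in $\psi_i$ is phrased in a subdivision-stable way. Concretely: rather than quantifying over vertex sets and edge sets of $G'$ directly, I would have $\psi_i$ first "recover" a copy of the original degree-$\{1,3\}$ graph inside $G'$ --- the \emph{branch vertices} (those of degree $\neq 2$) are MSO-definable, and the \emph{branch paths} (maximal paths of internal degree-2 vertices) connecting them are MSO-definable as well (an MSO formula can say "$X$ is a minimal connected set of degree-2 vertices whose closure has exactly two branch-vertex neighbours"). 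Then every set quantifier of the original formula is relativized to range over branch vertices / branch paths, every edge atom $E(x,y)$ is replaced by "there is a branch path from $x$ to $y$", and the label gadgets of Step 3 are chosen so that they too are recognizable after subdivision (e.g.\ a pendant branch vertex of a fixed small degree, whose attached structure is subdivision-robust). Since a subdivision does not change the branch vertices or the incidence pattern of branch paths, the relativized formula behaves identically on $G$ and on any subdivision of it. This yields the required equivalence for all subdivisions, including $G$ itself as the trivial subdivision.

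The \textbf{main obstacle} is Step 3 combined with subdivision-invariance: removing the vertex labels of the alternating coloring problem \emph{and} simultaneously keeping the whole formula subdivision-stable. Pendant-path gadgets of a fixed length are the natural encoding of labels, but a subdivision can lengthen them, so "length" is not directly MSO-definable in a subdivision-robust way; I would instead encode labels by small \emph{planar branch-vertex gadgets} of distinct fixed isomorphism types (e.g.\ a triangle vs.\ a triangle-with-a-chord attached by a single branch path), which are distinguishable by MSO on branch vertices and are preserved under subdivision. Verifying that finitely many such planar degree-$\{1,3\}$ gadgets exist, that they are pairwise MSO-distinguishable after relativization, and that attaching them preserves planarity and $\{1,3\}$-regularity, is the delicate part; everything else (the crossing and degree-reduction gadgets, the relativization of quantifiers) follows the blueprint of~\cite{ganian2014lower,ganian2010there} and Courcelle-style MSO manipulations.
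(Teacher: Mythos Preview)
Your plan is essentially correct and matches the paper's approach: start from the alternating coloring problem of~\cite{ganian2014lower,ganian2010there}, restrict to planar $\{1,3\}$-regular inputs via the gadget constructions shown there, add a gadget-based encoding to eliminate vertex labels, and obtain subdivision-invariance by relativizing all quantifiers and atomic tests to branch vertices (degree $\neq 2$) and branch paths. The paper defers the full details to~\cite{thesis}, but its in-text sketch names exactly these ingredients, and your identification of the label-removal step as the delicate one (because the gadgets must remain MSO-distinguishable after arbitrary subdivision while preserving planarity and $\{1,3\}$-regularity) is on the mark.
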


The rest of the proof of Theorem~\ref{thm:dichotomynp} proceeds similarly
as that of Theorem~\ref{thm:dichotomy}.

\paragraph*{Hypotheses}

Theorem~\ref{thm:dichotomynp} relies crucially on the class~$\calI$ being
\emph{sub\-instance-closed}. Otherwise, considering the class $\calI$ of cliques
of a single binary relation $E$, this class is clearly unbounded-treewidth and
treewidth-constructible, yet it has bounded clique-width so MSO query evaluation
has linear data complexity on this class~\cite{courcelle2000linear}.

Further, the hypothesis of \emph{treewidth-constructibility} is also crucial. Without
this assumption, Proposition~32 of~\cite{makowsky2003tree} shows the existence of graph families
of unbounded treewidth which are subinstance-closed yet for which MSO
query evaluation is in PTIME.

\subsection{Alternate Formulation}\label{sec:alternate}

We now give an alternative phrasing of Theorem~\ref{thm:dichotomynp} which
connects it to the existing results of~\cite{kreutzer2010lower,ganian2014lower}.
Table~\ref{tab:related} tersely summarizes their results in comparison to our
own results and other related results.
As
\cite{kreutzer2010lower,ganian2014lower}
are phrased in
terms of graphs, and not arbitrary arity-2 relational
instances, we do so as well in this subsection. Before stating our result, we
summarize these earlier works to explain how our work relates to them.

\cite{kreutzer2010lower,ganian2014lower} 
show the intractability of MSO on any
subgraph-closed unbounded-treewidth families of graphs, under
finer notions than
our \emph{treewidth-constructibility}. Kreutzer and Tazari~\cite{kreutzer2010lower} proposed the notion of families of graphs
with treewidth \emph{strongly
unbounded poly-logarithmically} and showed that MSO$_2$
(MSO with quantifications over both vertex- and edge-sets) over any such
graph families is not fixed-parameter tractable in a strong sense (it is
not in XP), unless the exponential-time hypothesis (ETH) fails.
Ganian et al.~\cite{ganian2014lower} proved a related result, introducing
the weaker notion of \emph{densely unbounded poly-logarithmically}
but requiring graph families to be closed under \emph{vertex
relabeling}; in such a setting, Theorem~4.1 of~\cite{ganian2014lower} shows
that MSO (with vertex labels) cannot be fixed-parameter quasi-polynomial unless the
\emph{non-uniform} exponential-time hypothesis fails.

These two results of~\cite{kreutzer2010lower} and~\cite{ganian2014lower} are incomparable:
\cite{kreutzer2010lower} requires a stronger unboundedness notion
(strongly unbounded vs densely unbounded) and a stronger query language
(MSO$_2$ vs MSO), but it does not require vertex relabeling, and makes
a weaker complexity theory assumption (ETH vs
non-uniform ETH).
See the Introduction of~\cite{ganian2014lower} for a detailed comparison.

Our Theorem~\ref{thm:dichotomynp} uses MSO and no vertex labeling, but it requires \emph{treewidth-constructibility}, which is
stronger than
densely/strongly poly-logarithmic unboundedness: strongly
unboundedness only requires constructibility in $o(2^n)$ and densely
unboundedness does not require constructibility at all.
The advantage of treewidth-constructibility is that we were able to show
\emph{hardness} of our problem (under RP
reductions), without making \emph{any} complexity assumptions.
However, if we make the same complexity-theoretic hypotheses
as~\cite{ganian2014lower}, we now show that we can phrase our results in a
similar way to theirs, and thus strengthen them.

We accordingly recall the notion of densely poly-logarithmic
unboundedness, i.e., Definition~3.3 of~\cite{ganian2014lower}:
\begin{definition}
  A graph class~$\G$ has treewidth \emph{densely unbounded poly-logarithmically}
  if for all $c>1$, for all $m\in\NN$, there exists a graph $G\in\G$ such that
  $\tw(G)\geq m$ and $|V(G)|<O(2^{m^{1/c}})$.
\end{definition}

\begin{toappendix}
  In this appendix, we give the complete proof of
  Theorem~\ref{thm:ganianvariant}. The complete proofs of the other results of
  this section are given in Chapter~6 of~\cite{thesis}. We repeat
  that~\cite{thesis} also gives a proof of Theorem~\ref{thm:ganianvariant}, but
  it is in fact erroneous: specifically, the advice string defined in that proof
  depends on the input and not just on the length of the input.  This is why we
  propose a corrected proof below. Let us repeat the claim of
  Theorem~\ref{thm:ganianvariant}:
\end{toappendix}

We now state our intractability result on densely unbounded
poly-logarithmically graph classes. It is identical to Theorem~5.5 of~\cite{ganian2014lower} but applies to arbitrary MSO
formulae, without a need for vertex relabeling: in the result, $\mathrm{PH}$ denotes
the polynomial hierarchy. 
This result answers
Conjecture~8.3 of~\cite{grohe2008logic} (as we pointed out in the Introduction).
\begin{theoremrep}\label{thm:ganianvariant}
  Unless $\mathrm{PH}\subseteq\mathrm{DTIME}(2^{o(n)})/\textsc{SubEXP}$,
  there is no graph class $\G$ satisfying all three properties:
  \begin{enumerate}[label=\alph*)]
  \setlength\itemsep{0pt}
  \item $\G$ is closed under taking subgraphs;
  \item the treewidth of $\G$ is densely unbounded poly-logarith\-mi\-cally;
  \item the evaluation problem for any MSO query $q$ on $\G$ is
    quasi-polynomial, i.e., in time $O(n^{\log^d n\times
    f(|q|)})$ for $n=|V(G)|$, an arbitrary constant $d\geq 1$, and some
    computable function~$f$.
  \end{enumerate}
\end{theoremrep}

The proof technique is similar to that of~\cite{ganian2014lower} up to using the newer results
of~\cite{chekuri2014polynomial}. It is immediate that an analogous result holds for
probability query evaluation, 
as standard query evaluation obviously reduces to it
(take the probability valuation giving probability~$1$ to each fact).

\begin{toappendix}
  Recall that, in the statement of this result, $\mathrm{PH}$ refers to the polynomial
hierarchy, and $\textsc{SubEXP}$ refers to subexponential advice strings, i.e., the result
holds unless all problems in the polynomial-time hierarchy can be solved in
subexponential time with subexponential advice.

Informally, the advice that we will give for each input size
consists of a sufficiently large wall graph together with a topological
embedding of this wall graph in a graph of the family~$\calG$. Thanks to this
advice, we can find a topological embedding of arbitrary degree-3 planar graphs
inside the family and use it for our reduction.

  Our revised proof relies on Lemmas~\ref{lem:extract}
  and~\ref{lem:hardph}, which are proved in~\cite{thesis}, respectively as
  Lemma~6.1.4 and Lemma~6.2.2.
It also relies on the following fact, which is intuitively necessary
to ensure that arbitrary degree-3 planar graphs can be efficiently embedded into
walls as part of a polynomial-time reduction:

\begin{proposition}
  \label{prp:embed}
  Given as input a degree-3 planar graph~$G$ with~$n$ vertices, we can compute in polynomial time in~$G$
  a wall graph $W$ of size $O(n^4)$ by $O(n^4)$ and a topological embedding of~$G$ in~$W$.
\end{proposition}

  In this proposition, \emph{wall graphs}~\cite{dragan2011spanners} are a
  specific family of degree-3 planar graphs having high treewidth.
Proposition~\ref{prp:embed} follows from linear-time algorithms to topologically embed
planar graphs with maximal degree~$4$ into grids~\cite{tamassia1989planar}; we
explain this in more detail in a note~\cite{amarilli2023degree}
which also shows how to derive the result
from the linear-time algorithms
of~\cite{schnyder1990embedding,chrobak1995linear} to draw planar graphs
on a grid with integer coordinates.

We are now ready to prove the result:

\begin{proof}[Proof of Theorem~\ref{thm:ganianvariant}]
  Let us assume that there exists a graph class $\calG$ satisfying all three
  properties.
  Let $\calP$ be a problem in the polynomial hierarchy $\mathrm{PH}$, say in $\sp_i$,
  and let us show that we can solve $\calP$ in subexponential time with
  subexponential advice. Specifically, for any $0 < \epsilon < 1$, let us present an
  algorithm solving $\calP$ on instances of size $m$ with running time
  $O\left(2^{m^\epsilon)}\right)$ and advice of size
  $O\left(2^{m^\epsilon}\right)$.

  Let $F$ be a polynomial-time reduction from $\calP$ to
  the problem~$\psi_i$ of Lemma~\ref{lem:hardph}, which is hard for~$\sp_i$. 
  Let $l$ be the degree of this polynomial-time reduction~$F$; we can take $F$
  to assume without loss of generality that $l \geq 1$.
  Let $\alpha$ be the constant~$c$ given by Lemma~\ref{lem:extract}; again we can
  assume without loss of generality that $\alpha \geq 1$.
  Let $d$ be given by our assumption on~$\calG$ in the third point of the
  statement of Theorem~\ref{thm:ganianvariant}.
  Let us accordingly fix $c \defeq \frac{\epsilon}{8 \cdot 10 \cdot l \cdot
  \alpha \cdot (d+2)}$.

For each size~$m \in \NN$ of an instance of the problem $\calP$, let us define
  the advice for length~$m$. To do this, we use the fact that $\calG$ has
  treewidth densely unbounded poly-logarithmically. As $c < 1$, 
  we know that $\calG$
  contains a graph
  $G_m \in \calG$
  whose treewidth is
  $\geq m^{10 \cdot l \cdot \alpha}$
  and whose number of
vertices is in
  $O\left(2^{(m^{10 \cdot l \cdot \alpha})^{c}}\right)$, that is, in 
  $O\left(2^{m^{\epsilon/8(d+2)}}\right)$. Note that, as we are defining advice, the complexity of
  computing the graph~$G_m$ is irrelevant.

  Let us now consider the wall graph $W_l$ of size $m^{5l}$ by $m^{5l}$: it has exactly $m^{10l}$ vertices, and
  by Lemma~\ref{lem:extract} it is a topological minor of~$G_m$, so there is an
  embedding $(f_m, g_m)$ of~$W_l$ in~$G_m$; again the complexity
  of computing this embedding is irrelevant. This allows us to define the
  advice function: it maps $m$ to the triple $(G_m, f_m, g_m)$. The size of
  the advice as a function of~$m$ can be bounded using the fact that $G_m$ has
  $O\left(2^{m^{\epsilon/8 (d+2)}}\right)$ vertices,
  hence $O\left(2^{m^{\epsilon/4(d+2)}}\right)$ edges, and so the functions
  $f_m$ and $g_m$ can be represented with size
  $O\left(2^{m^{\epsilon/2(d+2)}}\right)$.
  Thus, the advice for input size $m$ is indeed in $O\left(2^{m^\epsilon}\right)$.

Now, let us explain how to decide the problem~$\calP$ on an input $x$ of size~$m$, 
  using the advice string $(G_m, f_m, g_m)$.
  We first run the polynomial-time reduction~$F$ on the input $x$
  to obtain a degree-3 planar graph $H \defeq F(x)$: 
  as $l$ is the degree of~$F$, the number of vertices and edges of~$H$ is in
  $O(m^l)$, and the answer of $\calP$ on~$x$ can be obtained by evaluating the
  MSO formula $\psi_i$ on~$H$. We then use
Proposition~\ref{prp:embed}: we can compute in polynomial time in~$H$ an embedding
of~$H$ into a wall graph of size $O(m^{4l})$ by $O(m^{4l})$. Now, for 
large enough~$m$, this is less than $m^{5l}$ by $m^{5l}$. As we can hardcode the
answer of the problem on small inputs, we assume that $m$ is indeed sufficiently
  large. Thus, we have obtained in polynomial time a topological embedding 
of~$H$ into the wall graph $W_l$ of size~$m^{5l}$ by $m^{5l}$.

  Combining this embedding with the embeddings $f_m$ and $g_m$ in the advice
  into the graph~$G_m$ also given in the advice,
  we obtain a topological embedding of~$H$ in~$G_m$. Thus, letting
  $G_m'$ be the image of this embedding, we know that $G_m'$ is also a graph
  of~$\calG$ (because $\calG$ is closed under taking subgraphs), and it is a
  subdivision of the input graph~$H$. Further, as
  $\psi_i$ is insensitive to subdivisions (as is ensured by
  Lemma~\ref{lem:hardph}), we know that we can evaluate $\psi_i$ on~$H$ by
  evaluating it on~$G_m'$. Now, as $G_m'$ is in $\calG$, by the third hypothesis
  on~$\calG$ in the statement of Theorem~\ref{thm:ganianvariant}, we know
  that, for some constant $K \defeq f(|\psi_i|)$, the evaluation problem
  for $\psi_i$ on~$G_m'$ runs in time $O\left(n^{\log^d n
  \times K}\right)$ for $n$ the number of vertices of~$G_m'$, i.e., in time
  $2^{O(\log^{d+1} n)}$. Now, we know that $n$ is in $O\left(2^{m^{\epsilon/8
  (d+2)}}\right)$, so the running time is in
  $2^{O\left(m^{\epsilon(d+1)/(8(d+2))}\right)}$. As $(d+1)/(8(d+2)) < 1$, the
  running time is indeed in $O\left(2^{m^\epsilon}\right)$.

  Thus, we have solved the problem $\calP$ with subexponential advice and
  subexponential running time. This violates the hypothesis, and concludes the proof.
\end{proof}

\end{toappendix}

\begin{table}
  \caption{Bounds for lineage representations (including intractable ones)}
  \label{tab:provenance}
  {
    \scriptsize
    \renewcommand{\tabcolsep}{5pt}
  \begin{tabularx}{\linewidth}{lllllrl}
    \multicolumn{7}{c}{\textbf{\footnotesize Upper bounds
(Section~\ref{sec:lineages}): computation bounds imply size bounds}}\\
    \toprule
    {\bfseries Instance} & {\bfseries Queries} & {\bfseries Representation} &
    {\bfseries Note} & 
    {\bfseries Time} &
\multicolumn{2}{l}{\bfseries Source} \\
    \midrule
    bounded-pw & MSO & OBDD &  $O(1)$ width &$O(n)$ &
    \bfseries here& \bfseries Thm.\ \ref{thm:pathwidth}\\
    bounded-pw & (monotone) MSO& (monotone) circuit & bounded-pw&$O(n)$ &
    \bfseries here&\bfseries Prop.~\ref{prp:makecircuitpath}\\
     bounded-tw & MSO & OBDD && $O(\text{Poly}(n))$ &
    \bfseries here&\bfseries Thm.\ \ref{thm:getobdd}\\
    bounded-tw & (monotone) MSO&  (monotone) circuit & bounded-tw&$O(n)$
    & \cite{amarilli2015provenance} & Thm. 4.2\\
    bounded-tw & MSO & d-DNNF && $O(n)$ & \bfseries here & \bfseries Thm. \ref{thm:makeddnnf}\\
    any & inversion-free UCQ & OBDD & $O(1)$ width & $O(\text{Poly}(n))$ &
    \cite{jha2013knowledge} & Prop. 5 \\
    any & positive relational algebra & monotone formula && $O(\text{Poly}(n))$ &
    \cite{DBLP:journals/jacm/ImielinskiL84}&Thm. 7.1\\
    any & Datalog & monotone circuit &&  $O(\text{Poly}(n))$ &
    \cite{deutch2014circuits} & Thm. 2 \\
    \bottomrule
    ~\\
    \multicolumn{7}{c}{\textbf{\footnotesize Lower bounds (Section~\ref{sec:formulae})}}\\
    \toprule
    {\bfseries Instance} & {\bfseries Queries} & {\bfseries Representation} &&
    {\bfseries Size} & {\bfseries Source}\\
    \midrule
    tree & CQ$^{\neq}$ & formula && $\Omega(n \log \log n)$ &
    \bfseries here &\bfseries Prop.~\ref{prp:provlowercq}\\
    tree & CQ$^{\neq}$ & monotone formula && $\Omega(n \log n)$ &
    \bfseries here &\bfseries Prop.~\ref{prp:provlowercqpos}\\
    tree & MSO & formula && $\Omega(n^2)$ & \bfseries here &\bfseries Prop.~\ref{prp:provlowertree}\\
    any & Datalog & monotone formula && $n^{\Omega(\log n)}$ &
    \cite{deutch2014circuits} & Thm. 1\\
    \bottomrule
  \end{tabularx}
  }
\end{table}

\subsection{Match Counting}\label{sec:counting}
We conclude this section by moving to the problem of \emph{match
counting}, i.e., counting how many assignments satisfy a \emph{non-Boolean
MSO formula}.
Match counting should not be confused with \emph{model counting}
(counting how many subinstances satisfy a Boolean formula) which
is closely related\footnote{The number of models of a
query~$q$ in an instance $I$ of size $n$ is $2^n$ multiplied by the
probability of~$q$ under the probability valuation of $I$ that gives probability
$1/2$ to each fact of~$I$.} to probability evaluation.

To our knowledge, no dichotomy-like result on match counting for MSO queries was known.
This section shows such a result; as in Section~\ref{sec:hardness}, we assume
treewidth-constructibility, closure
under subinstances, and arity-2 signatures.

We define the match counting problem as
follows:

\begin{definition}
  The \emph{counting problem} for an
  MSO formula $q(\mathbf{X})$ (with free second-order variables)
  on an instance family $\I$ is the problem, given an
  instance $I \in \I$, of counting how many vectors $\mathbf{A}$ of domain
  subsets are such that $I$ satisfies $q(\mathbf{A})$.

  The restriction to free second-order variables is without loss of generality,
  as
  free first-order variables can be rewritten to free second-order ones,
  asserting in the formula that they must be interpreted as singletons.
\end{definition}

We show the following dichotomy result:

\begin{theorem}\label{thm:dichotomynpc}
  Let $\sigma$ be an arbitrary arity-2 signature.
  Let $\I$ be a
  subinstance-closed and treewidth-constructible class of $\sigma$-instances.
  The following dichotomy holds:
  \begin{itemize}
  \setlength\itemsep{0pt}
    \item If there exists $k\in\mathbb{N}$ such that 
      $\tw(I)\leq k$
      for every $I\in\I$, then for every MSO query
      $q(\mathbf{X})$ with free second-order variables, the counting problem for
      $q$ on~$\I$ is solvable in ra-linear time.
    \item Otherwise, there is an MSO query $q'_{\h}(X)$ (depending only on $\sigma$,
      not on $\I$) with one free second order variable such that
      the counting problem for
      $q_{\h}'$ on~$\I$ is $\fpsp$-complete under RP reductions.
  \end{itemize}
\end{theorem}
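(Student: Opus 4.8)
The plan is to treat the two cases separately; the bounded-treewidth case is standard, and the unbounded-treewidth case mimics the hardness argument behind Theorem~\ref{thm:dichotomy}. For the first case we would invoke the counting version of Courcelle's theorem~\cite{arnborg1991easy}: for a fixed MSO query $q(\mathbf{X})$ (with free first- or second-order variables) and a fixed treewidth bound, one can count the tuples $\mathbf{A}$ of domain subsets with $I\models q(\mathbf{A})$ in linear time up to arithmetic on numbers of polynomially many bits (the count is at most $2^{O(|\dom(I)|)}$), i.e., in ra-linear time. An alternative route is to encode each free second-order variable as a fresh unary relation interpreted arbitrarily --- which leaves the Gaifman graph, hence the treewidth, unchanged --- build with Theorem~\ref{thm:main} the bounded-treewidth provenance circuit of the resulting Boolean query, and count its models, which is ra-linear on bounded-treewidth circuits.

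For the unbounded-treewidth case, the plan is to reduce, in randomized polynomial time and with a single oracle call, from the problem of counting the matchings of a $3$-regular planar graph, which is \#P-hard~\cite{xia2007computational}; the only real change with respect to Theorem~\ref{thm:dichotomy} is that we would use the free second-order variable of $q'_{\h}$ to encode a matching directly, rather than going through probability valuations. Given a $3$-regular planar graph~$G$, we would first replace it by the graph~$G_2$ obtained by subdividing every edge once, which is planar, of maximum degree~$3$, and of size polynomial in~$|V(G)|$. Using that $\I$ is treewidth-constructible and, in this case, of unbounded treewidth, we would feed $k \defeq |V(G_2)|^c$ (for the constant~$c$ of Lemma~\ref{lem:extract}) written in unary and obtain in polynomial time an instance $I'\in\I$ with $\tw(I')\geq k$ and $|I'|$ polynomial in~$|V(G)|$. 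By Lemma~\ref{lem:extract} we would compute in randomized polynomial time an embedding of~$G_2$ into the Gaifman graph of~$I'$; concatenating, for each edge of~$G$, the two vertex-disjoint paths coming from its two halves in~$G_2$, this yields an embedding of~$G$ in which every edge is realized by an internally disjoint path of length~$\geq 2$. We would then take $I\subseteq I'$ to keep exactly the facts whose Gaifman edge lies on one of these paths: then $I\in\I$ by subinstance-closure, the Gaifman graph of~$I$ is precisely a subdivision~$G'$ of~$G$, each of its ``segments'' (maximal paths joining two vertices of degree~$\neq 2$ through degree-$2$ vertices) has a nonempty interior, and --- since $G$ is $3$-regular and all realized paths have length~$\geq 2$ --- its vertices of degree~$3$ are exactly the branch vertices (the images of~$V(G)$).

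The core of the argument is the construction of $q'_{\h}(X)$, which we would make depend only on~$\sigma$, and not on how the minor was realized, by stating everything uniformly over the binary relations of~$\sigma$. On an instance whose Gaifman graph is such a subdivision~$G'$, the query $q'_{\h}(X)$ should assert that: (a)~$X$ contains no vertex of Gaifman-degree~$\neq 2$; (b)~no two Gaifman-adjacent degree-$2$ vertices are separated by~$X$, so every segment interior lies entirely in or entirely out of~$X$; and (c)~every vertex of Gaifman-degree~$\neq 2$ has at most one Gaifman-neighbor in~$X$, so at most one incident segment is ``selected''. These conditions use only the free monadic variable~$X$ and purely local tests, hence are in MSO (indeed need no second-order quantification). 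One then checks that they are satisfied exactly by the sets $\bigcup_{e\in M}(\text{interior of the segment of }e)$ for $M$ ranging over the matchings of~$G$, and that this is a bijection since segment interiors are nonempty and pairwise disjoint. Hence the oracle on~$I$ returns exactly the number of matchings of~$G$, and all the steps above run in randomized polynomial time. Finally, membership is easy --- for the fixed query $q'_{\h}$ one can test $I\models q'_{\h}(A)$ for a given~$A$ in polynomial time, so the counting problem is even in \#P~$\subseteq\fpsp$ --- which, combined with the \#P-hardness just obtained, yields $\fpsp$-completeness under RP reductions.

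We expect the main obstacle to be the design and correctness of~$q'_{\h}$: it must reconstruct matchings of~$G$ from arbitrary subsets of~$\dom(I)$, invariantly under subdivisions (as Lemma~\ref{lem:extract} controls~$G$ only up to subdivision) and obliviously to which relations of~$\sigma$ happen to carry the extracted graph. Pre-subdividing~$G$ into~$G_2$ to force all segments to be non-trivial, and reading the matching off the selected segment interiors, is precisely what makes such a local, bijective encoding possible --- which in turn keeps the reduction parsimonious, so the oracle's answer needs no post-processing.
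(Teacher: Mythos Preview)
Your argument is correct, and it takes a genuinely different route from the paper's. For the second item, the paper reduces from counting \emph{Hamiltonian cycles} in planar $3$-regular graphs~\cite{LiskiewiczOT03}, expressing the property in MSO on the incidence graph of~$G$; you instead reduce from counting \emph{matchings} in $3$-regular planar graphs (the same \#P-hard source as in Theorem~\ref{thm:dichotomy}), and you encode a matching of~$G$ by the set of interior vertices of the selected segments of the extracted subdivision, which you detect with purely local degree tests. The pre-subdivision step $G\mapsto G_2$, which forces every segment to have a nonempty interior, is exactly what makes this encoding bijective and the reduction parsimonious. Your route buys a strictly simpler hard query: your $q'_{\h}(X)$ is first-order (with one free monadic variable), while the paper's Hamiltonian-cycle query genuinely needs second-order quantification for connectivity. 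In particular, the Boolean closure $\exists X\, q'_{\h}(X)$ of your query is trivially true (take $X=\emptyset$), so its non-probabilistic evaluation is tractable; this resolves the question the paper leaves open right after Theorem~\ref{thm:dichotomynpc} (``We do not know whether we can show a similar result with such a tractable query''). The price is only cosmetic: you rely on subinstance-closure to carve out exactly the facts realizing the embedded paths, which the paper also uses, and on the same minor-extraction machinery (Lemma~\ref{lem:extract}). The upper bound (first item) and the $\fpsp$ membership are handled the same way in both approaches.
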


The first claim is shown in~\cite{arnborg1991easy}.
The proof of the second claim proceeds as for
Theorem~\ref{thm:dichotomy}. We reduce from the problem of counting
Hamiltonian cycles in planar 3-regular graphs~$G$, which is
\#P-hard~\cite{LiskiewiczOT03}, and which we express in MSO on the incidence
graph of~$G$.

Unlike in Theorem~\ref{thm:dichotomy}, the query $q_\h'$ does not
have tractable model checking (as opposed to probability
evaluation). We do not know whether we can show a similar result with such a
tractable query.

\section{Lineage Upper Bounds}
\label{sec:lineages}
From \emph{probability evaluation} in Section~\ref{sec:probability} (and its
non-probabilistic variants in Section~\ref{sec:consequences}),
we now turn to our second problem: the study of \emph{tractable lineage representations}.

Indeed, a common way to achieve tractable probability evaluation
is to represent the \emph{lineage} of queries on input instances
in a \emph{tractable} formalism~\cite{jha2013knowledge}.
This section shows how the tractability of MSO probability evaluation on
bounded-treewidth instances can be explained via lineages: 
Table~\ref{tab:provenance} (upper part) summarizes
the upper bounds that
we prove.

Intuitively, the \emph{lineage} of a query on an instance describes how the query depends on
the facts of the instance. Formally:

\begin{definition}
  The \emph{lineage} of a query $q$ on an instance~$I$ is a Boolean
  function $\phi$ whose variables are the facts of $I$, such that, for any $I'
  \subseteq I$, $I' \models q$ iff the corresponding valuation makes $\phi$
  true.
  If $q$ is monotone, then $\phi$ is a monotone Boolean function, in
  which case it can equivalently be called the
  $\posbool{X}$-provenance~\cite{green2007provenance} of $q$ on~$I$.
\end{definition}

Lineages are related to probability evaluation, because evaluating the
probability of query~$q$ under a probability valuation $\pi$ of instance~$I$
amounts to evaluating the probability of the lineage $\phi$, under the
corresponding
probability valuation on variables. Thus, if we can represent
$\phi$ in a formalism that enjoys tractable probability
computation, then we can tractably evaluate the probability $\pi(q, I)$ of~$q$ on~$I$.

In this section, we show that MSO queries on bounded-treewidth instances admit
tractable lineage representations in many common formalisms: they have
linear-size bounded-treewidth
Boolean circuits (as shown
in~\cite{amarilli2015provenance}), but also have
polynomial-size OBDDs~\cite{bryant1992symbolic,olteanu2008using} (with a
stronger claim for \emph{bounded-pathwidth}), as
well as
linear-size d-DNNFs~\cite{darwiche2001tractable}. Further, as we show, all these lineage
representations
can be efficiently computed.
Note that in all these results, as in~\cite{amarilli2015provenance},
tractability only refers to \emph{data complexity}, with large
constant factors in the query and instance width: we leave to future work the
study of query and instance classes for which lineage computation
enjoys a lower \emph{combined complexity}.
 
After our results on tractable lineage computations in this
section,
we will investigate in the next section whether we can represent the lineage as \emph{Boolean formulae} (such as
read-once formulae~\cite{jha2013knowledge}), and we will show superlinear lower bounds
for them. Section~\ref{sec:meta} will then study in which sense
bounded-treewidth is
\emph{necessary} to obtain tractable lineages.

This section applies to
signatures of arbitrary arity.

\paragraph*{Bounded-treewidth circuits}
We first recall our results from~\cite{amarilli2015provenance} and introduce
a first representation of lineages: \emph{Boolean circuits}, called
\emph{provenance circuits} in~\cite{amarilli2015provenance} and
\emph{expression DAGs} in~\cite{jha2012tractability}:

\begin{definition}
  \label{def:circuit}
  A \emph{lineage circuit} for query $q$ and instance~$I$ is a Boolean
  circuit with input gates and with NOT, OR, and AND internal gates,
  whose inputs are the facts of the database, and which computes the lineage of $q$
  on $I$.
A \emph{monotone} lineage circuit has no NOT
  gate.
  The \emph{treewidth} and \emph{pathwidth} of a lineage circuit are
  that of the circuit's graph.
\end{definition}

As recalled in Theorem~\ref{thm:main}, \cite{amarilli2015provenance} showed that
we can compute (monotone) lineage circuits for (monotone) MSO queries on
bounded-treewidth instances in linear time. Further, these circuits themselves
have \emph{bounded-treewidth}, which is why probability evaluation is
tractable on them,
using message passing algorithms~\cite{lauritzen1988local}. Hence:

\begin{theorem}[(\cite{amarilli2015provenance}, Theorems~4.4 and~5.3)]
  \label{thm:makecircuit}
  For any fixed MSO query $q$ and constant $k \in \NN$, given an input instance
  $I$ of treewidth $\leq k$, we can compute in linear time a bounded-treewidth
  lineage circuit $C$ of $q$ on $I$.

  If $q$ is monotone then we can
  take $C$ to be monotone.
\end{theorem}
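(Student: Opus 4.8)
The plan is to recall the construction behind Theorems~4.4 and~5.3 of~\cite{amarilli2015provenance}, of which the present statement is a restatement; I only sketch the main ideas. The starting point is the classical translation from MSO to tree automata on treelike structures~\cite{courcelle1990monadic,flum2002query}: with $q$ and $k$ fixed, one obtains a finite bottom-up tree automaton $A_q$ over a finite alphabet $\Gamma_k$ such that, for any instance $I$ of treewidth $\leq k$ encoded in the standard way as a $\Gamma_k$-labeled tree $\decode{I}$, we have $I \models q$ iff $A_q$ accepts $\decode{I}$. The size of $A_q$ depends only on $q$ and $k$, and $\decode{I}$ is computable from $I$ in linear time once a width-$k$ tree decomposition is available (which, $k$ being fixed, is itself obtainable in linear time).

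First I would adapt this encoding so that it captures \emph{subinstance} semantics rather than the single instance $I$. The idea is to fix a tree skeleton whose shape is determined by $I$ and its decomposition, and to attach to each fact $F$ of $I$ a dedicated node whose label records whether $F$ is kept or removed, all other labels describing the fixed combinatorial structure of the decomposition. Then each subinstance $I' \subseteq I$ corresponds bijectively to one $\Gamma$-labeling of this fixed skeleton, and one takes $A_q$ to be an automaton, obtained from $q$, that runs on such labelings and accepts precisely those whose corresponding subinstance satisfies $q$. The skeleton has size linear in $|I|$ and has width $\leq k$, while $A_q$ remains a fixed finite object.

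Next I would turn $A_q$ into the lineage circuit by the standard ``provenance of a tree automaton'' product construction. For each node $n$ of the skeleton and each state $s$ of $A_q$, introduce a gate $g_{n,s}$ whose intended meaning is: there is a run of $A_q$ on the subtree rooted at $n$, consistent with the chosen values of the fact-inputs below $n$, that assigns state $s$ to $n$. These gates are defined bottom-up: at the node of a fact $F$, $g_{n,s}$ is an OR, over the transitions $A_q$ may take there, of that transition gated by $F$ or by $\lnot F$ according to the label it expects; at an internal node, $g_{n,s}$ is an OR over applicable transitions of ANDs of the relevant child gates; the output is the OR of $g_{r,s}$ over accepting states $s$ at the root $r$. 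Correctness follows by a routine induction on the skeleton. The circuit has linear size and is computed in linear time, since $A_q$ is fixed and the skeleton is linear; and it has bounded treewidth because its graph is obtained from the bounded-treewidth skeleton by replacing each node with a constant-size gadget indexed by the states of $A_q$, which blows up the treewidth only by a constant factor.

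Finally, for the monotone case the NOT gates must be removed; in the construction above these arise only from the ``$F$ absent'' transitions. Here I would instead run the automaton machinery over the provenance semiring $\posbool{X}$ in the sense of~\cite{green2007provenance,amarilli2015provenance}: for monotone $q$, the lineage is exactly the $\posbool{X}$-provenance, and evaluating $A_q$ over $\posbool{X}$ yields a monotone circuit directly; equivalently, one designs $A_q$ so that fact-inputs are only ever tested positively, monotonicity guaranteeing that removed facts need never be ``observed.'' The one point requiring genuine care — the main obstacle — is ensuring that the enriched skeleton faithfully represents \emph{all} of the exponentially many subinstances of $I$ while staying of width $\leq k$, so that the circuit produced is a bona fide lineage circuit of $q$ on $I$ rather than merely a decision procedure for $I \models q$.
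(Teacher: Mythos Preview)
Your proposal is correct and matches the approach: the paper does not prove this theorem but merely restates it from~\cite{amarilli2015provenance}, and the construction you sketch (tree encoding with per-fact present/absent annotations, then a product of the automaton states with tree nodes yielding gates $g_{n,s}$) is precisely the one from that paper, as is visible in the present paper's proof of Theorem~\ref{thm:makeddnnf}, which revisits the same gates $g_n^q$, $g_n^{q_{\LC},q_{\RC}}$, $g_n^{\i}$, $g_n^{\lnot\i}$. Your handling of the monotone case via $\posbool{X}$-provenance is also the route taken in~\cite{amarilli2015provenance}.
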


We study how to adapt this to other tractable lineage representations.

\paragraph*{OBDDs}
We start by defining \emph{OBDDs}, a common tractable representation of Boolean
functions~\cite{bryant1992symbolic,olteanu2008using}:

\begin{definition}
  An \emph{ordered binary decision diagram} (or \emph{OBDD}) is a rooted
  directed acyclic graph (DAG) whose leaves
  are labeled $0$ or $1$, and whose non-leaf nodes are labeled with a variable
  and have two outgoing edges labeled $0$ and $1$.
  We
  require that there exists a total order $\Pi$ on the variables such that, for every
  path from the root to a leaf, no variable occurs in two different internal
  nodes on the path, and the order in which the variables occur is compatible
  with~$\Pi$.

  An OBDD defines a Boolean function on its variables: each valuation is mapped
  to the value of the leaf reached from the root by following the path given
  by the valuation.

  The \emph{size} of an OBDD is its number of nodes, and its \emph{width} is the
  maximum number of nodes at every \emph{level}, where a level is the set of
  nodes reachable by enumerating all possible values of variables in a prefix
  of~$\Pi$.
\end{definition}

Probability evaluation for OBDDs is tractable~\cite{olteanu2008using}.
Our result is that we can compute
polynomial-size OBDDs for MSO queries on bounded-treewidth instances in PTIME:

\begin{theorem}
  \label{thm:getobdd}
  For any fixed MSO query $q$ and constant $k \in \NN$, there is $c
  \in \NN$ such that,
  given an input instance
  $I$ of treewidth $\leq k$, one can compute in time $O(\card{I}^c)$ an
  OBDD (of size $O(\card{I}^c)$)
  for the lineage of~$q$ on~$I$.
\end{theorem}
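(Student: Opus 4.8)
The plan is to reduce to known results connecting bounded-treewidth Boolean circuits to OBDDs, using the lineage circuit provided by Theorem~\ref{thm:makecircuit} as the starting point. First I would invoke Theorem~\ref{thm:makecircuit}: for the fixed MSO query $q$ and the constant $k$, there is a linear-time algorithm producing a lineage circuit $C$ of $q$ on the input instance $I$, where $C$ has treewidth bounded by a constant $k'$ depending only on $q$ and $k$. Since $C$ computes the lineage of $q$ on $I$, its input gates are exactly the facts of $I$, and any OBDD equivalent to $C$ (as a Boolean function over these inputs) is an OBDD for the lineage of $q$ on $I$. So it suffices to show: from a Boolean circuit $C$ of bounded treewidth with $N$ gates, one can compute in polynomial time an OBDD of polynomial size computing the same Boolean function.

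The core of this step is a known result from knowledge compilation, due to Jha and Suciu~\cite{jha2012tractability} (who call such circuits \emph{expression DAGs}), stating that a Boolean circuit of constant treewidth can be compiled into an OBDD of polynomial size. Concretely, I would take a tree decomposition of $C$ of width $\leq k'$ — which can itself be computed in linear time since $k'$ is constant, e.g.\ by Bodlaender's algorithm — and perform a bottom-up dynamic programming pass along this decomposition. At each bag, the relevant state records, for the constantly many "frontier" gates (those whose value is shared between the subcircuit processed so far and the rest), all consistent combinations of truth values; this yields a variable ordering $\Pi$ (essentially the order in which input gates are "forgotten" by the decomposition) and, level by level, a set of OBDD nodes whose count at each level is bounded by a function of $k'$ times a polynomial in $N$. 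Since $k'$ is constant, the width, and hence the total size, of the resulting OBDD is polynomial in $N = O(\card{I})$, and the whole construction runs in time $O(\card{I}^c)$ for a constant $c$ depending only on $q$ and $k$. Composing the two steps — circuit construction then OBDD compilation — gives the claimed bound.

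The main obstacle I anticipate is bookkeeping the dependence between the treewidth of the circuit, the width of the resulting OBDD, and the choice of variable order: the variable order for the OBDD is \emph{forced} by the tree decomposition of the circuit (one must forget an input gate only once all gates that "use" it downstream have been resolved), so one must check that a tree decomposition of $C$ respecting this property exists and that processing it yields levels of polynomially-bounded size. This is exactly where the result of~\cite{jha2012tractability} does the heavy lifting, so the real content of the proof is (i) observing that Theorem~\ref{thm:makecircuit} supplies a bounded-treewidth circuit whose inputs are precisely the facts of $I$, and (ii) citing the circuit-to-OBDD compilation of~\cite{jha2012tractability} with the treewidth bound $k'$ plugged in. A secondary point to verify is that the lineage circuit from~\cite{amarilli2015provenance} may be non-monotone (it has NOT gates, since $q$ is an arbitrary MSO query), but this is harmless: the compilation of~\cite{jha2012tractability} applies to arbitrary Boolean circuits, not just monotone ones.
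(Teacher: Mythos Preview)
Your overall route---build the bounded-treewidth lineage circuit via Theorem~\ref{thm:makecircuit}, then compile it to an OBDD using~\cite{jha2012tractability}---is exactly the paper's approach. However, you are underestimating one step: Corollary~2.14 of~\cite{jha2012tractability} is only an \emph{existence} bound (a bounded-treewidth circuit has an equivalent OBDD of polynomial width for a specific variable order~$\Pi^R$); it does not directly give a polynomial-time \emph{construction}. The paper flags this explicitly and supplies a separate lemma (Lemma~\ref{lem:makeobddtw}) to close the gap, so ``citing the circuit-to-OBDD compilation of~\cite{jha2012tractability}'' is not quite enough.

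The paper's construction for that lemma is also somewhat different from your sketch. Rather than a single bottom-up DP over the tree decomposition, the paper takes the variable order~$\Pi^R$ from~\cite{jha2012tractability}, builds the OBDD \emph{level by level}, and at each level merges equivalent nodes. The nontrivial subroutine is testing whether two partial valuations~$\nu,\nu'$ of a prefix of~$\Pi^R$ leave equivalent residual functions: the paper does this by forming the XOR of the two partially-evaluated circuits, observing it still has treewidth $O(k')$, and checking satisfiability by message passing on its tree decomposition. Since the OBDD has polynomial width by~\cite{jha2012tractability}, only polynomially many such equivalence tests are needed per level, and each is polynomial-time, giving the overall $O(\card{I}^c)$ bound. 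Your ``frontier-gate DP'' intuition is in the right spirit but is not how the paper actually carries it out; if you want to argue it your way, you would need to make precise why the number of OBDD nodes per level stays polynomial (this is where bounded \emph{tree}width, as opposed to pathwidth, forces polynomial rather than constant width), and how you reduce the OBDD by detecting equivalent states.
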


We show this using Corollary~2.14 of~\cite{jha2012tractability}:
any bounded-treewidth Boolean circuit can be represented by an equivalent OBDD of
polynomial width. We complete this result and show that the OBDD can also be
computed in polynomial time, which clearly implies Theorem~\ref{thm:getobdd}
(using Theorem~\ref{thm:makecircuit}):

\begin{lemmarep}\label{lem:makeobddtw}
  For any $k \in \NN$, there is $c \in \NN$ such that,
  given a Boolean circuit $C$ of treewidth $k$, 
  we can compute an equivalent OBDD in time~$O(\card{C}^c)$.
\end{lemmarep}

\begin{proofsketch}
  The variable order is that of~\cite{jha2012tractability} and can be
  constructed in PTIME. We then show that we can build the polynomial-size OBDD level by
  level, by testing the equivalence of partial valuations in PTIME. We do it
  using message passing, thanks to the tree decomposition of~$C$. As
  in~\cite{jha2012tractability}, $c$ is doubly exponential
  in~$k$.
\end{proofsketch}

\begin{proof}
  We rely on Corollary~2.14 of~\cite{jha2012tractability}: 
  there is a doubly exponential function~$f$ such that,
  for any $k \in \NN$,
  there is $c' \defeq f(k)\in \NN$ 
  such that, for any tree decomposition $T$ of width $\leq k$ of $C$,
  the OBDD $O$ obtained for a certain variable order $\Pi^R$ has
  width~$c'$.
  
  The order $\Pi^R$ on variables is defined following an
  in-order traversal of~$T$ where children are ordered by the number of variables in
  this subtree; clearly this quantity can be computed over the entire tree in
  PTIME, so the order $\Pi^R$ can be computed in PTIME. We show that we can
  construct the OBDD $O$ in PTIME as well, in a level-wise manner inspired by~\cite{jha2013knowledge}.

  Write $\Pi^R = X_1, \ldots, X_n$, and construct $O$ level-by-level in the
  following way. Assuming that we have constructed $O$ up to level $l-1$,
  create two children for each node at level $l-1$ (depending on the value of
  variable $X_l$), and then merge all such children $n$ and $n'$ that are
  \emph{equivalent}. To define this, call \emph{equivalent} two partial
  valuations $\nu$ and $\nu'$ of variables $X_1, \ldots, X_l$ if the Boolean
  function represented by $C$ on the other variables $X_{l+1}, \ldots, X_n$
  under $\nu$ is the same as under $\nu'$. Now, call $n$ and $n'$
  \emph{equivalent} if, for any partial valuation $\nu$ leading to $n$
  (represented by a path from the root of~$O$ to $n$) and any partial valuation
  $\nu'$ leading to $n'$, these two partial valuations of $X_1, \ldots, X_l$ are
  equivalent. As we will always ensure in the construction, any paths leading to
  the parent node of $n$ are equivalent partial valuations of $X_1, \ldots,
  X_{l-1}$, so $n$ and $n'$ are equivalent iff, picking any two valuations $\nu$
  for $n$ and $\nu'$ for $n'$ by following a path from the root to~$n$ and
  to~$n'$ respectively, $\nu$ and $\nu'$ are equivalent.

  Hence, it suffices to show that there is a function $g$ such that we can test
  in time $O(\card{C}^{g(k)})$ whether two partial valuations are equivalent.
  Indeed, we can then build $O$ in the indicated time, because the
  maximal number of node pairs to test at any level of the OBDD is
  $\leq (2\cdot \card{C}^{f(k)})^2$: we had at most $\card{C}^{f(k)}$ at the
  previous level, and each of them creates two children, before we merge the
  equivalent children. Hence, if we can test equivalence in the indicated time,
  then clearly we can construct $O$ in time $O(\card{C}^c)$ for $c \defeq 1 + 1 +
  2\cdot f(k) + g(k)$ (the first term accounts for the linear number of levels,
  and the second term accounts for the linear time required to find a partial
  valuation for a node).

  We thus show that the equivalence of partial valuations can be tested in time
  $O(\card{C}^{g(k)})$ for some function~$g$.
  Considering two partial valuations $\nu$ and $\nu'$ of the same set of
  variables $\calX$, let $C_{\nu}$ and $C_{\nu'}$ be the two circuits obtained from $C$
  by substituting the input gates for $\calX$ with constant gates according to
  $\nu$ and $\nu'$ respectively. Note that $C_{\nu}$ and $C_{\nu'}$ have the same
  set of input gates $\calX'$, formed precisely of the variables not in $\calX$.
  We rename the internal gates of~$C_{\nu'}$ so that the only gates shared
  between $C_{\nu}$ and $C_{\nu'}$ are the input gates $\calX'$.
  Now, $C'$ be the circuit obtained by taking the union of $C_{\nu}$ and
  $C_{\nu'}$ (on the same set of variables), and adding an output gate and a
  constant number of gates such that the output gate is true iff the output
  gates of $C_{\nu}$ and~$C_{\nu'}$ carry different values (this can be done
  with 5 additional gates in total). It is easy to see
  that there is a valuation of $\calX'$ that makes the circuit $C'$ evaluate to
  true iff the partial valuations $\nu$ and $\nu'$ are \emph{not} equivalent. Now,
  observe that we can immediately construct from $T$ a tree decomposition~$T''$ of
  width $\leq 2k + 5$ of $C'$. Indeed, it is obvious that $T$ is a tree
  decomposition of~$C_{\nu}$, and we can rename gates to obtain from $T$ a tree
  decomposition $T'$ of~$C_{\nu'}$, such that $T$ and~$T'$ both have the same
  width $k$ and the same skeleton. Now, construct $T''$ that has same skeleton as
  $T$ and $T'$ where each bag is the union of the corresponding bags of~$T$
  and~$T'$, adding the 5 intermediate gates to each bag.
  The result $T''$ clearly has width $\leq 2k +
  5$ and it is immediate that it is a tree decomposition of~$C'$.

  We can then use message-passing
  techniques~\cite{lauritzen1988local,huang1996inference} to
    determine in time exponential in
  $2k+5$ and polynomial in~$C'$ whether the bounded-treewidth circuit $C'$ has a
  satisfying assignment, from which we deduce whether $\nu$ and $\nu'$ are
  equivalent. For details, see, e.g., Theorem~D.2 of~\cite{amarilli2015provenance}.
\end{proof}

\paragraph*{Bounded-pathwidth}
We have explained the tractability of MSO probability evaluation on bounded-treewidth
instances, showing that we could compute bounded-treewidth lineage circuits for them.
We strengthen these results in the case of
\emph{bounded-pathwidth} instances, showing that we can compute
\emph{constant-width} OBDDs:

\begin{theorem}
  \label{thm:pathwidth}
  For any fixed MSO query $q$ and constant $k \in \NN$, given an input instance
  $I$ of pathwidth $\leq k$, one can compute in polynomial time an
  OBDD of constant width for the lineage of~$q$ on~$I$.
\end{theorem}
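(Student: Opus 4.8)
The plan is to combine two ingredients: the bounded-treewidth lineage circuit of Theorem~\ref{thm:makecircuit}, refined to the bounded-\emph{pathwidth} setting, and then the conversion of a bounded-pathwidth circuit to a constant-width OBDD. For the first ingredient, I would establish a pathwidth-analogue of Theorem~\ref{thm:makecircuit}: given an instance $I$ of pathwidth $\leq k$, one can compute in linear time a lineage circuit $C$ of $q$ on $I$ whose \emph{pathwidth} is bounded by a constant $k'$ depending only on $q$ and $k$. This should follow by re-inspecting the construction of~\cite{amarilli2015provenance}: that construction processes a tree decomposition bottom-up, building circuit gates attached to each bag and wiring gates of a bag to gates of its children only; if the input decomposition is a \emph{path} decomposition, the resulting circuit inherits a path-like structure, so its own optimal decomposition is a path decomposition of bounded width. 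This is essentially Proposition~\ref{prp:makecircuitpath} in Table~\ref{tab:provenance}, which I would invoke.

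For the second ingredient, I would apply the result of~\cite{jha2012tractability} exactly as in Theorem~\ref{thm:getobdd} and Lemma~\ref{lem:makeobddtw}, but now tracking that a bounded-\emph{pathwidth} circuit yields a \emph{constant-width} OBDD. The key point is that the variable order $\Pi^R$ of~\cite{jha2012tractability}, when applied to a path decomposition rather than a tree decomposition, gives an order in which the number of inequivalent partial valuations of any prefix $X_1,\dots,X_l$ is bounded by a constant (rather than polynomial): intuitively, along a path decomposition there is a single ``frontier'' of bounded size separating the already-processed part of the circuit from the rest, so the behaviour of the circuit on the remaining variables is determined by the constantly-many Boolean values carried on that frontier. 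Hence the OBDD has constant width. I would then re-run the level-by-level construction of the proof of Lemma~\ref{lem:makeobddtw}: build the OBDD one variable at a time, at each level doubling the nodes and merging equivalent ones, testing equivalence of partial valuations via message passing on the bounded-width decomposition of the auxiliary circuit $C'$. Since the width stays constant, each level has $O(1)$ nodes, there are $O(\card{I})$ levels, and each equivalence test is polynomial, so the total time is polynomial (indeed the construction is essentially linear up to the arithmetic/equivalence-testing overhead), and the output OBDD has constant width and hence $O(\card{I})$ size.

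The main obstacle I anticipate is making precise the claim that a path decomposition feeds through the~\cite{jha2012tractability} machinery to yield \emph{constant} rather than polynomial width: \cite{jha2012tractability} is phrased for tree decompositions and the width bound is $f(k)$ for a doubly-exponential $f$, which is already constant in the treewidth, but the subtlety is that for general tree decompositions the \emph{OBDD} width they obtain is polynomial, not constant, and it is only in the path case that it collapses to a constant. So I would need either a direct argument that the number of reachable states of the frontier is bounded — using that a path decomposition has a linear sequence of bags each of size $\leq k'+1$, so that after processing a prefix of the circuit the relevant ``interface'' is a single bag of bounded size, and the residual function depends only on the values threaded through that bag — or a careful citation of the relevant sharpening in~\cite{jha2012tractability} (their bounded-pathwidth case). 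Everything else is a routine adaptation of the already-given proof of Lemma~\ref{lem:makeobddtw}, so I would keep that part terse and concentrate the exposition on the pathwidth-to-constant-width step.
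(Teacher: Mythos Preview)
Your proposal is correct and follows essentially the same approach as the paper: first obtain a bounded-pathwidth lineage circuit via Proposition~\ref{prp:makecircuitpath}, then convert it to a constant-width OBDD using the results of~\cite{jha2012tractability}, with computability handled by rerunning the level-by-level construction of Lemma~\ref{lem:makeobddtw}. Your anticipated obstacle is a non-issue: the paper simply cites Corollary~2.13 of~\cite{jha2012tractability} (the bounded-pathwidth refinement, as opposed to Corollary~2.14 used for treewidth) to get the constant-width bound directly, which is exactly the ``careful citation'' option you mention.
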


To prove the result, we first observe 
(adapting~\cite{amarilli2015provenance})
that we can compute \emph{bounded-pathwidth}
lineage circuits in linear time on bounded-pathwidth instances:

\begin{propositionrep}\label{prp:makecircuitpath}
  For any fixed $k \in \NN$ and (monotone) MSO query $q$, for any
  $\sigma$-instance~$I$ of \emph{pathwidth} $\leq k$, 
  we can construct a (monotone) lineage circuit $C$ of~$q$ on~$I$ in time
  $O(\card{I})$. The \emph{pathwidth} of~$C$ only depends on~$k$ and~$q$ (not
  on~$I$).
\end{propositionrep}

\begin{proof}
  Given a path decomposition of an instance~$I$, which is a tree decomposition
  with a linear tree, the resulting tree encoding~$E$ of~$I$
  (see \cite{amarilli2015provenance,amarilli2015provenanceb})
  is clearly also a
  linear tree.
  From the proof of Theorem~4.4
  of~\cite{amarilli2015provenanceb}, we observe that the lineage circuit that
  we construct has a tree decomposition which can be made to be a path
  decomposition in this case, because it follows the structure of~$E$. 
  Hence, the circuit $C$ has bounded pathwidth.
\end{proof}

By Corollary~2.13 of~\cite{jha2012tractability}, this implies the existence of a constant-width
OBDD representation, which we again show to be computable, proving
Theorem~\ref{thm:pathwidth}.

\begin{lemmarep}\label{lem:makeobddpw}
  For any $k \in \NN$,
    for any Boolean circuit $C$ of pathwidth $\leq k$,
      we can compute in polynomial time in~$C$
        an OBDD equivalent to $C$ whose width depends only on~$k$.
\end{lemmarep}

\begin{proof}
  As in the proof of Lemma~\ref{lem:makeobddtw},
  we can compute in PTIME the order $\Pi^R$ on variables,
  and we can compute the OBDD under this order in the same way.
  This uses the fact
  that a path decomposition of circuit~$C$ is in particular a tree decomposition
  of~$C$.
\end{proof}

\paragraph*{d-DNNFs}
We now turn to the more expressive tractable lineage formalism of
\emph{d-DNNFs}, introduced in~\cite{darwiche2001tractable}; we follow the
definitions of~\cite{jha2013knowledge}:

\begin{definition}
  A \emph{deterministic, decomposable negation normal form}
  (\emph{d-DNNF}) is a Boolean circuit $C$
  that satisfies the following conditions:
  \vspace{-.2cm}\begin{enumerate}
  \setlength\itemsep{0pt}
    \item Negation is only applied to input gates: the input of any NOT gate must
      always be an input gate.
    \item The inputs of AND-gates depend on disjoint sets of input gates.
      Formally, for any AND-gate $g$,
      for any two gates $g_1 \neq g_2$ which are inputs of~$g$,
      there is no input gate $g'$ which is reachable (as a possibly indirect input) from
      both $g_1$ and $g_2$.
    \item The inputs of OR-gates are mutually exclusive.
      Formally, for any OR-gate $g$,
      for any two gates $g_1 \neq g_2$ which are inputs of~$g$,
      there is no valuation of the inputs of~$C$ under which $g_1$ and~$g_2$ both evaluate
      to true.
  \end{enumerate}
\end{definition}

It is tractable to evaluate the probability of a
\mbox{d-DNNF}~\cite{darwiche2001tractable}, and
d-DNNFs capture the tractability of probability evaluation for
many safe queries (see~\cite{jha2013knowledge}). We show that it also explains
the ra-linearity of MSO probability evaluation on bounded-treewidth instances,
as we can construct \emph{linear} d-DNNFs for them:

\begin{theoremrep}\label{thm:makeddnnf}
  For any fixed MSO query $q$ and constant $k \in \NN$, given an input instance
  $I$ of treewidth $\leq k$, one can compute in time $O(\card{I})$ a d-DNNF
  representation of the lineage of~$q$ on~$I$.
\end{theoremrep}

\begin{proofsketch}
  Our construction
  in~\cite{amarilli2015provenance} applies a tree automaton
  for the query to an
  annotated tree encoding of the instance. This yields a bounded-treewidth
  circuit representation of the lineage, in linear time in the instance (but
  with a constant factor that is nonelementary in the query).
  We show that, if the automaton is deterministic, the circuit that we
  obtain is already a d-DNNF. The result follows, as one
  can always make a tree automaton deterministic~\cite{tata}, at the cost of an
  increased constant factor in the data complexity.
\end{proofsketch}

\begin{proof}
  We define a \emph{bottom-up deterministic tree automaton} on alphabet $\Gamma$
  (or $\Gamma$-bDTA) in the standard manner.
  We start by adapting the proof of Proposition~3.1
  of~\cite{amarilli2015provenanceb} to show
  the following result instead: a provenance d-DNNF of a
  \emph{deterministic} $\overline{\Gamma}$-bDTA $A$
  on a $\overline\Gamma$-tree $E$ can be constructed in time $O(\card{A} \cdot
  \card{E})$. We construct the circuit exactly as in the proof of
  Proposition~3.1 of~\cite{amarilli2015provenanceb} and show that it is a d-DNNF.

    First, observe that the only NOT gates that we use are the
      $g_n^{\lnot\i}$,
      which are NOT gates of the $g^{\i}_n$, which are input gates; so we only
      apply negation to leaf nodes.

    Second, we show that the sets of leaves reachable from the children of
      any AND gate are pairwise disjoint. The AND gates that we create and that
      have multiple inputs are:
      \begin{itemize}[leftmargin=9pt]
        \item The $g_n^{q_{\LC}, q_{\RC}}$, which are the AND of
          $g^{q_{\LC}}_{\LCf(n)}$ and $g^{q_{\LC}}_{\RCf(n)}$; now, $g^{q_{\LC}}_{\LCf(n)}$ only
          depends on the input gates $g^{\i}_{n'}$ for nodes $n'$ of the subtree
          of~$E$ rooted at $\LCf(n)$, and likewise $g^{q_{\LC}}_{\RCf(n)}$ only depends
          on input gates in the right subtree;
        \item The $g_n^{q_{\LC}, q_{\RC}, \i}$, which are the AND of $g_n^{q_{\LC},
          q_{\RC}}$ and $g^{\i}_n$; now, the $g_n^{q_{\LC}, q_{\RC}}$ do not depend
          on $g^{\i}_n$, only on input gates $g^{\i}_{n'}$ for $n'$ a strict
          descendant of $n$ in~$E$;
        \item The $g_n^{q_{\LC}, q_{\RC}, \lnot\i}$, which are the AND of $g_n^{q_{\LC},
          q_{\RC}}$ and $g^{\lnot\i}_n$; now, the $g_n^{q_{\LC}, q_{\RC}}$ do not depend
          on the sole input gate under $g^{\lnot\i}_n$, i.e.,
          $g^{\i}_n$, but only on input gates $g^{\\i}_{n'}$ for $n'$ a strict
          descendant of $n$ in~$E$.
      \end{itemize}
    Third, we show that the children of any OR gate are mutually
      exclusive. The OR gates that we create and that have multiple
      inputs are the following:
      \begin{itemize}[leftmargin=9pt]
        \item The $g^q_n$ when $n$ is a leaf node of~$E$, for which the claim is
          immediate, as the only two possible children are $g^{\i}_n$ and
          $g^{\neg \i}_n$ which are clearly mutually exclusive.
        \item The $g^q_n$ when $n$ is an internal node of~$E$, which are the OR of gates of the form
          $g^{q_{\LC}, q_{\RC},\i}_n$ or $g^{q_{\LC}, q_{\RC},\lnot\i}_n$ over
          several pairs $q_{\LC}, q_{\RC}$.
          
          To observe that these gates are
          mutually exclusive, remember that, for a valuation $\nu$ of the tree
          $E$, the gate $g^q_{n'}$ is true iff there is a run~$\rho$ of $A$ on
          the subtree of~$\nu(E)$ rooted at~$n'$ such that $\rho(n') = q$.
          However, as
          $A$ is deterministic, for each $n'$, there is at most one state~$q$
          for which this is possible.
          Hence, for any valuation $\nu'$ of the circuit $C$, for our
          node $n$, there is at most one $q_{\LC}'$ such that
          $g^{q_{\LC}'}_{\LCf(n)}$ is true under valuation $\nu'$, and only at
          most one $q_{\RC}'$ such that
          $g^{q_{\RC}'}_{\RCf(n)}$ is true under $\nu'$. Hence, by definition of the
          $g^{q_{\LC}, q_{\RC}}_n$, there is at most one of them which can
          be true under valuation $\nu'$, namely, $g^{q_{\LC}',
          q_{\RC}'}_n$, which also means that only the gate $g^{q_{\LC}',
          q_{\RC}',\i}_n$ and the gate $g^{q_{\LC}',
          q_{\RC}',\lnot\i}_n$ can be true under $\nu'$. But these two
          gates are clearly mutually exclusive (only one can evaluate to true,
          depending on the value of~$\nu(n)$), which proves the
          claim.
        \item The output gate $g_0$ which is the OR of gates of the form $g^q_r$ for $r$ the
          root node of $E$. Again, as $A$ is deterministic, for
          any valuation~$\nu'$ of~$C$, letting $\nu$ be the corresponding
          valuation of the $\overline{\Gamma}$-tree $E$, there is only one state $q'$
          such that $A$ has a run $\rho$ on~$E$ with $\rho(r) = q'$, so at most
          one state $q'$ such that $g^{q'}_r$ is true under $\nu$.
      \end{itemize}
    Hence, the circuit constructed in the proof of Proposition~3.1
    of~\cite{amarilli2015provenance} is a d-DNNF representation of the lineage
    of the automaton on the tree which has linear size.

    We now adapt the proof of Theorem~4.2
    of~\cite{amarilli2015provenanceb}. The
    theorem proceeds by constructing a bNTA for the
    query~$q$~\cite{courcelle1990monadic}
    on the alphabet $\Gamma_\sigma^k$ and modifying it to obtain a bNTA $A'$ on
    $\overline{\Gamma_\sigma^k}$. We now additionally convert $A'$ to a bDTA $A''$ on
    the same alphabet, which we can do using standard
    techniques~\cite{tata}. All of this is performed independently of the
    instance.

    Now, we conclude using the rest of the proof of Theorem~4.2
    of~\cite{amarilli2015provenance}. The resulting circuit $C'$
    is the result of (bijectively) renaming the input gates, and replacing some
    input gates by constant gates, on the circuit $C$ produced by
    Proposition~3.1 of~\cite{amarilli2015provenance}. However, by our previous
    observation, $C$ is actually a d-DNNF circuit, so $C'$ also is (up to
    evaluating negations of constant gates as constant gates). Hence, we
    have produced the desired d-DNNF, which by the statement of Theorem~4.2
    of~\cite{amarilli2015provenance} is
    of linear size and is computed in linear time.
\end{proof}

\section{Formula Lower Bounds}
\label{sec:formulae}
We have shown that MSO queries on bounded-treewidth instances have tractable
lineages, and even linear-sized ones (bounded-treewidth circuits and d-DNNFs).
All
lineage representations that we have studied, however, are based on DAGs (circuits
or OBDDs).
In this section, we study whether we could obtain linear lineage
representations as \emph{Boolean formulae}, e.g., as \emph{read-once
formulae}~\cite{jha2013knowledge}.

This question is natural because
existing work on probabilistic data seldom represents query lineages as
Boolean circuits: they tend to use Boolean formulae, or other representations such as
OBDDs, FBDDs and d-DNNFs~\cite{jha2013knowledge}. Further, 
most prior works on provenance  focus on formula representations of provenance
(with the notable exception of~\cite{deutch2014circuits}, see below).

Intuitively, an important difference between formula and circuit representations
is that circuits can share common subformulae whereas formulae cannot. We
show in this section that this difference matters: formula-based representations of
the lineage of MSO queries on bounded-treewidth instances \emph{cannot} be
linear in general, because of superlinear lower bounds.
More specifically, we show that formula-based representations are
\emph{superlinear} even for some
\cqneq queries, and that they are \emph{quadratic} for some MSO queries.

A similar
result was already known for lineage representations
(\cite{deutch2014circuits}, Theorem~1), which showed that circuit representations of
provenance can be more concise than formulae; but this result applies to
\emph{arbitrary} instances, not bounded-treewidth ones. Hence, this section also
sheds additional light on the compactness gain offered by the recent \emph{circuit}
representations of provenance in~\cite{deutch2014circuits}.

Our results in this section rely on
classical lower bounds on the size of
formulae expressing certain Boolean
functions~\cite{wegener1987complexity}. They apply to signatures of arbitrary arity.
We summarize them in the lower part
of Table~\ref{tab:provenance}.

\paragraph*{\cqneq queries}
We first show a mild conciseness gap for the comparatively simple language of
\cqneq.
Formally, we exhibit
a \cqneq query whose lineage cannot be represented by a
linear-size formula, even on treelike instance families. By contrast,
from the previous section,
we know that its lineage has linear-size circuit representations.

\begin{propositionrep}\label{prp:provlowercq}
  There is a
  \cqneq query~$q$, and a family~$\I$ 
  of relational instances with
  treewidth~$0$, such that,
  for any $I \in \I$, for any Boolean formula $\psi$ capturing
  the lineage of~$q$ on~$I$,
  we have
  $\card{\psi} = \Omega(\card{I} \log \log
  \card{I})$.
\end{propositionrep}

\begin{proofsketch}
  We show we can express the threshold function over $n$ variables, for
  which \cite{wegener1987complexity} gives a lower bound on the formula
  size.
\end{proofsketch}

\begin{proof}
  Consider the signature with a single unary predicate $R$, and consider the
  \cqneq $q: \exists x y \, R(x) \wedge R(y) \wedge x \neq y$. Consider
  the family of instances~$\I$ defined as
  $\{R(a_1), \ldots, R(a_n)\}$ for all $n \in \NN$.
  Clearly, for any $I \in \I$, the lineage of~$q$ on~$I$ is the threshold
  function checking whether at least two of its inputs are true. 
 
  By Chapter~8, Theorem~5.2 of~\cite{wegener1987complexity}, 
  any formula using
  $\land$, $\lor$, $\lnot$ expressing
  the threshold function over $n$ variables has size 
  $\Omega(n\log\log n)$, the desired bound.
\end{proof}

As \cqneq{} queries are monotone, we can also ask for \emph{monotone}
lineage representations. From the previous section, we still have
linear representations of the lineage as a \emph{monotone} circuit. By contrast,
if we restrict to \emph{monotone} Boolean formulae, we obtain an improved lower bound:

\begin{propositionrep}\label{prp:provlowercqpos}
  There is a
  \cqneq query~$q$,
  and a family~$\I$
  of relational instances with
  treewidth~$0$, such that,
  for any $I \in \I$, for any \emph{monotone} Boolean formula $\psi$ capturing
  the lineage of~$q$ on~$I$, we have
  $\card{\psi} = \Omega(\card{I} \log 
  \card{I})$.
\end{propositionrep}

\begin{proof}
  We use the same proof as for Proposition~\ref{prp:provlowercq} but
  relying on~\cite{hansel1964nombre} (also
  Chapter~8, Theorem~1.2 of~\cite{wegener1987complexity}), which shows that, for $\psi$ built over the
  monotone basis $\land$ and $\lor$, we have
  $\card{\psi}=\Omega(n\log n)$.
\end{proof}

\paragraph*{MSO queries}
For the more expressive language of MSO queries, we show a wider gap: there is a
query for which formula-based lineage representations must be \emph{quadratic},
whereas we know that there are linear circuit representations of the lineage:

\begin{propositionrep}\label{prp:provlowertree}
  There is 
  an MSO query~$q$,
  and a family~$\I$ 
  of relational instances with
  treewidth~$1$, such that,
  for any $I \in \I$, for any Boolean formula $\psi$ capturing the lineage
  of~$q$ on~$I$,
  we have
  $\card{\psi} = \Omega(\card{I}^2)$.
\end{propositionrep}

\begin{proofsketch}
The proof is more subtle and constructs an MSO formula expressing the parity of
the number of facts of a unary predicate, using a second auxiliary relation.
The lower bound for parity comes again
from~\cite{wegener1987complexity}.
\end{proofsketch}

We leave open the question of whether this bound can be further improved, but we
note that even an $\Omega(\card{I}^3)$ lower bound would require new
developments in the study of Boolean formulae. Indeed, the best currently
known lower bound on formula   size, for \emph{any} explicit 
function of~$n$ variables with linear
circuits, is in $\Omega(n^{3-o(1)})$~\cite{hastad1998shrinkage}.

\begin{proof}
  Consider the signature $\sigma$ with a unary predicate $L$ and binary
  predicate $E$.
  We define the family $\I=(I_n)$ with $I_n$ having domain $\{a_1, \ldots, a_n\}$ and
  facts $L(a_i)$ for each $1 \leq i \leq n$ and $E(a_i, a_{i+1})$ for each
  $1\leq i < n$. Clearly, all instances in~$\I$ have treewidth~$1$.
  We consider the MSO formula $q$ that intuitively uses the $E$-facts to test whether the
  number of $L$-facts is odd. Formally, we define $q$ as follows, inspired by
  the definition of an automaton:
  \begin{align*}
    q\defeq \forall X_{\false} X_{\true} ~ & \text{Part}(X_{\false}, X_{\true}) \wedge
    \text{Tr}(X_{\false}, X_{\true}) \wedge
    \text{Init}(X_{\false}, X_{\true})\\
    & \quad\Rightarrow
    \forall x ~ (\neg \exists y ~ E(y, x) \Rightarrow x \in X_{\true})
  \end{align*}
  where $\text{Part}(X_{\false}, X_{\true})$ asserts that $X_{\false}$ and
  $X_{\true}$ partition the domain (where $\oplus$ denotes exclusive OR):
  \[
    \text{Part}(X_{\false}, X_{\true}) \defeq \forall x ~ (x \in X_{\false}) \oplus
    (x \in X_{\true})
  \]
  $\text{Tr}(X_{\false}, X_{\true})$ is the conjunction of the following transition
  rules, for each $b \neq b'$ in $\{\false, \true\}$:
  \begin{align*}
    \forall x y ~ E(x, y) \wedge y \in X_{b} \wedge L(x) & \Rightarrow x \in
    X_{b'}\\
    \forall x y ~ E(x, y) \wedge y \in X_{b} \wedge \neg L(x) & \Rightarrow x
    \in X_{b}
  \end{align*}
  and $\text{Init}(X_{\false}, X_{\true})$ asserts the initial states:
  \begin{align*}
    \forall x ~ (\neg \exists y ~ E(x, y)) \wedge \neg L(x) & \Rightarrow x \in
    X_{\false}\\
    \forall x ~ (\neg \exists y ~ E(x, y)) \wedge L(x) & \Rightarrow x \in
    X_{\true}
  \end{align*}
  Intuitively, on any possible world of~$I_n$ where all $E$-facts are present,
  $q$ is true whenever the number of $L$-facts is odd. Indeed, it is clear that
  there is a unique choice of $X_{\false}$ and $X_{\true}$ in such worlds, defined by putting $a_n$
  in $X_{\true}$
  or $X_{\false}$ depending on whether $L(a_n)$ holds, and, for $1 \leq i < n$,
  letting $b$ such that $a_{i+1} \in X_b$ and $b'$ be~$\true$ or~$\false$
  depending on whether $L(a_i)$ holds or not, putting $a_i$ in $X_{b \oplus b'}$. Hence, in worlds containing all
  $E$-facts, there is a unique choice of
  $X_{\false}$
  and $X_{\true}$ where we have $a_i \in X_{\true}$ iff the number of facts $L(a_j)$ with $i
  \leq j \leq n$ is odd. Hence, $q$ is satisfied iff, in this unique assignment,
  $a_1$ (the only node with no incoming $E$-edge) is in $X_{\true}$, that is, if the
  overall number of $L$-facts is odd.

  Hence, pick $I \in \I$ and let $\psi$ be a formula representation of
  the lineage of~$q$ on~$I$.
  Replacing the input gates for the $E$-facts by constant $1$-gates, we obtain a
  formula of the same size that computes the parity function of the inputs
  corresponding to the $L$-gates, the number of which is $\lceil \card{I}/2
  \rceil$.
  
  Now, by Theorem 8.2, Chapter~8 of
  \cite{wegener1987complexity}, any formula using
  $\land$, $\lor$, $\lnot$ expressing
  the parity function over $n$ variables has a number of variable
  occurrences that is at least~$n^2$. Hence, we deduce that $\card{\psi} =
  \Omega(\card{I}^2)$, as claimed.
\end{proof}

\section{OBDD Size Bounds}\label{sec:meta}
We have shown in the previous sections that MSO queries on bounded-treewidth
instances have tractable lineage representations as circuits and OBDDs.
This section focuses on OBDDs and shows our \emph{second main dichotomy result}:
bounded-treewidth is necessary for MSO query lineages to have polynomial OBDDs.

We first state this result in Section~\ref{sec:ucqneq-dichotomy}. Its upper
bound is Theorem~\ref{thm:getobdd}, and its lower bound
applies to a specific \ucqneq $q_{\p}$ (which only depends on the signature).
We show that $q_{\p}$ has no polynomial-width
OBDDs on \emph{any} arity-2 instance family with treewidth densely unbounded
poly-logarithmically. This second dichotomy result thus shows that 
bounded-treewidth is necessary for some \ucqneq queries to have tractable OBDDs;
it applies to a more restricted class than the FO query of
our first main dichotomy result (Theorem~\ref{thm:dichotomy}),
but applies to a different task (the computation of OBDD lineages, rather than
probability evaluation).

We then study in Section~\ref{sec:connected-ucqneq-meta-dichotomy} the 
language of \emph{connected} \ucqneq. For this language, we show that queries
can be classified in a \emph{meta-dichotomy} result: we characterize the
\emph{intricate} queries, such as $q_{\p}$, which have no polynomial OBDDs on any unbounded
treewidth family in the sense above; and we show that non-intricate
queries actually have \emph{constant-width} OBDDs on some well-chosen
unbounded-treewidth instance family. Hence, if a connected \ucqneq has
polynomial OBDDs on some unbounded-treewidth instance family, then it must have
constant-width OBDDs on some other such family.

Finally, we investigate in Section~\ref{sec:more-restricted-classes} whether our second
dichotomy result holds for more restricted fragments than \ucqneq.
First, we show that connected \cqneq queries are never intricate, so we cannot
show our dichotomy result with such queries. Second, we show the same for
connected \ucq; in fact, we show that no query \emph{closed under homomorphisms}
could be used.
We last show that our meta-dichotomy fails for disconnected queries.

As in Sections~\ref{sec:probability} and~\ref{sec:consequences},
we limit ourselves to arity-2 signatures in this section.

\subsection{A Dichotomy on OBDD Size}\label{sec:ucqneq-dichotomy}
This section shows that our Theorem~\ref{thm:getobdd} on the existence of
tractable lineage representations as OBDDs is unlikely to extend to milder conditions than
bounded-treewidth. Indeed, there are even \ucqneq queries that have no
polynomial-width OBDDs on any unbounded-treewidth input instance
with treewidth densely unbounded
poly-logarithmically, again on arity-two signatures.
Here is our \emph{second main dichotomy result} which shows this:

\begin{theorem}\label{thm:dichoobdd}
  There exists a constant $d \in \NN$ such that the following holds.
  Let $\sigma$ be an arbitrary arity-2 signature and $\I$ be a class of
  $\sigma$-instances. Assume there is a function $f(k) =
  O\big(2^{k^{1/d}}\big)$ such that,
  for all $k \in \NN$, if\/ $\I$ contains instances of treewidth $\geq
  k$, one of them has size $\leq f(k)$.
  We have the following dichotomy:
  \begin{itemize}
  \setlength\itemsep{0pt}
    \item If there is $k\in\mathbb{N}$ such that $\tw(I)\leq k$
      for every $I\in\I$, then for every MSO
      query~$q$, an OBDD of
      $q$ on~$I$ can be computed 
      in time polynomial in~$\card{I}$.
    \item Otherwise, there is a \ucqneq query $q_{\p}$ (depending on $\sigma$ but
      not on $\I$) such that the width of
      any OBDD of $q_{\p}$ on~$I\in\I$ cannot be bounded by any
      polynomial in~$\card{I}$.
  \end{itemize}
\end{theorem}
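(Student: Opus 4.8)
The plan is to establish the two halves of Theorem~\ref{thm:dichoobdd} separately; the first (bounded-treewidth) case is already handled by Theorem~\ref{thm:getobdd}, so the real work is the lower bound. First I would fix the arity-2 signature $\sigma$ and engineer a single \ucqneq query $q_{\p}$ whose lineage, on a suitable ``hard gadget'' instance, encodes a Boolean function known to require exponential-size OBDDs regardless of variable order --- the canonical candidate being (a disjointness/inner-product–style predicate on) the incidence structure of a grid, or more concretely a query expressing reachability/connectivity between opposite corners of a wall. The design constraints on $q_{\p}$ mirror those on $q_{\h}$ in Section~\ref{sec:probability}: because we will only have the hard graph available as a \emph{topological minor} (by Lemma~\ref{lem:extract}), $q_{\p}$ must be \emph{subdivision-invariant}, i.e., it must detect the relevant combinatorial structure in a way insensitive to subdividing edges into arbitrary simple paths. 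Using disequalities is what makes this possible while staying in \ucqneq, since we can say ``distinct vertices'' and route a union of CQ patterns along paths without being able to count path lengths.

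Next I would carry out the reduction. Given a class $\I$ of unbounded treewidth with the stated size bound $f(k) = O(2^{k^{1/d}})$, and given an arbitrary $n$, I would use treewidth-constructibility-style reasoning together with Lemma~\ref{lem:extract} to obtain, in polynomial time, an instance $I \in \I$ of size $N = \card{I}$ whose Gaifman graph contains as a topological minor a degree-3 planar ``wall'' (grid-like) graph $W_n$ on $\Theta(n^2)$ vertices. The key quantitative point: by Lemma~\ref{lem:extract} it suffices that $\tw(I) \geq (\card{V(W_n)})^c = \Theta(n^{2c})$, and by hypothesis we can pick such an $I$ with $N \leq f(\Theta(n^{2c})) = 2^{O((n^{2c})^{1/d})}$; choosing $d$ large enough (namely $d = 4c$, say) forces $n$ to be polynomially large in $\log N$, hence $\card{V(W_n)} = \Theta(n^2)$ is polylogarithmic-or-larger in $N$ --- and crucially $n$ can be taken as large as $\Theta((\log N)^{1/(2c)})$ or so. Now I would lower-bound the OBDD width: any OBDD for the lineage of $q_{\p}$ on $I$ can be specialized (by fixing the variables corresponding to the ``plumbing'' edges of the subdivision to $1$, and irrelevant facts to $0$) to an OBDD computing the hard Boolean function $h_n$ on $\Theta(n^2)$ variables. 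Since fixing variables and contracting can only shrink an OBDD, and since every OBDD for $h_n$ has width $2^{\Omega(n)}$ regardless of order, we get that the original OBDD has width $2^{\Omega(n)} = 2^{\Omega((\log N)^{\varepsilon})}$ for some $\varepsilon>0$, which is superpolynomial in $N$. This is the contradiction that proves the second bullet.

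The main obstacle I expect is the simultaneous design of $q_{\p}$: it must (a) be expressible in \ucqneq over the fixed $\sigma$; (b) be subdivision-invariant so that it survives topological-minor extraction; and (c) have a lineage, on the extracted wall, that is provably an OBDD-hard Boolean function for \emph{every} variable order, not just a fixed one --- the order-independence is essential because the OBDD in the theorem may use any order. The standard order-sensitive hardness of reachability-type functions (hard for a ``bad'' order, easy for a ``good'' one) is not enough; I would need to route the query so that its lineage computes something like an iterated inner product or a ``pointer-chasing''/element-distinctness pattern along $\Theta(n)$ disjoint channels of the wall, for which $2^{\Omega(n)}$ width is known to hold against all orders (this is where the disequalities and the planar wall's rigidity earn their keep). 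A secondary technical nuisance is bookkeeping in the reduction: the topological-minor embedding $g$ maps wall-edges to internally-fresh vertex-disjoint paths, so I must be careful that the ``plumbing'' facts I set to $1$ and the garbage facts I set to $0$ genuinely isolate the wall, and that the randomized polynomial-time computability of the embedding (Lemma~\ref{lem:extract}) does not harm the \emph{non-uniform} OBDD-size lower bound --- which it does not, since width is a property of the final OBDD and the reduction is only needed to exhibit one hard instance per length. Finally I would note, as the surrounding text does, that restricting $q_{\p}$ to connected \ucq or to homomorphism-closed queries would break step (c), consistent with Proposition~\ref{prp:nodichohomom}.
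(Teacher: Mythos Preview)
Your scaffolding (upper bound via Theorem~\ref{thm:getobdd}; lower bound via Lemma~\ref{lem:extract} and a grid/wall minor; calibrate $d$ against the exponent $c$ of Lemma~\ref{lem:extract}) matches the paper. But the heart of the argument---the choice of $q_{\p}$ and how the OBDD lower bound is obtained---is both different from the paper and, as you have written it, not yet a proof.

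The paper's $q_{\p}$ is far simpler than anything you propose: it is the \ucqneq asserting that two \emph{distinct} facts share an element (a length-$2$ path in the Gaifman graph), i.e., that the possible world is \emph{not} a matching. There is no attempt to encode a named hard Boolean function. The lower bound (Lemma~\ref{lem:obddlower}) is proved \emph{directly} on the lineage of this query: one extracts a ``skewed grid'' as a topological minor, and shows by a combinatorial \emph{shattering} argument that for every variable order there is a prefix after which $\Omega(\tw(I)^{1/d'})$ pairwise non-adjacent grid vertices each have one enumerated and one non-enumerated incident fact. Each such vertex independently doubles the number of inequivalent partial valuations (distinguished via the prime implicants of the lineage), giving width $\geq 2^{\tw(I)^{1/d'}}$. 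This per-instance bound plus the size hypothesis on~$\I$ yields the second bullet.

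Your route---pick $q_{\p}$ so that, after fixing plumbing facts to $1$ and garbage facts to $0$, the residual lineage is a known order-robust OBDD-hard function---has two concrete gaps. First, your candidate queries (reachability/connectivity across a wall, pointer-chasing) are not \ucqneq-expressible; \ucqneq can only assert the presence of a bounded pattern with disequalities, and its lineage is a DNF with clauses of size $\leq \card{q}$, so you will not get iterated inner product or connectivity this way. You flag this as ``the main obstacle'' but do not resolve it. Second, the ``subdivision-invariance'' you impose on $q_{\p}$ is the wrong requirement and actively misleads the design: the paper's $q_{\p}$ is \emph{not} subdivision-invariant (subdividing an edge creates new length-$2$ paths), yet the argument works because the shattering analysis is carried out on the subdivided structure inside~$I$, not by pulling the lineage back to the abstract wall. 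Relatedly, for a monotone query, ``set plumbing to~$1$'' can trivialize the lineage (for the matching query, any two adjacent $1$'s already satisfy $q_{\p}$), so your restriction step would not isolate a hard sub-function as intended. If you want to salvage your approach, you would need a \ucqneq whose bounded-size prime implicants on a (subdivided) grid still force exponentially many order-robust distinctions---which is exactly what the paper's shattering argument establishes for the length-$2$-path query without the detour through a named hard function.
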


This does \emph{not} require treewidth-constructibility, and imposes
instead a slight weakening\footnote{The condition is weaker because we require the
subexponentiality to work for some fixed~$d$, not an arbitrary~$c$.}
of densely unbounded poly-logarithmic treewidth.
It does not require $\I$ to be subinstance-closed either, unlike in
Section~\ref{sec:consequences}.

The first part of the theorem is by Theorem~\ref{thm:getobdd}, so we 
sketch the proof of the second part.
Our choice of \ucqneq~$q_{\p}$
intuitively tests the existence of
a path of length~$2$ in the Gaifman graph of the instance, i.e., a violation of
the fact that the possible world is a matching of the original instance. 
Again, while we know that probability evaluation for $q_\p$ is $\fpsp$-hard if we allow
\emph{arbitrary} input instances (as counting matchings reduces to it), our task
is to show that $q_\p$ has no polynomial-width OBDDs when restricting to
\emph{any}
instance family that satisfies the conditions, a much harder task.

To show this, we draw a link between treewidth and OBDD width for $q_{\p}$
on \emph{individual} instances, with the following result (which is specific
to~$q_{\p}$):

\begin{lemma}\label{lem:obddlower}
  Let $\sigma$ be an arity-2 signature.
  There exist constants~$d', k_0 \in \NN$
  such that for any instance $I$ on~$\sigma$
  of treewidth $\geq k_0$,
  the width of an OBDD for $q_{\p}$ on~$I$ is
  $\geq 2^{(\tw(I))^{1/d'}}$.
\end{lemma}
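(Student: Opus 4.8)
The plan is to reduce the OBDD-width lower bound on $I$ to a known lower bound on OBDDs for the function that tests whether a graph contains a path of length~$2$ (equivalently, checks the matching property), by exploiting that a graph of large treewidth contains a large grid minor, and that $q_{\p}$ behaves well under the minor operations. First I would recall the polynomial grid-minor theorem of~\cite{chekuri2014polynomial}: there is a constant so that any graph~$G$ of treewidth $\geq k^c$ contains a $k\times k$ grid as a minor, and (since the grid has degree~$\leq 4$, and more carefully, since we can route through degree-$3$ gadgets, or directly use that a constant-degree minor is a topological minor as in Lemma~\ref{lem:extract}) we get a topological embedding of a subdivided grid-like graph into (the Gaifman graph of)~$I$. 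Concretely, from $\tw(I)\geq k_0$ we extract an embedding of a $k\times k$ ``wall'' or grid subdivision into~$I$ with $k = (\tw(I))^{1/d''}$ for a suitable constant $d''$, using the topological-minor version of~\cite{chekuri2014polynomial}.

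The second step is a \emph{fooling-set} / communication-complexity argument tailored to $q_{\p}$. The query $q_{\p}$ tests for a path of length~$2$ in the Gaifman graph of the chosen subinstance; since it is subdivision-invariant in the appropriate sense (a path of length~$2$ survives under subdivisions as a longer path, and conversely its absence means the subinstance is a matching), the lineage of $q_{\p}$ on~$I$, restricted to the facts used by the grid embedding, simulates the ``is this a matching of the $k\times k$ grid'' function on the grid edges. It is known (e.g.\ via a reduction to set-disjointness or a direct rectangle/fooling-set argument) that any OBDD for the matching-test function on an $n$-edge grid — equivalently, $\exists$ two incident chosen edges — has width $2^{\Omega(\sqrt{n})}$, because for any variable order one can find a cut of the grid of width $\Omega(\sqrt{n})$ whose crossing edges must all be ``remembered''. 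I would make this precise: for the order $\Pi$ underlying a hypothetical OBDD, pick a balanced separator of the grid respecting $\Pi$ (the grid has no sublinear separators along any order in the relevant sense), identify $\Omega(k)$ independent ``choice gadgets'' straddling the cut, and build a fooling set of size $2^{\Omega(k)}$ of partial valuations that must reach pairwise-distinct OBDD nodes at the cut level — since each gadget independently toggles whether the matching property is already violated. Pulling this back through the embedding $g$ into~$I$ (each grid edge becomes a vertex-disjoint path in~$I$, on which we set all but one fact appropriately so the path's ``activation'' is controlled by a single fact, and all non-embedding facts are fixed to~$0$) gives, for \emph{any} variable order on the facts of~$I$, a fooling set of size $2^{\Omega(k)} = 2^{(\tw(I))^{\Omega(1/d'')}}$, hence the claimed width bound with a suitable $d'$.

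The main obstacle I expect is the interaction between the variable order on the facts of~$I$ (which is adversarial and arbitrary) and the structure of the embedding: the OBDD order need not respect the grid structure at all, so one cannot simply quote a grid-OBDD lower bound as a black box. The fix is that grids (and subdivided walls) are robustly ``expander-like'' for separators — for \emph{every} linear order of the $\Theta(k^2)$ edges, some prefix induces a sub-structure with $\Omega(k)$ edge-disjoint crossing paths, and this is exactly the quantity that forces OBDD width $2^{\Omega(k)}$. Making the fooling set genuinely independent after pulling back through subdivided paths in~$I$ (and after fixing the many ``irrelevant'' facts of~$I$, which may include facts on other relations of~$\sigma$) requires a careful but routine choice of the fixed partial valuation so that $q_{\p}$'s truth is determined solely by the toggled facts; I would handle this exactly as in the extraction arguments used for Theorem~\ref{thm:dichotomy}, where subdivision-invariance of the hard query is the key property. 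Once this is set up, the size parameter $k_0$ and exponents $d'$ come out of the constant $c$ in~\cite{chekuri2014polynomial} composed with the constant-degree-minor-to-topological-minor step.
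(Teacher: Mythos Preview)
Your overall strategy matches the paper's: extract a grid-like topological minor using the polynomial bound of~\cite{chekuri2014polynomial}, then run a fooling-set argument showing that every variable order has a prefix at which $2^{\Omega(k)}$ partial valuations must be distinguished. The paper phrases this last step as a ``shattering'' argument on a \emph{skewed grid} and formalizes it via the prime implicants of the lineage, but the combinatorial content is the same as your separator/cut idea.

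There is, however, a real gap in how you handle the subdivision. You claim that $q_{\p}$ is ``subdivision-invariant in the appropriate sense'' and propose to reduce to the grid-matching OBDD lower bound by fixing all but one fact on each subdivided path $P_e$, so that a single fact of~$I$ controls whether the grid edge~$e$ is ``active''. Neither step works. First, $q_{\p}$ is \emph{not} subdivision-invariant: a single edge does not satisfy $q_{\p}$, but any subdivision of it into two or more edges does, so unlike the query $q_{\h}$ of Theorem~\ref{thm:dichotomy} you cannot simply transport the argument through the subdivision. Second, a single fact on $P_e$ is incident to at most one of the endpoint images $f(u), f(v)$, so toggling it controls incidence at only one end --- you cannot simulate ``grid edge $e$ is present'' with one fact of~$I$, and fixing the remaining facts of $P_e$ to~$1$ would already create length-$2$ paths internal to~$P_e$ and make $q_{\p}$ trivially true.

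The paper avoids this by \emph{not} reducing to the abstract grid-matching function. It argues directly on the facts of~$I$: the prime implicants of $q_{\p}$ on~$I$ are exactly the pairs of incident facts. Using the extracted minor, one shows that for every variable order on the facts of~$I$ there is a prefix at which $\Omega(k)$ pairwise-independent grid-node images $f(v)$ are \emph{shattered}, meaning each has one incident fact in the prefix and one outside it. Setting all other facts to~$0$, these shattered centers give $\Omega(k)$ prime implicants split by the prefix and sharing no variables, which yields the fooling set of size $2^{\Omega(k)}$ directly --- no per-edge control fact is needed, only the two incident facts at each shattered node. Your separator intuition is precisely what drives the existence of such a prefix; the fix is to anchor the fooling set at grid \emph{vertices} rather than trying to simulate grid \emph{edges}.
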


\begin{proofsketch}
  The proof is technical and uses Lemma~\ref{lem:extract} to extract
  high-treewidth topological minors of a specific shape: \emph{skewed grids}.
  We then show that
  any variable order that enumerates the
  edges of the skewed grid must have a prefix that
  \emph{shatters} the grid, i.e.,
  sufficiently many independent grid nodes have
  both an enumerated incident edge and a non-enumerated one.
  This forces any OBDD for~$q_\p$ to remember the exact
  configurations of the enumerated edges at the level for this shattering prefix
  of its variable order. Thus, the OBDD has superpolynomial width.
  We formalize this via the structure of the prime implicants of the lineage.
\end{proofsketch}

\subsection{A Meta-Dichotomy for UCQ$^{\neq}$}\label{sec:connected-ucqneq-meta-dichotomy}
For which queries does Theorem~\ref{thm:dichoobdd} adapt? 
It does not extend to all
\emph{unsafe}~\cite{dalvi2012dichotomy} queries,
as a query
may be unsafe and still be tractable on some unbounded-treewidth instance
family: for instance, the
standard unsafe query $R(x) \land S(x, y) \land T(y)$ from~\cite{dalvi2007efficient} has
trivial OBDDs on the family of $S$-grids without unary
relations.

We answer this question, again on arity-2 signatures, by introducing a
notion of \emph{intricate} queries. We show that it precisely characterizes the
\emph{connected} \ucqneq queries for which the dichotomy of
Theorem~\ref{thm:dichoobdd} applies. Let us first recall the definition of \emph{connected}
\ucqneq queries:

\begin{definition}
  \label{def:qconnect}
  A \cqneq is \deft{connected} if, building the graph~$G$ on its atoms that connects those that
  share a variable (ignoring $\neq$-atoms), $G$ is connected (in particular it
  has no isolated vertices, unless it consists of a single isolated vertex).
  A \ucqneq is \deft{connected} if all its \cqneq disjuncts are
  \emph{connected}.
\end{definition}

We now give our definition of \defo{intricate} queries. We characterize
them by looking at \emph{line instances}:

\begin{definition}
  A \deft{line instance} is an instance~$I$ of the following form: a domain $a_1,
  \ldots, a_n$, and, for $1 \leq i < n$, one single binary fact between $a_i$ and
  $a_{i+1}$: either $R(a_i, a_{i+1})$ for some $R \in \sigma$ or $R(a_{i+1},
  a_i)$ for some binary $R \in \sigma$. (Recall that, as $\sigma$ is
  \emph{arity-two}, its maximal arity is two, so it must include at least one
  binary relation.)
\end{definition}

The intuition is that a query is intricate if, on any
sufficiently long line instance,
it must have a minimal match
that contains the two middle facts (i.e., the ones that are incident to the
middle element). Here is the formal definition of \emph{intricate} queries:

\begin{definition}
  A \ucqneq $q$ is \deft{$n$-intricate} for $n \in \NN$ if, for every line
  instance $I$ with $\card{I} = 2n + 2$, letting $F$ and $F'$ be the two facts
  of~$I$ incident to the middle element $a_{n+2}$,
  there is a \emph{minimal} match of~$q$ on~$I$ that includes both~$F$ and~$F'$.

  We call $q$ \deft{intricate} if it is $\card{q}$-intricate.
\end{definition}

Observe that queries $q$ with $\card{q} < 2$ clearly cannot be intricate.
Further, if a query has no matches that include only binary facts, then it
cannot be intricate; in other words, any disjunct that contains an atom for a
unary relation can be ignored when determining whether a query is intricate.
By contrast, our query $q_{\p}$ of
Theorem~\ref{thm:dichoobdd} was designed to be intricate, in fact $q_{\p}$ is
$0$-intricate. Also note that
an $n$-intricate query is always $m$-intricate for any $m > n$:
consider the restriction of any line instance of size $2m+2$ to a line
instance of size $2n+2$, and find a match in the restriction.

We note that we can decide whether \ucqneq queries are intricate or not, by
enumerating line instances. We do not know the precise complexity of this task:

\begin{lemma}\label{lem:intricate-decidable}
  Given a connected \ucqneq $q$, we can decide in PSPACE whether $q$ is
  intricate.
\end{lemma}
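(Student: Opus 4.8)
The statement asks to decide, for a connected \ucqneq $q$, whether $q$ is intricate, i.e.\ whether $q$ is $\card{q}$-intricate, and to do so in PSPACE. By the definition of intricacy, $q$ is intricate iff for \emph{every} line instance $I$ with $\card{I} = 2\card{q}+2$, the two facts $F, F'$ incident to the middle element lie in a common minimal match of $q$ on $I$. The plan is to observe that (i) the number of distinct line instances of this fixed size is finite (at most $(2\card{\sigma})^{2\card{q}+1}$ many, since each of the $2\card{q}+1$ edges is one of $\card{\sigma}$ binary relations in one of two orientations), so we can enumerate them one at a time using only polynomial space to store the current instance; and (ii) for each such $I$, deciding whether $F$ and $F'$ lie in a common minimal match of $q$ on $I$ can itself be done in PSPACE. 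Combining, we loop over all line instances $I$, reject if any of them fails the condition, and accept otherwise; since PSPACE is closed under polynomially many sequential PSPACE subroutine calls, the whole procedure runs in PSPACE.

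The main work is step (ii): given a fixed instance $I$ (of size linear in $\card q$, hence polynomial in the input) and two designated facts $F, F' \in I$, decide whether some minimal match of $q$ on $I$ contains both $F$ and $F'$. Recall a match of a \ucqneq $q$ is the image $M \subseteq I$ of a homomorphism from some disjunct $q_j$ of $q$ into $I$ (respecting disequalities), and it is minimal if no proper subset of $M$ is also a match (of $q$, i.e.\ of any disjunct). To decide this, I would guess a disjunct $q_j$ and a candidate homomorphism $h\colon \vars{q_j}\to\dom(I)$; checking that $h$ is a valid homomorphism with $\{F,F'\}\subseteq M \defeq h(q_j)$ is polynomial-time. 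Then, to verify that $M$ is \emph{minimal}, we must check that no proper subset $M'\subsetneq M$ is a match of $q$; since $\card M \leq \card q$, there are at most $2^{\card q}$ subsets $M'$, but $\card q$ is the input size, so we cannot enumerate them in polynomial time — however, the complement check ``$M$ is \emph{not} minimal'' is ``$\exists M'\subsetneq M$, $\exists$ disjunct $q_\ell$, $\exists$ homomorphism $h'\colon\vars{q_\ell}\to\dom(I)$ with $h'(q_\ell)=M'$'', which is an NP-style existential and hence in PSPACE (indeed in NP). So ``$M$ is minimal'' is in coNP $\subseteq$ PSPACE. Wrapping the whole thing: the outer ``$\exists$ disjunct, $\exists$ homomorphism $h$ with $\{F,F'\}\subseteq h(q_j)$ and $h(q_j)$ minimal'' is a $\Sigma_2^{\mathrm p}$ predicate, which is in PSPACE; negating (for the universal quantification over line instances) keeps us in PSPACE.

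Putting it together, the algorithm is: for each line instance $I$ of size $2\card q+2$ (enumerated with polynomial space), let $F,F'$ be its two middle-incident facts, and use the PSPACE subroutine above to check whether $F,F'$ lie in a common minimal match; if this check ever fails, reject; if it succeeds for all $I$, accept. Correctness is immediate from the definition of intricate. The space bound holds because we reuse the same polynomial workspace across the finitely many iterations, and each iteration is a PSPACE computation on an input (namely $I$, which is of size $O(\card q\cdot\log\card\sigma)$) polynomial in $\card q$.

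\textbf{Main obstacle.} The only delicate point is the minimality check, where a naive approach would enumerate the $2^{\card q}$ subsets of a candidate match $M$; the fix, as described, is to phrase ``not minimal'' as a single existential guess of a smaller witnessing match, so that minimality sits in coNP (hence in PSPACE) rather than requiring explicit enumeration. Everything else — bounding the number of line instances, enumerating them in polynomial space, and checking the homomorphism conditions — is routine. One should also double-check that the connectedness hypothesis on $q$ is not actually needed for the PSPACE bound itself (it is used elsewhere, e.g.\ for the meta-dichotomy, but the decidability argument above goes through for arbitrary \ucqneq); it is harmless to state the lemma for connected queries as the paper does.
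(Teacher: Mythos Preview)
Your proof is correct and follows the same approach the paper indicates (enumerate line instances of the fixed size $2\card{q}+2$ and check the middle-fact condition), with a sound $\Sigma_2^{\mathrm p}\subseteq\text{PSPACE}$ analysis of the per-instance test. Two minor slips to fix: a line instance with $\card{I}=2\card{q}+2$ has $2\card{q}+2$ facts (not $2\card{q}+1$), and the loop makes \emph{exponentially} many subroutine calls, not polynomially many---but as you correctly argue afterward, reusing a polynomial workspace across all iterations keeps the overall computation in PSPACE.
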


We can now state our \emph{meta-dichotomy}: a dichotomy such as
Theorem~\ref{thm:dichoobdd} holds for a connected
\ucqneq~$q$ if and only if it is intricate. Further, \emph{non-intricate}
queries must actually have \emph{constant-width} OBDD on some counterexample
unbounded-treewidth family:

\begin{theorem}\label{thm:metadicho}
  For any connected \ucqneq $q$ on an arity-2 signature:
  \begin{itemize}
  \setlength\itemsep{0pt}
    \item If $q$ is not intricate, there is a treewidth-constructible and
      unbounded-treewidth family $\I$ of instances such that $q$ has
      \emph{constant-width} OBDDs on~$\I$; the OBDDs can be computed
      in PTIME from the input instance.
    \item If $q$ is intricate, then Theorem~\ref{thm:dichoobdd} applies to $q$:
      in particular, for any unbounded-treewidth family $\I$ of instances satisfying
      the hypotheses, $q$ does not have polynomial-width OBDDs on $\I$.
  \end{itemize}
\end{theorem}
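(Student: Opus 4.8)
The plan is to prove the two directions separately. For the \emph{intricate} case, the work is essentially done: we must only verify that Lemma~\ref{lem:obddlower} (the per-instance bound linking treewidth to OBDD width for $q_{\p}$) generalizes from $q_{\p}$ to any intricate connected \ucqneq~$q$. The key observation is that intricacy is exactly the structural property that the proof sketch of Lemma~\ref{lem:obddlower} used about $q_{\p}$: after extracting a large skewed grid as a topological minor, each grid edge is realized as a simple path (a line instance) in~$I$, and intricacy guarantees that, whenever a sufficiently long such path has its two ``middle'' facts selected, there is a minimal match of~$q$ witnessing a local obstruction. One would reprove the shattering argument verbatim, replacing ``path of length~$2$'' reasoning about $q_{\p}$ by the minimal-match obtained from $\card{q}$-intricacy along subdivided grid edges; since $q$ is connected, the minimal match stays localized along a single subdivided edge and cannot ``escape'' across the grid, which is what makes the prime-implicant / shattering analysis go through. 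This yields the analogue of Lemma~\ref{lem:obddlower} for~$q$, and then the proof of Theorem~\ref{thm:dichoobdd} applies unchanged with $q_{\p}$ replaced by~$q$, giving the second bullet.

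For the \emph{non-intricate} case, we must construct the witnessing unbounded-treewidth family~$\I$ on which $q$ has constant-width OBDDs. The idea is to build $\I$ from grid-like instances and exploit non-intricacy to bound the ``reach'' of any minimal match. Concretely, if $q$ is not intricate, there is some line instance $I_0$ of size $2\card{q}+2$ whose two middle facts are \emph{not} jointly contained in any minimal match of~$q$; one uses this to argue that, on instances assembled by gluing many disjoint copies of such line segments into a grid pattern (with fresh vertices so that the Gaifman graph has high treewidth), every minimal match of $q$ is confined to a bounded-size region, and in particular cannot ``cross'' the long line segments. Then, choosing a variable order for the OBDD that enumerates edges segment-by-segment (so that at any level only a bounded window of segment-boundary facts is ``active''), the OBDD only needs to remember the match-state of a bounded-size frontier, yielding constant width. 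The family $\I$ is treewidth-constructible by construction (grids of side~$k$ are built in $\mathrm{poly}(k)$ time), and the OBDD is computed in PTIME by the level-by-level procedure of Lemma~\ref{lem:makeobddtw}, or directly from the bounded frontier description.

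\textbf{Main obstacle.} The delicate part is the non-intricate direction: one must show that a \emph{single} failure of intricacy (on one line instance of length $2\card{q}+2$) is enough to globally confine all minimal matches on the assembled family. This requires a careful argument that a minimal match of a connected \ucqneq, once it uses a fact deep inside a long subdivided edge, is forced through the two middle facts of some length-$(2\card{q}+2)$ sub-line-instance, contradicting the chosen $I_0$ --- here the length $2\card{q}+2$ and the bound $\card{q}$ on the number of atoms are exactly calibrated so that a match cannot ``jump over'' the middle. One also has to check that non-intricacy for the whole \ucqneq (as opposed to each disjunct) is the right hypothesis: disjuncts with unary atoms are automatically confined, and connectedness prevents a disjunct from matching along two far-apart regions simultaneously, so it suffices to handle the purely-binary connected disjuncts, which is precisely what the line-instance definition of intricacy captures. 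Verifying this confinement carefully, and then translating it into the constant-width OBDD construction via a suitable segment-by-segment variable order, is where the real work lies.
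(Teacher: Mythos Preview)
Your plan matches the paper's approach in both directions. For the intricate case, the paper likewise adapts the proof of Theorem~\ref{thm:dichoobdd} (and hence of Lemma~\ref{lem:obddlower}), using intricacy exactly as you describe to extract the independent minimal matches along the subdivided edges of a skewed grid minor.

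For the non-intricate case your construction of~$\I$ is the same (grids whose edges are copies of the counterexample line instance~$I_0$; note that the copies must share endpoints at grid nodes, not be fully disjoint, else the treewidth stays bounded). The difference is in how the constant-width OBDD is obtained. You argue confinement of minimal matches and then build the OBDD directly via a good variable order. The paper instead observes that since the two middle facts of each segment never co-occur in a minimal match, one can \emph{disconnect} the grid at every segment middle (split the shared element into two fresh copies) in a lineage-preserving way; the resulting instance is a disjoint union of bounded-size ``stars'' around grid nodes, hence has bounded pathwidth, and Theorem~\ref{thm:pathwidth} then gives the constant-width OBDD and its PTIME computation for free. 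This is the same confinement idea packaged as a reduction to an already-proved result, which spares you the direct OBDD-width argument (and in particular avoids having to reason about matches that straddle a grid node and touch several segments at once, which is the one place your ``segment-by-segment'' enumeration sketch is a bit loose).
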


\begin{proofsketch}
We construct $\I$ for
non-intricate \ucqneq $q$ as a family of grids from a
line instance which is a counterexample to intricacy.
As we can disconnect facts that do not co-occur in a match,
we can disconnect the grids to
bounded-pathwidth instances in a lineage-preserving fashion.

Conversely, we adapt the
hardness proof of Theorem~\ref{thm:dichoobdd} to any
intricate \ucqneq query $q$, extracting independent matches from any
sufficiently subdivided skewed grid minor thanks to intricacy.
\end{proofsketch}

\subsection{Other Query Classes}\label{sec:more-restricted-classes}
We finish by investigating the status of other query classes relative
to our meta-dichotomy, to see whether 
Theorem~\ref{thm:dichoobdd} could be shown for
queries in an even less expressive class than \ucqneq, such as \cqneq or \ucq.

\paragraph*{Connected \cqneq queries}
We classify the connected \cqneq queries relative to
Theorem~\ref{thm:metadicho}, by showing that a connected \cqneq can never
be intricate. This explains why, for instance, the query $R(x) \land S(x, y)
\land T(y)$
is not intricate, as is witnessed by the family of $S$-grids.

\begin{proposition}\label{prp:cqneqnotintricate}
  A connected \cqneq is never intricate.
\end{proposition}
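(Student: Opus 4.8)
The claim is that a connected $\mathrm{CQ}^{\neq}$ can never be intricate; equivalently, for every connected $\mathrm{CQ}^{\neq}$ $q$ there is some $n$ (in fact $n = \card{q}$ will do, by the definition of intricacy) and some line instance $I$ with $\card{I} = 2n+2$ such that no minimal match of $q$ on $I$ contains the two facts $F, F'$ incident to the middle element $a_{n+2}$. The natural approach is: pick a ``generic'' long line instance, suppose for contradiction that a minimal match uses both middle facts, and derive a contradiction from connectedness plus the length of the line.

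\textbf{Key steps.} First I would fix a convenient line instance: since $\sigma$ has a binary relation $R$, take the line instance $I$ of length $2n+2$ where \emph{every} edge is $R(a_i, a_{i+1})$ in the same orientation (a ``directed $R$-path''), with $n = \card{q}$. Now suppose toward a contradiction that there is a minimal match $M$ of $q$ on $I$ containing both middle facts $R(a_{n+1}, a_{n+2})$ and $R(a_{n+2}, a_{n+3})$. Being a match, $M$ is the image of a homomorphism $h$ from some disjunct $q'$ of $q$ (a connected $\mathrm{CQ}^{\neq}$) to $I$. The second step is to analyze what the image of a connected $\mathrm{CQ}^{\neq}$ in a directed path can look like: the atoms of $q'$ (ignoring $\neq$-atoms) form a connected graph on the variables, and each atom maps to some edge $R(a_j, a_{j+1})$, so the images of atoms sharing a variable must map to \emph{incident} edges of the path. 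Hence the set of edges in $M$ forms a connected subgraph of the path, i.e.\ a contiguous sub-path $a_p, a_{p+1}, \ldots, a_{p+m}$ using at most $\card{q'} \le \card{q} = n$ edges, so $m \le n$. The third step is the counting contradiction: for $M$ to contain both $R(a_{n+1}, a_{n+2})$ and $R(a_{n+2}, a_{n+3})$, the contiguous sub-path must include both edges, hence must span from some $a_p$ with $p \le n+1$ to some $a_{p+m}$ with $p+m \ge n+3$, forcing $m \ge 2$ — that part is fine — but more importantly the sub-path is entirely determined by its two endpoints among $a_1, \ldots, a_{2n+2}$, and its length $m$ is bounded by $n$; I need the middle edges to be ``too central'' for any length-$\le n$ contiguous block that I can realize. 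Actually the cleanest contradiction is different: I claim $M$ cannot be a \emph{minimal} match if it contains both middle facts. Concretely, any match that is the homomorphic image of a connected $\mathrm{CQ}^{\neq}$ in the $R$-path is a contiguous block; but then one can exhibit a strictly smaller match — e.g.\ by composing $h$ with the ``shift'' endomorphism $a_i \mapsto a_{i-1}$ of a sufficiently long sub-path, or by observing that the block of the match can be translated to avoid the precise pair $\{F, F'\}$ while staying inside $I$ (this is where $\card{I} = 2n+2$, comfortably larger than $2\card{q}$, is used: a block of length $\le n$ can always be translated left or right far enough to drop at least one of the two central edges), contradicting that $M$ was supposed to be a match \emph{including} both $F$ and $F'$. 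Since $q$ connected means every disjunct is connected, this argument applies to whichever disjunct $q'$ witnesses the match, completing the proof.

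\textbf{Main obstacle.} The delicate point is handling the $\neq$-atoms and the orientation of binary facts correctly. In an arbitrary line instance the binary facts can alternate orientation, so ``connected image = contiguous block'' still holds (incidence in the Gaifman-graph sense), but the homomorphism must also respect orientations of $R$-atoms and all disequalities; by choosing $I$ to be the \emph{uniformly oriented} $R$-path I sidestep orientation issues entirely, and I should double-check that this choice is legitimate under the definition of line instance (it is: each edge is allowed to be $R(a_i,a_{i+1})$). For disequalities, note they can only \emph{remove} homomorphisms, never create new matches, so any match of $q$ remains a match after dropping $\neq$-atoms — but I must be careful that the translated/shrunk candidate still satisfies the $\neq$-atoms; since on a uniformly oriented $R$-path every homomorphism into a length-$\ge 1$ segment is injective on variables that lie in distinct atoms appropriately, or more simply since translation is an isomorphism of sub-paths it preserves (in)equality of images, the translated map is still a genuine $\mathrm{CQ}^{\neq}$-homomorphism. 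So the real work is just bookkeeping: formalize ``connected $\mathrm{CQ}^{\neq}$ maps to a contiguous block of length $\le \card{q}$ in the $R$-path,'' then observe $2\card{q}+2$ is long enough to translate any such block off at least one of the two central edges, contradicting the existence of a match containing both. That bookkeeping, not any deep idea, is the crux.
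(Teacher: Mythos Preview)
Your proposal has a genuine gap: the single choice of line instance (the uniformly oriented $R$-path) does not work for every connected $\mathrm{CQ}^{\neq}$. Take $q = R(x,y)\wedge R(y,z)$. On your path $R(a_1,a_2),R(a_2,a_3),\ldots,R(a_{2n+2},a_{2n+3})$, the map $x\mapsto a_{n+1}$, $y\mapsto a_{n+2}$, $z\mapsto a_{n+3}$ is a homomorphism whose image $\{R(a_{n+1},a_{n+2}),R(a_{n+2},a_{n+3})\}$ is a minimal match containing \emph{both} middle facts. So this line instance does not witness non-intricacy for this~$q$, and your ``contiguous block of length $\le n$'' observation is simply compatible with the block straddling the two middle edges. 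Your attempted contradiction via shifting is also not valid: composing $h$ with $a_i\mapsto a_{i-1}$ yields a \emph{different} match of the same size, not a strict subset of~$M$, so it says nothing about minimality of~$M$; and exhibiting some other match that avoids $\{F,F'\}$ does not contradict the existence of a minimal match that contains $\{F,F'\}$.

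The paper's argument repairs exactly this by a case split on the shape of~$q$ (and on~$\sigma$). If $\sigma$ has at least two binary relations, one can choose a line instance using a relation so that $q$ has no match at all. If $\sigma$ has a single binary relation~$R$ and $q$ contains a ``chain'' pattern $R(x,y)\wedge R(y,z)$, one takes the \emph{alternating-direction} line instance $R(a_1,a_2),R(a_3,a_2),R(a_3,a_4),\ldots$: no element is simultaneously a target and a source, so the chain pattern has no image and $q$ has no match. If $q$ has no such chain pattern, then on your uniformly oriented path any two $R$-atoms of~$q$ sharing a variable must share it in the \emph{same} argument position, and since in a directed path each vertex is first (resp.\ second) coordinate of at most one edge, connectedness forces all atoms to map to the \emph{same} edge; hence every match is a single fact and cannot contain both middle facts. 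So the fix is not more bookkeeping on your one instance, but choosing the witnessing line instance depending on~$q$.
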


\begin{proofsketch}
  The signature $\sigma$ must contain at most one binary relation $R$, as otherwise we can
  find for any
  \cqneq $q$ a
  family of grids where the query never holds, so that $q$ would have trivial constant-width
  OBDDs. Now, if $q$ contains a join pattern of the form
  $R(x, y) \land R(y, z)$, then $q$ has no matches on line instances with
  $R$-facts of alternating directions. If
  $q$ does not contain such a pattern, we consider line instances with a path
  of~$R$-facts in the same direction, and show that $q$ has no match that
  involves
  the two middle facts.
\end{proofsketch}

By Theorem~\ref{thm:metadicho}, this implies that any \cqneq query $q$ has an
unbounded-treewidth, treewidth-constructible family of instances $\I$ such that
$q$ has constant-width OBDDs on $\I$ (that can be computed in PTIME); and it
also implies that we could not have proven Theorem~\ref{thm:dichoobdd} with a
connected \cqneq query.

\paragraph*{Homomorphism-closed}
Second, we investigate the status in our meta-dichotomy of queries without
inequalities, i.e., connected \ucq{}s. We can in fact
show a result for all queries that are 
\emph{closed under homomorphisms}, no matter whether they
are connected or not. Further, we can even choose a \emph{single} class of instances which is easy
for \emph{all} query closed under homomorphisms. (Remember that our queries are
always constant-free.)

\begin{proposition}\label{prp:nodichohomom}
  For any arity-2 signature,
  there is a
    tree\-width-constructible instance family $\I$ with unbounded treewidth
    and $w\in\NN$
      such that any query $q$ closed under homomorphisms has OBDDs of
      width~$w$ on~$\I$ that can be computed in PTIME in the input instance.
\end{proposition}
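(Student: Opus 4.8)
The plan is to exhibit a single instance family $\I$ on which the lineage of any constant-free homomorphism-closed query is essentially trivial, so that it admits a constant-width OBDD. The natural candidate for an arity-2 signature $\sigma$ is the family of \emph{grids} built from a single fixed binary relation $R \in \sigma$: for each $n$, let $I_n$ be the $n \times n$ grid whose edges all carry the label $R$ (say, oriented in some canonical way). This family is clearly treewidth-constructible and of unbounded treewidth, since the $n\times n$ grid has treewidth $n$. First I would observe the key point: a constant-free query $q$ closed under homomorphisms either fails on \emph{every} nonempty $R$-only instance, or it holds on \emph{every} such instance that is ``rich enough''. More precisely, since $q$ is closed under homomorphisms and constant-free, if $q$ holds on some finite $R$-instance $I_0$ then $q$ holds on any instance $J$ to which $I_0$ has a homomorphism; and any sufficiently large grid $I_n$ admits a homomorphism from every fixed $R$-instance $I_0$ that itself has a homomorphism to some grid (indeed $I_0$ must then be, up to homomorphic equivalence, a ``grid-like'' pattern, and all large enough grids are homomorphically equivalent for the purpose of receiving a fixed pattern — more carefully, one checks directly that the relevant witnessing $I_0$ one would pick can be mapped into $I_n$ for $n$ above a query-dependent threshold).

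The cleanest way to package this is to argue about the lineage directly rather than about a canonical $I_0$. Fix $q$ closed under homomorphisms and fix $I = I_n$ for $n$ large. I claim the lineage $\phi$ of $q$ on $I$ is monotone (immediate, as homomorphism-closed queries are monotone) and, crucially, that $\phi$ is either the constant $\false$ or depends in a very limited way on the input facts. The plan here is: if $q$ holds on the \emph{full} grid $I_n$ (for $n$ large), then I would show $q$ already holds on a \emph{bounded-size} subinstance $I' \subseteq I_n$ whose size depends only on $q$, not on $n$ — because the minimal match witnessing $I_n \models q$ is the homomorphic image of the (constant-size) query, hence has at most $\card q$ facts. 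By vertex-transitivity-like uniformity of the grid, one such bounded match exists ``in the corner'' (or near any chosen location), and by monotonicity $\phi$ is then satisfied by any valuation extending that bounded match. So $\phi = \bigvee_{M} \bigwedge_{F \in M} x_F$ over the minimal matches $M$, each of size $\le \card q$, and there is at least one such $M$ located in a fixed bounded region. Then replacing the grid by a large-but-localizable structure is unnecessary: we simply need an OBDD for such a $\phi$.

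The remaining step — the one I expect to be the main obstacle — is to turn ``$\phi$ is a monotone DNF whose terms have bounded size'' into a \emph{constant-width} OBDD, with a width $w$ that does not depend on $n$. This is not automatic: a monotone DNF with many bounded-size terms scattered across the grid can have large OBDD width under a bad variable order, and even under a good order the number of ``active'' terms could grow. The fix is to choose both the family and the variable order carefully: rather than the plain grid, I would take $\I$ to be a family of instances built so that the only minimal matches of any homomorphism-closed query are confined to a \emph{bounded-pathwidth} skeleton — concretely, one can take $\I$ to be grids and then invoke the same disconnection trick used in the proof sketch of Theorem~\ref{thm:metadicho} (``we can disconnect facts that do not co-occur in a match''), yielding bounded-pathwidth instances with the same lineage; then Theorem~\ref{thm:pathwidth} gives a constant-width OBDD, computable in PTIME. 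Alternatively, and perhaps more directly: since the instances have bounded pathwidth and $q$ is MSO (homomorphism-closed queries relevant here can be taken in UCQ, hence MSO), Theorem~\ref{thm:pathwidth} applies verbatim to give a PTIME-computable constant-width OBDD, with the width $w$ depending only on $q$ and the (bounded) pathwidth. The one subtlety to nail down is uniformity in $q$: we need a \emph{single} $w$ for \emph{all} homomorphism-closed $q$. This follows because, after the disconnection step, the instance's pathwidth is an absolute constant (depending only on $\sigma$, not on $q$): the grid-to-disconnected-grid reduction produces instances of pathwidth bounded by a constant, and on such instances the OBDD width from Theorem~\ref{thm:pathwidth} — read off from the proof of Lemma~\ref{lem:makeobddpw} — depends only on that pathwidth. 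Actually even simpler: on a \emph{fixed} bounded-pathwidth instance, the maximum possible OBDD width over \emph{all} Boolean functions of its $m$ variables that arise as lineages of some homomorphism-closed query is bounded, because all such lineages are monotone and determined by the structure; but to keep the argument clean I would just cite Theorem~\ref{thm:pathwidth} for the MSO query ``$q$'' and separately check that the pathwidth of the constructed instances is $\sigma$-bounded, and then take $w$ to be the resulting uniform bound — noting that the constant factor in Theorem~\ref{thm:pathwidth} is nonelementary in $q$ but the \emph{width} is not, as Theorem~\ref{thm:pathwidth} asserts a width depending only on $q$ and $k$; the needed strengthening to a $q$-independent width is exactly what the disconnection-to-constant-pathwidth step buys us.
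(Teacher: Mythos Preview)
Your approach has a genuine gap: you cannot obtain a width $w$ that is independent of~$q$, which the statement requires ($\I$ and $w$ are fixed \emph{before} $q$ is quantified).

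On grids, minimal matches of a homomorphism-closed query need not be single facts (e.g., $\exists xyz\, R(x,y)\wedge R(y,z)$ has two-fact minimal matches on any directed grid), so the lineage is a DNF with terms of size up to~$\card{q}$ --- a $q$-dependent bound, and one that only holds for UCQs anyway. Your aside that homomorphism-closed queries ``can be taken in UCQ, hence MSO'' is unjustified: Datalog queries are closed under homomorphisms but not UCQ-expressible in general. The ``disconnection trick'' you borrow from the sketch of Theorem~\ref{thm:metadicho} disconnects facts that do not co-occur \emph{in a match of the specific~$q$ at hand}, so the resulting instance depends on~$q$ and you no longer have a single family~$\I$. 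And even if you did land on a fixed constant-pathwidth family, Theorem~\ref{thm:pathwidth} only promises width depending on both~$q$ and the pathwidth; nothing there makes the width $q$-independent.

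The paper sidesteps all of this by choosing $\I$ so that minimal matches are forced to be \emph{single facts} regardless of~$q$: take $\I$ to be the complete bipartite directed graphs on one binary relation~$R \in \sigma$. Every nonempty subinstance of such an~$I$ has a homomorphism to a single $R$-edge (collapse each side of the bipartition to a point). Hence, for \emph{any} homomorphism-closed~$q$, either no subinstance of~$I$ satisfies~$q$ (lineage $\false$) or already every single fact does (lineage $\bigvee_{F \in I} x_F$). Both lineages have OBDD width at most~$2$, uniformly in~$q$; the family has unbounded treewidth ($K_{n,n}$ has treewidth~$n$) and is trivially treewidth-constructible.
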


\begin{proofsketch}
  Queries $q$ closed under homomorphisms become essentially
  trivial on the class $\I$ of complete bipartite directed graphs: all minimal
  matches of $q$ on $\I$ (if any) have a single fact, hence $q$ has a
  constant-width OBDD.
\end{proofsketch}

Hence, a connected \ucq is never intricate,
so we could not have shown Theorem~\ref{thm:dichoobdd} with a
\ucq query rather than a \ucqneq query.

Again, this result should not be confused with those
of~\cite{jha2012tractability,jha2013knowledge}. Of course, not all
homomorphism-closed queries, or even
\ucq{}s, have constant-width OBDDs on arbitrary instances. We are merely
claiming the \emph{existence} of high-treewidth instance classes for which
we have constant-width OBDDs whatever the query.

\paragraph*{Beyond connected queries}
We consider last whether our dichotomy in Theorem~\ref{thm:dichoobdd} could
extend to \emph{disconnected} \cqneq, which are not covered by
Proposition~\ref{prp:cqneqnotintricate}
or by the meta-dichotomy of Theorem~\ref{thm:metadicho}.

If the signature has more than one binary relation, this is
hopeless: the easy argument used in the proof 
of Proposition~\ref{prp:cqneqnotintricate} in this case can also apply to disconnected
\cqneq.

However,
quite surprisingly,
on
signatures with a \emph{single} binary relation (and arbitrarily many unary
ones) we can show a weakening of Theorem~\ref{thm:dichoobdd} for a disconnected
\cqneq. The first part adapts (it holds for all MSO), so only the lower bound
is interesting, which we can rephrase as before to a lower bound on OBDD width
on individual input instances:

\begin{proposition}\label{prp:discq}
  Let $\sigma$ be an arity-2 signature with only one binary relation.
  There exists a disconnected \cqneq{} query $q_\dd$, a constant~$d'>1$ and
  integer $n_0\in\NN$ such that:
  for any instance $I$ on~$\sigma$ of size
  $\geq n_0$,
  letting $k$ be the treewidth of~$I$, the width of any OBDD for $q_\dd$ is
  $\Omega(k^{1/d'})$.
\end{proposition}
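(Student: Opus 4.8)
The plan is to mimic, at the level of an individual instance, the hardness argument that underlies Theorem~\ref{thm:dichoobdd} and Lemma~\ref{lem:obddlower}, but now using a \emph{disconnected} \cqneq rather than the connected query $q_{\p}$. The key point is that when $\sigma$ has a single binary relation $R$, a connected \cqneq cannot be intricate (Proposition~\ref{prp:cqneqnotintricate}), so we must break connectedness to recover the shattering phenomenon. I would take $q_\dd$ to be a query whose minimal matches behave, on suitably subdivided skewed grids, like the minimal matches of $q_{\p}$: concretely, I would use something like $q_\dd \defeq \exists x\, y\, z\, w\ R(x,y) \land R(z,w) \land x \neq z \land x \neq w \land y \neq z \land y \neq w$ together with, if needed, a connected component (e.g.\ a length-two path pattern, or just $R(u,v)$) to force matches to ``see'' incident edges of grid nodes. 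The design goal is: on a long enough subdivided skewed grid minor, $q_\dd$ has a minimal match containing any prescribed pair of incident edges at an interior node, so that the edges incident to the grid's branch vertices cannot all be enumerated ``early'' in a variable order without blowing up the OBDD.

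The concrete steps are as follows. First, I would invoke Lemma~\ref{lem:extract} (via the ``skewed grid'' variant used in the proof sketch of Lemma~\ref{lem:obddlower}) to extract, from any instance $I$ of treewidth $k$ (with $|I| \geq n_0$), a topological minor of $I$ that is a skewed grid of dimension polynomial in $k^{1/d'}$ for an appropriate constant $d'$; after subdivision, each grid edge becomes a path of $R$-facts of alternating or uniform direction inside $I$. Second, I would verify the crucial ``intricacy-like'' property: for any interior branch vertex of the grid and any two of its incident subdivided edges, there is a \emph{minimal} match of $q_\dd$ in $I$ using a specified $R$-fact on each of those two paths. This is where disconnectedness is used — the two $R$-atoms of $q_\dd$ can land on two distinct subdivided edges sharing a node, with the disequalities only needing the four matched endpoints to be pairwise distinct, which holds because the branch vertex has degree $\geq 2$ and the two paths are otherwise disjoint. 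Third, I would run the standard OBDD lower-bound/shattering argument on the prime implicants of the lineage: fix any variable order $\Pi$ on the facts of $I$; restricting $\Pi$ to the $R$-facts lying on the subdivided grid edges, some prefix of $\Pi$ must ``shatter'' the grid in the sense that on $\Omega(k^{1/d'})$ independent branch vertices, at least one incident grid-edge path is entirely enumerated and at least one is not; by the intricacy property, the sub-function at that level of the OBDD must record, independently for each such branch vertex, which of its incident paths has been ``activated,'' forcing width $2^{\Omega(k^{1/d'})}$, which is certainly $\Omega(k^{1/d'})$ as claimed (indeed much more, but the weaker polynomial bound is what the statement asks for, presumably because only a polynomial-in-$k$ skewed grid can be guaranteed without treewidth-constructibility).

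The main obstacle I expect is step two: making sure that the matches of $q_\dd$ one uses are genuinely \emph{minimal}, and that $q_\dd$ does \emph{not} also have ``spurious'' minimal matches that let an OBDD avoid remembering the shattered configuration. With a single binary relation and alternating edge directions inside subdivided edges, one must check carefully which pairs of $R$-facts can be selected together consistently with the disequality atoms — in particular one must rule out a single $R$-fact, or two $R$-facts on the \emph{same} path, forming a match, since that would make the query effectively connected-like and destroy the lower bound. This is exactly the subtlety that makes $q_\dd$ disconnected \emph{and} forces the single-binary-relation hypothesis (with two binary relations the ``easy argument'' of Proposition~\ref{prp:cqneqnotintricate} already kills any hope, as noted just before the statement). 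A secondary, more routine difficulty is bookkeeping the constants: propagating the polynomial grid-minor bound of~\cite{chekuri2014polynomial} through the skewed-grid construction and the shattering lemma to pin down a single $d' > 1$ and threshold $n_0$ that work uniformly. Once the intricacy property and minimality are nailed down, the shattering/prime-implicant machinery is essentially the same as in the proof of Lemma~\ref{lem:obddlower}, specialized to a single grid extracted from $I$ rather than to a whole family.
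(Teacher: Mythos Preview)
Your choice of $q_\dd$ --- ``there exist two $R$-facts with disjoint domains'' --- matches the paper's. However, the overall strategy has a genuine gap, and it shows up exactly in your step two and in your expected lower bound.

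You aim to reproduce the \emph{exponential} shattering argument of Lemma~\ref{lem:obddlower}, claiming width $2^{\Omega(k^{1/d'})}$ and then noting that this is ``certainly $\Omega(k^{1/d'})$.'' But this exponential bound is \emph{false} for $q_\dd$: the paper observes right after Proposition~\ref{prp:discq} that $q_\dd$ has OBDDs of width $O(k)$ on some unbounded-treewidth, treewidth-constructible family. So the argument you sketch cannot go through. The reason the statement only asserts a polynomial bound is not a constructibility artifact; it is that a polynomial bound is all that holds.

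The concrete failure is in your intricacy-like step. A minimal match of $q_\dd$ is a pair of \emph{vertex-disjoint} $R$-facts. Two facts incident to the same branch vertex $v$ share $v$, so they violate the disequalities and do \emph{not} form a match; hence no minimal match of $q_\dd$ contains both edges at $v$, and the independent per-vertex bit you want the OBDD to record simply does not exist. If instead you pick facts further along the two subdivided paths, they are disjoint, but then any two disjoint facts anywhere in $I$ already form a minimal match, so the ``spurious minimal matches'' you flag as a worry are in fact fatal: the lineage of $q_\dd$ is globally much simpler than that of $q_{\p}$, and the OBDD need only remember a bounded amount of information about which vertices are touched by already-present facts.

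The paper's proof accordingly does \emph{not} run the exponential shattering argument. It adapts Lemma~\ref{lem:obddlower} to show only that the OBDD must distinguish \emph{linearly many} configurations at some level --- roughly, linearly many in the side length of the extracted skewed grid minor, hence $\Omega(k^{1/d'})$ in~$k$. To repair your proof you should abandon the per-branch-vertex independence idea and instead argue directly that, for any variable order, some prefix leaves $\Omega(k^{1/d'})$ pairwise inequivalent partial valuations (e.g., differing in which single vertex is covered by a present fact, so that distinct future facts complete them to a match).
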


\begin{proofsketch}
The query $q_\dd$ tests whether there are at least two facts with disjoint
domains. We adapt Lemma~\ref{lem:obddlower} and use a skewed grid minor of the
instance, but show this time that the OBDD must remember linearly many
configurations at some level.
\end{proofsketch}

This implies that $q_\dd$ does not satisfy the first part of the meta-dichotomy
of Theorem~\ref{thm:metadicho}.
Surprisingly, however, we can show that the query $q_\dd$ has OBDDs of width
$O(k)$ on some unbounded-treewidth and treewidth-constructible instance class.
Hence, $q_\dd$ does not satisfy the second part of the meta-dichotomy either, so
$q_\dd$ witnesses that there are \emph{disconnected} \cqneq which do not follow our
meta-dichotomy at all!
We leave to future work a more precise study of disconnected queries.

\section{Connection to Safe Queries}
\label{sec:safe}
We conclude this paper by connecting our results to \emph{query-based} tractability conditions. More
specifically, we focus on UCQs that have polynomial OBDD representations of their
lineage: by the results of~\cite{jha2013knowledge}, those are the
\emph{inversion-free UCQs}. We will show that the tractability of such queries
can be explained by our \emph{data-based} tractability conditions: more
precisely, for any inversion-free UCQ, there is a lineage-preserving rewriting of
input instances to instances that have constant \emph{tree-depth}~\cite{nesetril2012bounded},
and hence (by Lemma~11 of~\cite{bodlaender1995approximating}) have constant
pathwidth and treewidth. The definition of \emph{tree-depth} is as follows:

\begin{definition}
  An \emph{elimination forest} for an (undirected) graph $G$ is a forest $F$ on the
  vertices of~$G$ such that, for any edge $\{x, y\}$ of $G$, one of $x$ and $y$
  is a descendant of the other in~$F$.
  The \emph{tree-depth} of~$G$ is the minimal height of an elimination forest
  of~$G$.
  The \emph{tree-depth} of an instance $I$ is that of its Gaifman graph.
\end{definition}

This section applies to
signatures of arbitrary arity.

\paragraph*{Unfoldings}

Our results are based on instance rewritings of a general kind, possibly of
independent interest.
We let~$I$ denote an arbitrary instance in
this paragraph, and let $q$ denote a query closed under homomorphisms.

\begin{definition}
  An \emph{unfolding} of instance~$I$ is an instance~$I'$ with a homomorphism
  $h$ to $I$ which is \emph{bijective on facts}: for any fact $F(\mathbf{a})$ of $I$,
  there is exactly one fact $F(\mathbf{a}')$ in~$I$ such that $h(a'_i) = a_i$
  for all~$i$.
\end{definition}

The bijection defined by the homomorphism allows us to see the lineage
of $q$ on an unfolding $I'$ of~$I$ as a Boolean function on the same variables as the
lineage of~$q$ on~$I$.

We use unfoldings as a tool to show lineage-preserving instance
rewritings.
Indeed, we can see from the homomorphism $h$ from $I'$ to $I$ that
any match of
$q$ in $I'$ is preserved in~$I$ through~$h$. In other words, the following is
immediate:

\begin{lemma}
  \label{lem:onedir}
  If $I'$ is an unfolding of~$I$
  and $\phi$ and $\phi'$ are the lineages of $q$ on $I$ and $I'$,
  then for any valuation $\nu$ of the facts of~$I$,
  if $\nu(\phi') = \true$ then $\nu(\phi) = \true$.
\end{lemma}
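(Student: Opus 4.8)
The plan is to unwind the definitions and, for each valuation making $\phi'$ true, exhibit a homomorphism witnessing that the corresponding subinstance of~$I$ satisfies~$q$.

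First I would set up the correspondence between the variables of $\phi$ and those of $\phi'$. By definition of an unfolding, the homomorphism $h$ from $I'$ to $I$ is bijective on facts, so it induces a bijection between the facts of $I'$ and the facts of $I$; we use it to regard a valuation $\nu$ of the facts of $I$ as simultaneously a valuation of the facts of $I'$. Writing $I_\nu \subseteq I$ for the subinstance of $I$ consisting of the facts set to $\true$ by~$\nu$, and $I'_\nu \subseteq I'$ for the corresponding subinstance of $I'$, the definition of lineage gives that $\nu(\phi) = \true$ iff $I_\nu \models q$, and $\nu(\phi') = \true$ iff $I'_\nu \models q$.

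Next I would check that the restriction of $h$ to $\dom(I'_\nu)$ is a homomorphism from $I'_\nu$ to $I_\nu$. Take a fact $F(\mathbf{a}') \in I'_\nu$. Since $h$ is a homomorphism from $I'$ to $I$, the fact $F(h(\mathbf{a}'))$ lies in~$I$; by the bijective-on-facts property it is exactly the fact of $I$ matched with $F(\mathbf{a}')$ under the bijection, and since $F(\mathbf{a}')$ is kept by~$\nu$, so is $F(h(\mathbf{a}'))$, i.e.\ $F(h(\mathbf{a}')) \in I_\nu$. In particular $h$ maps $\dom(I'_\nu)$ into $\dom(I_\nu)$, and the computation just made establishes the homomorphism condition.

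Finally, assuming $\nu(\phi') = \true$, i.e.\ $I'_\nu \models q$, we have exhibited a homomorphism from $I'_\nu$ to $I_\nu$; since $q$ is closed under homomorphisms, $I_\nu \models q$, that is, $\nu(\phi) = \true$, as claimed. There is no real obstacle: the one point requiring care is that restricting $h$ to the kept facts still yields a homomorphism, which is precisely where the bijective-on-facts requirement in the definition of unfolding is used.
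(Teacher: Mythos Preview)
Your proposal is correct and is exactly the approach the paper takes: the paper states that the lemma is ``immediate'' from the fact that the homomorphism $h$ from $I'$ to $I$ preserves matches of~$q$, and you have simply spelled out this immediate argument in full detail. The only difference is that you make explicit the verification that $h$ restricts to a homomorphism $I'_\nu \to I_\nu$ via the bijection on facts, which the paper leaves implicit.
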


The converse generally fails, but a sufficient condition is:

\begin{definition}
  An unfolding $I'$ of~$I$ \emph{respects} $q$ if, for any
  match $M \subseteq I$ of~$q$ on~$I$,
  letting $M'$ be its preimage in~$I'$, we have $M' \models q$.
\end{definition}

Intuitively, the unfolding does not ``break'' the matches of~$q$. This ensures
that the lineage is preserved exactly:

\begin{lemmarep}\label{lem:samelin}
  If $I'$ is an unfolding of~$I$ that respects~$q$,
  then $q$ has the same lineage on~$I$ and~$I'$.
\end{lemmarep}

\begin{proof}
  By Lemma~\ref{lem:onedir}, it suffices to show that for any match $M$ of $q$ in
  $I$, the preimage $M'$ of $M$ by the bijection on facts is also a match of
  $q$; but this is precisely what is guaranteed by the fact that $I$
  respects~$q$.
\end{proof}

\paragraph*{Inversion-free UCQs}

We use unfoldings to study Boolean constant-free \emph{inversion-free} UCQ
queries. We do not restate their formal definition here, and refer the reader to
Section~2 of~\cite{jha2013knowledge}.
The following is known:

\begin{theorem}[(Proposition~5 of~\cite{jha2013knowledge})]
  \label{thm:jhainversion}
  For any inversion-free UCQ~$q$, for any input
  instance $I$, the lineage of $q$ on $I$ 
  has an OBDD of constant width (i.e., the width only depends on~$q$).
\end{theorem}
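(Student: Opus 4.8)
The plan is to reduce to the non-Boolean case and handle disjuncts separately, then exploit the syntactic structure of inversion-freeness. First I would recall that, since OBDDs over disjoint variable sets can be combined by a product construction (whose width is the product of the widths), and since the lineage of a UCQ is the disjunction of the lineages of its CQ disjuncts over the same instance, it suffices to bound the OBDD width of each disjunct and then take a common variable order refining all of them; the disjunction of OBDDs over a fixed common order has width at most the product of the individual widths, which is still a constant depending only on $q$. So I would reduce to the case of a single inversion-free CQ $q$.

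For a single CQ $q$, the key is the hierarchy structure underlying inversion-freeness: inversion-free CQs admit a variable elimination order in which, at each step, the ``top'' variable appears in all remaining atoms, so that the lineage factors as an alternation of $\bigvee$ over the valuations of the current variable and $\bigwedge$ (or $\bigvee$ again) over the sub-lineages, following the classical safe-query decomposition of Dalvi and Suciu. I would make this precise by defining, for each atom of $q$, a linear order on the positions of that atom induced by the inversion-free structure; this induces a partial order on facts of $I$, and I would choose the OBDD variable order $\Pi$ to be any linearization of that partial order. The claim is then that, reading facts in this order, the set of ``relevant states'' the OBDD must track at any level is bounded: a state only needs to record, for each atom type of $q$ and each currently-active partial match prefix, a bounded amount of information about which atoms have already been satisfied, because the hierarchy guarantees that once we pass the ``frontier'' variable, earlier choices become irrelevant. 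Concretely I would argue by induction on the hierarchy: if $q = \exists x\, \bigwedge_i q_i(x, \ldots)$ with $x$ a root variable, then grouping the facts by the value assigned to $x$ and ordering group-by-group, the OBDD for $q$ is essentially an OR over groups of a product of OBDDs for the $q_i$, and the width stays bounded because between groups we only need to remember one bit (has some group already satisfied the query).

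The main obstacle I expect is making the ``bounded relevant state'' argument rigorous in the presence of shared variables across disjuncts and across atoms — i.e., correctly defining the frontier decomposition so that the variable order is genuinely consistent (a single total order $\Pi$ that simultaneously works for the recursive decomposition of every disjunct) and so that the inductive width bound does not blow up multiplicatively with the nesting depth in an uncontrolled way. Since the quoted statement is Proposition~5 of~\cite{jha2013knowledge}, I would ultimately just invoke that reference; but the self-contained route above — reduce to a single CQ, use the inversion-free hierarchy to pick a layered variable order, and bound the OBDD width level-by-level by a structural induction on the hierarchy — is the approach I would take if a proof were required here.
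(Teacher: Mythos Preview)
The paper does not prove this statement directly: it is cited verbatim as Proposition~5 of~\cite{jha2013knowledge}. However, the paper \emph{re-derives} it as a corollary of its own Theorem~\ref{thm:unfoldsafe}, and that derivation is genuinely different from your sketch. The paper's route is: apply the ranking transformation, then build an \emph{unfolding} $I'$ of the instance~$I$ (duplicating each element according to the prefix of the fact it sits in, using the attribute order $<_R$ supplied by the inversion-free expression) so that $I'$ has tree-depth at most $\arity{\sigma}$ and the lineage of~$q$ is preserved; since bounded tree-depth implies bounded pathwidth, Theorem~\ref{thm:pathwidth} then yields a constant-width OBDD. So the paper rewrites the \emph{instance} to make it treelike and then invokes its generic bounded-pathwidth machinery, whereas you work directly on the OBDD, choosing a layered variable order from the hierarchy and bounding the width by structural induction on the query.

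Your approach is essentially the original Jha--Suciu argument, and it is sound in outline; the paper even remarks that its unfolding construction is ``similar to the one used in the proof of Proposition~5 in~\cite{jha2013knowledge}''. What the paper's route buys is a uniform explanation (inversion-free queries only ``see'' a bounded-tree-depth structure), tying query-based tractability back to the instance-based framework of the paper. What your route buys is directness and a self-contained bound that does not go through tree automata. One point to tighten in your sketch: the reduction to a single CQ requires a \emph{single} global variable order that works simultaneously for all disjuncts, and the existence of such an order is exactly what inversion-freeness of the \emph{whole} UCQ (not merely of each disjunct) guarantees; you flag this as an obstacle, but it is in fact the crux, and you should state explicitly that the relation-wise attribute orders $<_R$ from the inversion-free expression give the common order you need.
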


When studying inversion-free UCQs, it is convenient to assume that the
\emph{ranking} transformation was applied to the query and instance~\cite{dalvi2010computing,dalvi2012dichotomy}. A UCQ is
\emph{ranked} if, defining a binary relation on its variables by setting $x <
y$ when $x$ occurs before $y$ in some atom, then $<$ has no cycle.
In particular, in a ranked query, no variable occurs twice in an atom. An
instance is \emph{ranked} if there is a total order $<$ on its domain such that for
any fact $R(\mathbf{a})$ and $1 \leq i < j \leq \arity{R}$, we have $a_i < a_j$.
In particular, no element occurs twice in a fact. Up to changing the signature,
we can always rewrite a UCQ $q$ to a ranked UCQ $q'$, and rewrite separately any
instance~$I$ to a ranked instance $I'$, so that the lineage of~$q$ on~$I$ is the
same as that of $q'$ on $I'$;
see~\cite{dalvi2010computing,dalvi2012dichotomy} for details.

We will thus assume that the ranking transformation has been applied to the query, and
to the instance. Note that this can be performed in linear time in
the instance, and does not change its treewidth, pathwidth, or tree-depth, as the Gaifman
graph is unchanged by this operation.

Once this ranking transformation has been performed, we can show the following:

\begin{theoremrep}\label{thm:unfoldsafe}
  For any ranked inversion-free UCQ~$q$, 
  for any ranked instance $I$, there is an unfolding $I'$ of~$I$ that respects $q$ and has
  tree-depth $\leq \arity{\sigma}$.
\end{theoremrep}

Hence, in particular, $q$ has the same lineage on~$I'$ and on~$I$,
as shown by Lemma~\ref{lem:samelin}.
As pathwidth is less than tree-depth~\cite{bodlaender1995approximating},
by Theorem~\ref{thm:pathwidth}, this implies
the result of Theorem~\ref{thm:jhainversion}, and (via
Proposition~\ref{prp:makecircuitpath}) generalizes it
slightly: it shows that 
the lineage
can even be represented by a bounded-pathwidth circuit.

\begin{proofsketch}
  We use an inversion-free expression~\cite{jha2013knowledge} for $q$ to define
  an order on relation attributes which is compatible across relations. We
  unfold each relation by distinguishing each element depending on the tuple of
  elements on the preceding positions; this is inspired by Proposition~5
  of~\cite{jha2013knowledge}. The result preserves the inversion-free expression
  and has a natural elimination tree defined by the prefix ordering on the
  distinguished elements.
\end{proofsketch}

Theorem~\ref{thm:unfoldsafe} thus suggests that the tractability of probability evaluation
for inversion-free UCQs can be understood in terms of bounded-tree-depth and bounded-pathwidth
tractability:
what inversion-free UCQs ``see'' in an instance is a bounded tree-depth
structure.

\begin{toappendix}
The roadmap of the proof is as follows.
We use an \emph{inversion-free expression}~\cite{jha2013knowledge}
for $q$ to define an order on relation attributes which is compatible across
relations. We then unfold each relation by distinguishing each element depending
on the tuple of elements on the preceding positions; this is inspired by
Proposition~5 of~\cite{jha2013knowledge}. The result preserves the
inversion-free expression and has a path decomposition that enumerates the facts
lexicographically.
To follow the roadmap, we first define inversion-free expressions as in~\cite{jha2013knowledge}:

\begin{definition}
  A \emph{hierarchical expression}~\cite{jha2013knowledge} is a logical sentence
  built out of atoms, conjunction, disjunction, and existential quantification, where
  each variable is a \emph{root variable}, i.e., occurs in all atoms in the
  scope of its existential quantifier.

  An \emph{inversion-free expression} is a hierarchical expression such that, for each
  relation symbol $R$, we can define a total order $<_R$ on its positions
  $\{R^1, \ldots, R^{\arity{R}}\}$, such that, in every $R$-atom $R(\mathbf{x})$,
    if $R^i <_R R^j$ then the quantifier $\exists x_j$ in the query is in the
    scope of the quantifier $\exists x_i$.
\end{definition}

By Proposition~2 of~\cite{jha2013knowledge}, a ranked UCQ is inversion-free iff
it can be written as an inversion-free expression, so it suffices to show
Theorem~\ref{thm:unfoldsafe} for inversion-free expressions.

\medskip

We first define our unfolding $I'$ of an input instance $I$. For each fact
$R(\mathbf{a})$ of~$I$, we create the fact $R(\mathbf{b})$ defined as follows.
Writing $R^{i_1} <_R \cdots <_R R^{i_n}$ the positions of~$R$ according to the
total order $<_R$, we define $b_{i_1}$ as the tuple $(a_{i_1})$, and define
$b_{i_j}$ as the tuple formed by concatenating $b_{i_{j-1}}$ and $(a_{i_j})$.
We call $f_R$ the operation thus defined, with $\mathbf{b} = f_R(\mathbf{a})$.
Clearly the operation $h$ mapping each tuple to its last element is a homomorphism
from~$I'$ to~$I$, and it is bijective on facts because it is the inverse of the
operation that we described. Hence, $I'$ is an unfolding of~$I$.
Note that this construction is similar to the one used in the proof of Proposition~5
in~\cite{jha2013knowledge}.

\medskip

We must show that $I'$ has bounded tree-depth. To do this, consider the
elimination forest $F$ defined on $\dom(I')$ by setting $\mathbf{b}$ to be the
parent of~$\mathbf{c}$ iff $\mathbf{b}$ is a longest strict prefix
of~$\mathbf{c}$. The forest $F$ has one root per singleton element
in~$\dom(I')$: these elements correspond the (possibly strict) subset
of~$\dom(I)$ of the elements occurring at the first position $R^{i_1}$ for the
order $<_R$ for some relation $R$. It is clear that $F$ is indeed an
elimination forest for the Gaifman graph of~$I'$, as by construction any fact
$R(\mathbf{b})$ of~$I'$ is such that, letting $n \defeq \arity{R}$ and $R^{i_n}$
be the last position of~$R$ in the order~$<_R$, the elements of~$\mathbf{b}$ are
exactly the non-empty prefixes of~$b_{i_n}$, so, for any pair $b_i$, $b_j$ of
elements of~$\mathbf{b}$,
one is a prefix of the other, so one is an ancestor of the other in~$F$. We
conclude by noticing that the elimination forest~$F$ has height 
$\arity{\sigma}$, so the tree-depth of~$I'$ is at most $\arity{\sigma}$.

\medskip

The only thing left to show is that $I'$ respects~$q$. For this, let us consider
the inversion-free expression $Q$ of~$q$. For any subexpression $\phi$ of
$Q$ with free variables~$\mathbf{x}$, let us define the \emph{ordered free variables}
of $\phi$, denoted $\ofv(\phi)$, as follows. If $\phi$ contains no atoms (i.e.,
it is the constant formula ``$\text{true}$'' or ``$\text{false}$''), then
$\mathbf{x}$ is empty and so is $\ofv(\phi)$. Otherwise, as $Q$ is
inversion-free, it is in particular hierarchical, so all free variables of
$\phi$ must occur in all atoms of $\phi$: this is by definition, for any free
variable $x_i$ of~$\phi$,
of the subexpression of~$Q$ that includes $\phi$ whose outermost operator is
$\exists x_i$. Hence, consider any atom $A = R(\mathbf{x})$, and, remembering
that no variable occurs twice in~$A$ (as $Q$ is ranked), define
$\ofv(\phi)$ as the total order on $\mathbf{x}$ given by $x_i < x_j$ iff $R^i
<_R R^j$.

It is clear that $\ofv(\phi)$ is well-defined, i.e., that does not depend on our choice of
atom in $\phi$: this is because $Q$ is an inversion-free expression, so the
order of variables in atoms must reflect the order in which the variables are
quantified.

We now show the claim that $I'$ respects $Q$:

\begin{lemma}
  If $I$ has a match $M$ of $Q$, then, defining $M'$ by mapping each fact
  $R(\mathbf{a})$ of $M$ to the fact $R(f_R(\mathbf{a}))$ of~$I'$, $M'$ is a
  match of~$Q$ in~$I'$.
\end{lemma}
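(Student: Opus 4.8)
The plan is to exhibit directly a homomorphism $\mu'$ from $Q$ into $I'$ whose image is exactly $M'$; since $M' \subseteq I'$ holds by construction of the unfolding, this immediately shows that $M'$ is a match of $Q$ on $I'$. First I would fix the homomorphism $\mu$ witnessing the match $M$, together with the set $S$ of atoms of $Q$ it selects (choosing the same disjunct as $\mu$ at each disjunction), so that $M = \{\, R(\mu(\mathbf{x})) : R(\mathbf{x}) \in S \,\}$. The homomorphism $\mu'$ I want to build should satisfy $\mu'(\mathbf{x}) = f_R(\mu(\mathbf{x}))$ for every atom $R(\mathbf{x}) \in S$: indeed, $R(\mu(\mathbf{x})) \in M \subseteq I$ forces $R(f_R(\mu(\mathbf{x}))) \in I'$ by the definition of $I'$, so this single property already makes $\mu'$ a homomorphism into $I'$ whose image is $M'$. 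The definition of $\mu'(x)$ for a variable $x$ occurring in some atom of $S$ mirrors $f_R$ exactly: if $x$ sits at position $R^i$ of an atom $R(\mathbf{x}) \in S$, set $\mu'(x) \defeq (f_R(\mu(\mathbf{x})))_i$, that is, the tuple of $\mu$-images of the variables occurring at positions up to and including $R^i$ in that atom, listed in $<_R$-order.

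The only real difficulty — and the step I expect to be the main obstacle — is well-definedness of $\mu'$: the tuple above must not depend on which atom of $S$ we read $x$ off from. This is precisely where the inversion-free structure is needed, through the ordered-free-variables machinery already set up above. Concretely, I would prove that for a variable $x$, letting $\phi_x$ denote the subexpression lying immediately inside the quantifier $\exists x$, in \emph{every} atom $A = R(\mathbf{x})$ containing $x$ the variables occurring strictly before $x$ with respect to $<_R$ are exactly the free variables of $\phi_x$, and the order they inherit from $<_R$ in $A$ is precisely $\ofv(\phi_x)$. The inclusion ``$\subseteq$'' uses the inversion-free property: a variable $y \neq x$ at a position $<_R$ that of $x$ has $\exists x$ in the scope of $\exists y$, and since $A$ is an atom of $\phi_x$ this makes $y$ free in $\phi_x$ (in particular variables quantified strictly inside $\phi_x$ cannot precede $x$). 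The inclusion ``$\supseteq$'' uses hierarchicity — every free variable of $\phi_x$ occurs in every atom of $\phi_x$, in particular in $A$ — together again with the inversion-free property, which forbids such a variable from sitting after $x$ since its quantifier is an ancestor of $\exists x$. As $\ofv(\phi_x)$ is already known to be well-defined, it follows that $\mu'(x) = (\mu(z_1), \ldots, \mu(z_p), \mu(x))$, where $z_1, \ldots, z_p$ are the free variables of $\phi_x$ in $\ofv(\phi_x)$-order, a value that does not depend on the chosen atom.

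With well-definedness in hand, the remaining verification is routine. For each atom $R(\mathbf{x}) \in S$ and each of its positions $i$, the construction gives $\mu'(x_i) = (f_R(\mu(\mathbf{x})))_i$, hence $R(\mu'(\mathbf{x})) = R(f_R(\mu(\mathbf{x}))) \in I'$; so $\mu'$ is a homomorphism from $Q$ into $I'$ making the same disjunct choices as $\mu$, its image is $\{\, R(f_R(\mu(\mathbf{x}))) : R(\mathbf{x}) \in S \,\} = M'$, and the values of $\mu'$ on variables not appearing in any atom of $S$ play no role. This shows $M' \models Q$, which is the claim, and hence (via the surrounding argument) that $I'$ respects $q$.
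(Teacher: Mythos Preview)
Your argument is correct and reaches the same conclusion as the paper, but the packaging is different. The paper proceeds by structural induction on subformulae of~$Q$: for each subformula~$\phi$ it shows that $M \models \phi[\ofv(\phi) {\defeq} \mathbf{a}]$ implies $M' \models \phi[\ofv(\phi) {\defeq} f(\mathbf{a})]$, where $f(\mathbf{a}) = (a_1, (a_1,a_2), \ldots, \mathbf{a})$; the existential step appends one more component to the tuple. You instead define the witnessing assignment~$\mu'$ globally and isolate the nontrivial content into a single well-definedness claim. These are two views of the same mechanism: the paper's $f(\mathbf{a})$ at a subformula~$\phi$ is exactly your~$\mu'$ evaluated on the variables of~$\ofv(\phi)$, and your well-definedness argument is what the paper's $\exists$-step accomplishes implicitly. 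Your presentation makes the final witness explicit; the paper's induction avoids having to argue well-definedness separately.

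Two small points to tighten. First, when you write that the variables strictly before~$x$ in an atom are ``exactly the free variables of~$\phi_x$'', note that $x$ itself is free in~$\phi_x$; you mean the free variables of~$\exists x\,\phi_x$, i.e., those of~$\phi_x$ other than~$x$. Second, you start from ``the homomorphism~$\mu$ witnessing the match~$M$'' (so~$\mu$ is from a disjunct of the UCQ~$q$) but then speak of the atoms of~$Q$ it selects; these are different syntactic objects. The clean fix is to take $(\mu,S)$ directly as a witness that $M \models Q$. Then the image of~$\mu$ on~$S$ is only guaranteed to be contained in~$M$, not equal to it, so the image of~$\mu'$ is contained in~$M'$; but this is all you need, since $\mu'$ then witnesses that this subset of~$M'$ satisfies~$Q$, hence $M' \models Q$ by monotonicity. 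The equality claim you make is not needed for the lemma (and the paper does not claim it either, concluding only $M' \models Q$).
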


\begin{proof}
  Let $f$ be defined on tuples of $\dom(I)$ by $f(\mathbf{a}) \defeq (a_1, (a_1,
  a_2), \ldots, \mathbf{a})$.
  For any subformula $\phi$ with $n$ free variables and any $n$-tuple $\mathbf{a}$
  of $\dom(I)$, we write $I \models \phi[\ofv(\phi) {\defeq} \mathbf{a}]$ to mean
  the Boolean formula with constants obtained by substituting each variable in
  $\ofv(\phi)$ by the corresponding element in $\mathbf{a}$ following the order
  of $\mathbf{a}$ and $\ofv(\phi)$.

  We proceed by induction on the subformulae of~$Q$, showing that if a
  subformula~$\phi$ and tuple $\mathbf{a} \in \dom(M)$ is such that $M \models
  \phi[\ofv(\phi) {\defeq} \mathbf{a}]$, then $M' \models
  \phi[\ofv(\phi){\defeq}
  f(\mathbf{a})]$.
  \begin{itemize}[leftmargin=9pt]
    \item For atoms, this is by definition of $\ofv$ and of~$M'$.
    \item For $\phi \wedge \psi$, we observe that we have $\ofv(\phi) =
      \ofv(\phi \wedge \psi) = \ofv(\psi)$: write $\mathbf{x}$ to refer to these
      ordered free variables.
      If $M \models (\phi \wedge \psi)[\mathbf{x} {\defeq} \mathbf{a}]$,
      then $M \models \phi[\mathbf{x} {\defeq} \mathbf{a}]$ and $M \models
      \psi[\mathbf{x} {\defeq} \mathbf{a}]$, as by induction $M' \models
      \phi[\mathbf{x} {\defeq} f(\mathbf{a})]$ and $M' \models \psi[\mathbf{x}
      {\defeq} f(\mathbf{a})]$, we deduce $M' \models (\phi \wedge \psi)[\mathbf{x}
      {\defeq} f(\mathbf{a})]$.
    For $\phi \vee \psi$, the reasoning is the same.
    \item For $\phi: \exists y ~ \psi$, writing $\mathbf{x} {\defeq} \ofv(\phi)$,
      by definition of~$\ofv$, $y$ is the last variable
      of $\mathbf{x}' {\defeq} \ofv(\psi)$. As $M \models \phi[\mathbf{x}
      {\defeq} \mathbf{a}]$, by
      definition there is $c \in \dom(M)$ such that, letting $\mathbf{a}'$ be
      the concatenation of $\mathbf{a}$ and $c$,
      $M \models \psi[\mathbf{x}' {\defeq} \mathbf{a}']$. By induction
      hypothesis we have $M' \models \psi[\mathbf{x}' {\defeq} f(\mathbf{a}')]$, and as removing the last element of $f(\mathbf{a}')$ yields
      $f(\mathbf{a})$, we deduce that $M' \models (\exists y ~
      \psi)[\mathbf{x} {\defeq}
    f(\mathbf{a})]$.
  \end{itemize}
  The outcome of this induction is that $M \models Q$ implies $M' \models
  Q$, the desired claim.
\end{proof}
\end{toappendix}

\section{Conclusion}
\label{sec:conclusion}
The main result of this work justifies that bounded treewidth
is the right condition on instances to make
probability evaluation tractable, with a dichotomy between \emph{ra-linear} evaluation for
\emph{MSO}
queries assuming bounded treewidth, and \emph{$\fpsp$-hardness} under RP
reductions for \emph{FO queries}
otherwise. Our second main result extends this to \ucqneq for tractable OBDD
representations, specifically, to the \emph{intricate} queries that we
characterize: they have no polynomial OBDDs on \emph{any} sufficiently dense unbounded-treewidth
instance class.

We do not know whether our results extend beyond
arity-2, e.g., using techniques from the CSP
context~\cite{marx2013tractable}. Another question is whether the first
dichotomy result extends to
more restricted query classes, or even to the \ucqneq $q_\p$ of Theorem~\ref{thm:dichoobdd}:
indeed, probability evaluation of~$q_\p$ is \#P-hard
but we were unable to show this \emph{under arbitrary
subdivisions}.
Another extension would be to use PTIME rather than RP reductions,
which would follow from a derandomization
of~\cite{chekuri2014polynomial}.
In terms of lineage, we do not know either whether our OBDD results extend
to other lineage classes, e.g., FBDDs or d-DNNFs.

Our main hope, though, concerns the
\emph{unfolding} technique of Section~\ref{sec:safe}. We showed that, for
inversion-free UCQs, unfolding 
\emph{always} reduces an instance to a bounded-treewidth one while preserving 
lineage.
Could there be a \emph{query-dependent}, lineage-preserving unfolding of
instances, lowering the treewidth
by undoing joins that the query does not
``see''? Such a technique could yield a tractability
condition on the instance \emph{and} query, covering and extending \emph{both}
bounded-treewidth and safe queries.
It could also be practically useful to approximate query
probabilities, maybe in conjunction with the query-based 
\emph{dissociation} technique~\cite{gatterbauer2015approximate}.

\paragraph*{Acknowledgements}
We thank Mikaël Monet for careful proofreading, and
Chandra Chekuri and Dan Suciu for technical clarifications
on~\cite{chekuri2014polynomial} and~\cite{jha2013knowledge}, respectively.
We also thank the anonymous reviewers of PODS~2016 for their valuable comments,
in particular for strengthening Theorem~\ref{thm:unfoldsafe} to tree-depth.
This work was partly funded by the Télécom ParisTech Research Chair on
Big Data and Market Insights.

\bibliographystyle{abbrv}
\bibliography{references}

\end{document}